\title{Joint Coverage Regions:\\ Simultaneous Confidence and Prediction Sets}
\author[1]{Edgar Dobriban}
\author[1]{Zhanran Lin}
\date{}
\affil[1]{Department of Statistics and Data Science, Wharton School,
University of Pennsylvania}
\begin{document}
\maketitle

\begin{abstract}
\sloppy
  We introduce Joint Coverage Regions, which unify confidence intervals and prediction regions within frequentist statistics.
Specifically, joint coverage regions aim to cover a pair formed by an unknown fixed parameter (such as the mean of a distribution), and an unobserved random datapoint (such as the outcomes associated to a new test datapoint). 
The first corresponds to the confidence component, while the second corresponds to the prediction component.
In particular, our notion unifies classical statistical methods such as the Wald confidence interval with distribution-free prediction methods such as conformal prediction \citep{angelopoulos2023conformal}.
We show how to construct finite-sample valid joint coverage regions when a \emph{conditional pivot} is available; 
under the same conditions 
where exact finite-sample confidence and prediction sets have been previously developed.
We further develop efficient joint coverage region algorithms, including split-data versions to reduce the cost of repeated computation.
We illustrate the use of joint coverage regions in statistical problems such as constructing efficient prediction sets when the parameter space is structured.
\end{abstract}
\fussy

\setcounter{tocdepth}{2}
\tableofcontents

\section{Introduction}\label{sec:introduction}

Confidence intervals and prediction sets are two fundamental methods in frequentist statistics, 
covering fixed parameters and random future observables, respectively, with a given probability. 
Finite-sample valid confidence intervals are often  constructed via inverting pivotal quantities \citep[e.g.,][etc]{cox1979theoretical, lehmann1998theory}, functions of parameters and observables whose distribution is known.
Finite-sample valid prediction sets (also known as tolerance regions) have also been widely studied \citep[e.g.,][etc]{wilks1941determination,Wald1943, guttman1970statistical}, 
with renewed recent interest due to their applicability to modern machine learning via conformal prediction \citep{vovk2022algorithmic,lei2013distribution}.
Such prediction sets often rely on conditional pivots; for instance, for exchangeable scalar datapoints whose distribution is unchanged under all permutations, 
any ordering is equally likely given the set of their values.

While it has been noted that confidence intervals and prediction sets are of a similar nature \citep[e.g.,][p. 482]{guille2024conformal,shao2003mathematical}, 
they are nonetheless currently treated as two distinct concepts, both in statistical research and in education.
However, due to the similarities in their definitions and the assumptions---existence of conditional pivots---under which they exist, it is natural to ask if one can unify these notions. 
Our work aims to achieve this unification, 
by developing the new notion of \emph{Joint Coverage Regions} (JCRs).

\begin{figure}
\begin{minipage}[ht]{0.5\linewidth}
\centering
    \includegraphics[height = 6cm, width =6.3cm]{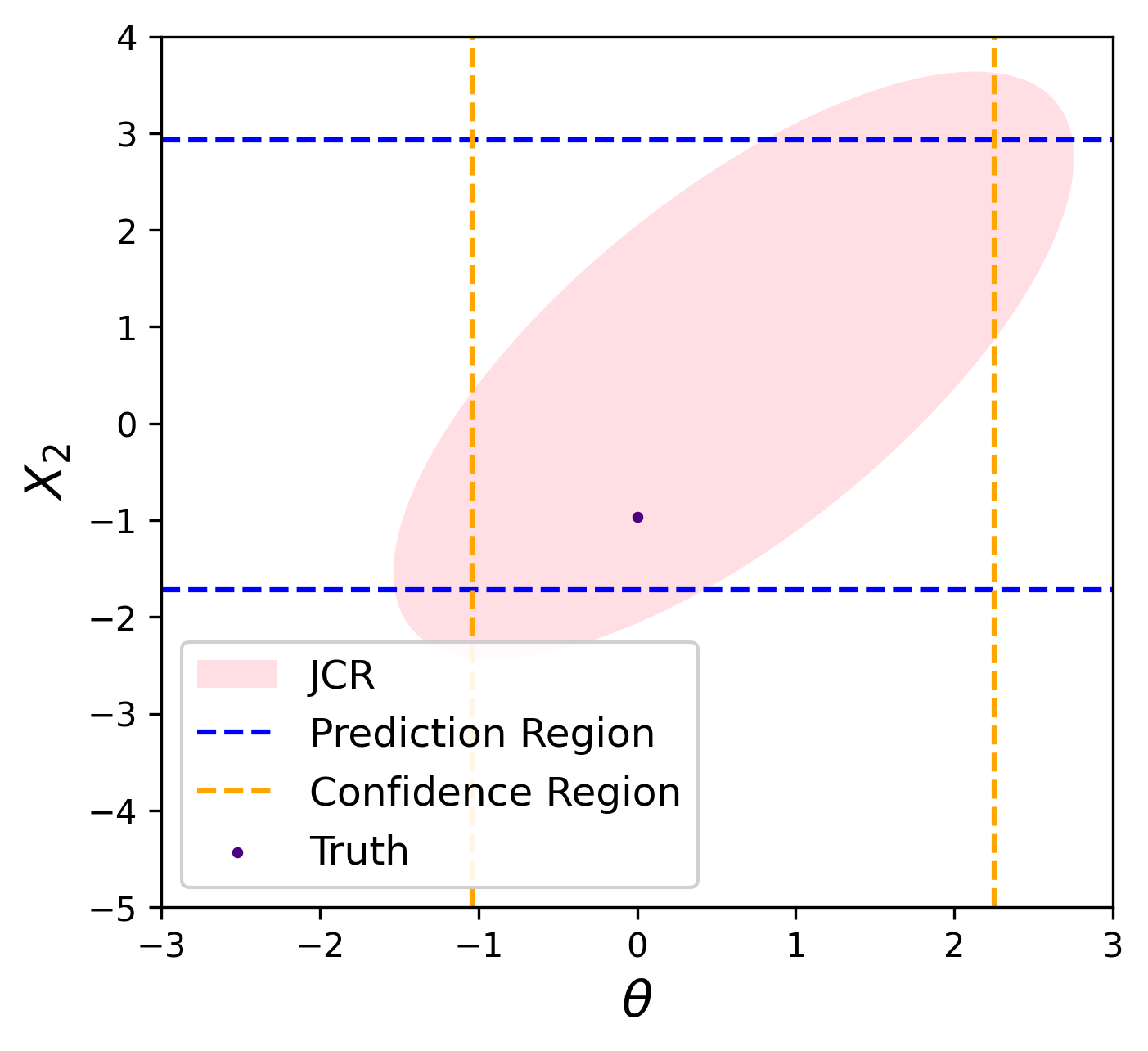}
    \end{minipage}
\begin{minipage}[ht]{0.5\linewidth}
   \centering
    \includegraphics[height = 5.5cm, width =6.8cm]{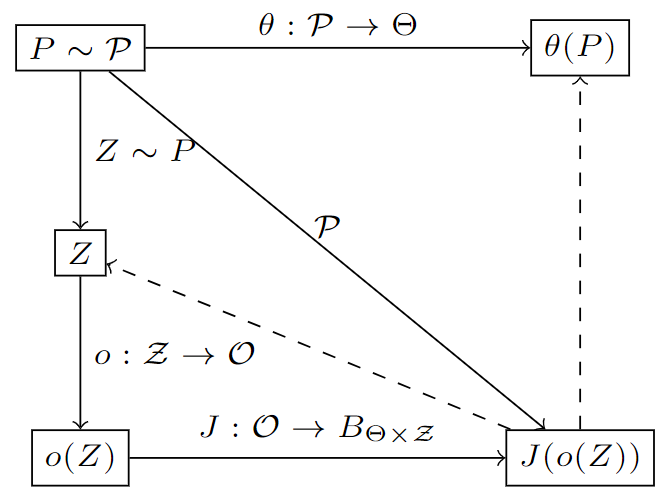}
    \end{minipage}
\caption{Left:  A visualization of the JCR $\{(\theta,X_2):(X_1-\theta)^2+(X_2-\theta)^2 \leq \chi_{1-\alpha}^2(2)\}$ under the model $X_1,X_2 \sim \N(\theta,1)$ with observation $o(X_1,X_2)=x_1$. We show a single trial with $\theta = 0$, $\alpha = 0.1$ and $x_1=0.606$. For contrast, we also plot a confidence interval $x_1 \pm q_{1-\alpha/2}$ for $\theta$ and a prediction region $x_1 \pm \sqrt{2}q_{1-\alpha/2}$ for $X_2$. The purple point labeled ``Truth" shows the true realization $\theta=0$ and $x_2 = -0.962$ in this trial.
Right: A visual representation of our observation model.}
\label{fig:coordinate}
\end{figure}

Joint coverage regions aim to simultaneously cover a pair consisting of an unknown fixed parameter and an unobserved random datapoint.
Formally, consider a class of distribution $\mP$ with a parameter $\theta: \mP \rightarrow \Theta$.
Suppose that the full data $Z\sim P$ is sampled from $P$, but we only observe part of the data, given by $o(Z)$. 
For instance, this can mean that we observe the first $n$ out of $n+1$ datapoints.
We aim to construct a JCR $J$ such that for any distribution $P \in \mP$,
given the observations $o(Z)$ it returns a region covering the pair $\left(\theta(P), Z\right)$ with probability at least  $1-\alpha$:
$$\bP_{Z \sim P}\biggl(\left(\theta(P), Z\right) \in J\left( o(Z)\right)\biggr) \ge 1 - \alpha.$$

Figure \ref{fig:coordinate} (left) shows a JCR
for the model where $X_1,X_2 \sim \N(\theta,1)$ independently, but we only observe $X_1$, and want to cover $(\theta,X_2)$.
The JCR corresponds to any region in $(\theta,x_2)$-space; while a confidence interval for $\theta$ can be viewed as a horizontal strip (and similarly, a prediction interval for $X_2$ can be viewed as a vertical strip).

Generally, when we observe the full data so that $o(Z)=Z$, the first component of the JCR becomes a classical confidence region. 
On the other hand, when are not interested in a parameter (for instance setting $\theta(P)=0$), then the second component of the JCR becomes a classical prediction region for the unobserved full data $Z$ based on the observed data $o(Z)$. 
This can be further simplified in examples, for instance for predicting outcomes $Y_{n+1}$ having observed feature-outcome pairs $(X_i,Y_i)$, $i=1,\ldots, n$ and new features $X_{n+1}$.
In this sense, JCRs unify classical confidence and prediction regions.



In this work, we establish the foundations of JCRs in frequentist settings (Section \ref{JCR-framework}), including their connections to traditional confidence and prediction regions.
We construct JCRs when there is a conditional pivot (Section \ref{sec:conditional-pivot}), i.e., a quantity whose conditional distribution---given some function of the data---is known. 
This is the same condition under which confidence and prediction sets with exact validity have been separately constructed. 
In particular, this holds when there is a function that is invariant in distribution under the action of a group (Section \ref{sec:conditional-invariance}), including permutation-based invariance as for exchangeable data.
As a specific case, we also consider unconditional pivots.

We also introduce efficient algorithms to construct JCRs when there is a separate calibration dataset, and we wish to construct JCRs for several test datapoints (Section \ref{sec:split-JCR}, \ref{sec:split-JCR-invariance}); inspired by split or inductive conformal prediction \citep{papadopoulos2002inductive}.
We introduce the notion of \emph{adequate sets} (Section \ref{sec:adequate-set}, \ref{sec:adequate-set-invariance}), which can significantly improve computational efficiency.

We conduct simulations and empirical studies to illustrate JCRs (Sections \ref{sec:simulation} and \ref{emp}).
We further illustrate how JCRs can be used in two statistical problems
(Section \ref{sec:application}). 
We show how to use JCRs to construct prediction regions when the parameter space is bounded,
by projecting JCRs into their prediction component, which can sometimes be a shorter interval than existing approaches (Section \ref{sec:toy_jcr_pred}).
We also show how JCRs can be used to control the miscoverage when drawing inferences on multiple parameters and future observables  (Section \ref{sec:multiple-task}),
while being more accurate than a more straightforward approach of taking intersections of classical confidence and prediction regions.
Code to reproduce our experiments is available at \url{https://github.com/chris-zhanran-lin/JCR}.

We next outline some notations and conventions which will be used throughout the paper.

{\bf Notations and conventions.}
For a positive integer $m$, we write $[m]:=\{1,2,\ldots,m\}$.
Given numbers $v_1,\ldots, v_n \in \R$ and $\alpha\in[0,1]$,
let $v_{(1)} \leq \ldots \leq v_{(n)}$ 
denote their order statistics.
Let $q_{\alpha}(v_1,\ldots, v_n) = v_{(\lfloor n\alpha \rfloor)}$ 
denote 
the $\alpha$-th quantile of their empirical distribution. For a probability distribution $\mathcal{F}$ on the real line, $q_{\alpha}(\mathcal{F})$ denotes its $\alpha$-th quantile.
For $c\in (0,1)$, 
$q_{c} \in \R$ is the $c$-quantile of the standard normal distribution.
For a probability space $X$ and $a\in X$, let $\delta_a$ denote the point mass at $a$; in other words, the distribution that places all mass at the value $a$.
For two random variables $X,Y$, $X=_dY$ denotes that they have the same distribution.
For two sets $A,B$, a function $f:A\to B$, and a set $S \subset B$, we denote by $f^{-1}(S) = \{a\in A: f(a)\in S\}$ the preimage of $S$ under $f$. 
When $S = \{s\}$ is a singleton, we abbreviate $f^{-1}(\{s\}):=f^{-1}(s)$.
Similarly, for a set $S \subset A$, we denote by $f(S) = \{b\in S: \exists\, a\in A: b = f(a)\}$ the image of $S$ under $f$.
For a finite set $S$, we denote by $|S|$ its cardinality.
For a probability measure $P$ on a measure space $(A,\mathcal{A})$, and a map $f: A\to B$ to another measure space $(B,\mathcal{B})$, we denote by $f(P)$ the probability measure of the random variable $f(Z)$, where $Z\sim P$.
For a positive integer $m$, we denote by $1_m = (1,1,\ldots,1)^\top \in \R^m$ the $m$-dimensional all-ones vector.
For a set $A$, we denote by $I_A$ the identity operator on $A$, defined by $I_A(a)=a$ for all $a\in A$.
For two vectors $a=(a_1,a_2,\ldots,a_n), b = (b_1,b_2,\ldots,b_n)$, denote $a \odot b = (a_1b_1,a_2b_2,\ldots,a_nb_n)$. 
We may abbreviate a sequence as $a_{1:n}=(a_1,a_2,\ldots,a_n)$.
Denote by $I(A)$ the indicator function taking $I(A)=1$ when event $A$ happens and $I(A)=0$ otherwise. 
Denote by $\textnormal{sgn}(x)$ the sign function, where $\textnormal{sgn}(x) = 1$ for $x>0$, $\textnormal{sgn}(x) = -1 $ for $x<0$ and $\textnormal{sgn}(x) =0 $ for $x = 0$.
All functions considered in this paper will be measurable with respect to appropriate sigma-algebras, which will sometimes be implicit from the context.
All sigma-algebras will be assumed to include the singletons over the sets where they are defined.

\subsection{Related Works}\label{sec:related-work}

Confidence intervals, introduced by  \cite{neyman1937outline}, are a core concept in statistics.
There has been an abundance of research focused on constructing them in a variety of settings
\citep[e.g.,][etc]{vsidak1967rectangular,efron1986bootstrap,diciccio1996bootstrap,boldin1997sign,csaji2012non,wasserman2020universal}.
In particular, 
finite-sample confidence intervals are usually constructed via pivotal quantities \citep[e.g.,][etc]{lehmann1998theory,cox1979theoretical}, while functions of data and the parameter whose mean have a known bound can also be used \citep[e.g.,][etc]{wasserman2020universal,xu2022post}.

Prediction sets have a rich statistical history dating back to \protect\cite{wilks1941determination}, \protect\cite{Wald1943}, \protect\cite{scheffe1945non}, and \protect\cite{tukey1947non,tukey1948nonparametric}.
There is an large body of work on constructing prediction sets with coverage guarantees under various assumptions
\protect\citep[see, e.g.,][]{bates2021distribution,Chernozhukov2018,dunn2018distribution,Lei2014,lei2013distribution,lei2015conformal,lei2018distribution,Park2020,park2021pac,Sadinle2019,kaur2022idecode,qiu2022distribution,li2022pac,sesia2022conformal}. 
Among these, one of the best-known methods is conformal prediction (CP) \protect\citep[see, e.g.,][]{saunders1999transduction,vovk1999machine,
papadopoulos2002inductive,vovk2022algorithmic, Chernozhukov2018,dunn2018distribution,Lei2014,lei2013distribution,lei2018distribution}.

Beyond basic confidence intervals and prediction sets,
constructing regions that jointly cover multiple parameters---or alternatively, multiple future variables---has been well studied.
Simultaneous confidence regions, which jointly cover several functions $f_i(\theta)$, $i \in [m]$ of a parameter $\theta$, have been developed using pivots in e.g., \cite{scheffe1953method,scheffe1999analysis}.
On the other hand,
\cite{wolf2015bootstrap} 
construct joint prediction regions for multiple future observables using the bootstrap. 
However, to our knowledge, previous works do not \emph{jointly} consider the confidence and prediction components.
A notable exception is in Bayesian statistics, where parameters and observations are both random variables; and hence both are covered via prediction regions.
Nonetheless, in frequentist statistics there is a fundamental difference between fixed parameters and random observables. 

Pivotal quantities---or, pivots---are  functions of the data and the parameter whose distribution is known; 
this was given a central role in important but mostly unpublished work by G. A. Barnard \citep[][p. 29]{cox2006principles}. 
Finite sample coverage usually relies on the existence of pivots 
\cite[e.g.,][etc]{fraser1966structural,fraser1968structure,fraser1971events,cox1979theoretical,brenner1983models,barnard1995pivotal,fraser1996some}
or "sub-pivots" with bounded moments \citep{wasserman2020universal}.
Conditional pivots have been used, at least implicitly, in areas such as conformal prediction \protect\citep[e.g.,][]{vovk1999machine,vovk2022algorithmic,Lei2014,lei2013distribution,lei2018distribution,romano2019malice,romano2019conformalized,xu2021conformal}.

Our work on group invariance is related to a large literature on using such properties for statistical inference, both for testing and confidence regions \citep[e.g.,][etc]{eden1933validity,fisher1935design,
lehmann1949theory,hoeffding1952large,dwass1957modified,hemerik2018exact,freedman1983nonstochastic,
david2008beginnings,berry2014chronicle,
hemerik2020permutation,dobriban2022consistency}
 For more general discussions of invariance in statistics see \cite{eaton1989group,wijsman1990invariant,giri1996group}. 

Conditional invariance can be useful in
a variety of methodologies for conditional independence testing under the model-X assumption, such as knockoff approaches
\protect\citep[e.g.,][]{candes2018panning,huang2020relaxing},
conditional randomization testing (CRT) \protect\citep[e.g.,][]{candes2018panning,katsevich2020power,liu2022fast},
and conditional permutation tests 
\citep{berrett2020conditional}. 
Going beyond using conditional pivots,  \cite{huang2020relaxing} consider conditional knockoffs,
which require knowing the parametric distribution only up to a parametric model. 

There are also various 
works focusing on improving computational efficiency, such as  
split---or inductive---conformal prediction \citep{papadopoulos2002inductive}; 
and other approaches 
\citep{vovk2022algorithmic,lei2019fast,cherubin2021exact}.
\cite{liu2022fast} develop distilled conditional randomization testing (d-CRT), which computes the main part of the test statistic only once, 
while the remaining part only requires negligible computation. 
Relatedly, we propose adequate sets, which contain information that can be re-used for multiple test datapoints.

\section{Joint Coverage Regions}\label{JCR-framework}

We now introduce our setting.
For some measurable space $\mZ$,
let $Z\in \mZ$ denote data generated from a distribution $P$, where $P$ belongs to a class $\mP$ of probability distributions over $\mZ$. 
Let the observed part of $z$ be $o(z)$, 
taking values in a measurable space $\mathcal{O}$. 
We refer to $o:\mZ\to\mathcal{O}$ as the \emph{observation function}.
We consider the functional $\theta: \mP \rightarrow \Theta$, for some parameter space $\Theta$, determining a parameter $\theta(P)=\theta_P$ of the distribution $P \in \mP$ that we are interested in. 
Without loss of generality, we can assume that the image $\theta(\mP)$ of $\mP$ under $\theta$ is $\Theta$.

Now we discuss some technical conditions and definitions.
We assume that there is a sigma-algebra $B_\mZ$ over $\mZ$, and all $P\in\mP$ are probability distributions defined over $B_\mZ$.
Further, there is a sigma-algebra $B_\mathcal{O}$ over $\mathcal{O}$, and $o$ is measurable with respect to ($B_\mZ,B_\mathcal{O}$).
We also assume that there are sigma-algebras $B_\mP$, $B_\Theta$ over $\mP$, $\Theta$, and $\theta$ is measurable with respect to them.
Further, we consider the product sigma-algebra $B_{\Theta \times \mZ}$ over $\Theta \times \mZ$.
We define the projection operators $\Pi_{ \Theta}: \Theta \times \mZ\to  \Theta$, $\Pi_{ \Theta}(\theta,z)=\theta$, and
$\Pi_{\mZ}: \Theta \times \mZ\to \mZ$, $\Pi_{\mZ}(\theta,z)=z$.
We define their extensions to $B_{\Theta \times \mZ}$ in the obvious way.
For notational convenience, we define the \emph{section operator} $\Phi_{\Theta}: B_{\Theta \times \mZ} \times \mZ \to \Theta$, where $\Phi_{\Theta}(J,z) = \cup_{\theta \in \Theta} \{\theta:\, (\theta,z) \in J \}$ for all $z \in \mZ$ and $J \in B_{\Theta \times \mZ}$. 
This takes the $\Theta$-slice of the set $J\subset \Theta \times \mZ$ given $z\in \mZ$.
We can write $\Phi_{\Theta}(J,z) = \Pi_\Theta(J\cap (\Theta\times\{z\}) )$.
Similarly, define the section operator $\Phi_{\mZ}: B_{\Theta \times \mZ} \times \Theta \to \mZ$ as $\Phi_{\Theta}(J,\theta) = \cup_{z \in \mZ} \{z:\, (\theta,z) \in J \}$ for all $\theta \in \Theta$ and $J \in B_{\Theta \times \mZ}$.

\subsection{Basic Definitions}

Given a desired coverage rate $1 - \alpha \in (0,1)$, 
and having observed $o(z)$,
we aim to construct a \emph{joint coverage region} $J:\mathcal{O} \rightarrow  B_{\Theta \times \mZ}$ for the parameter $\theta_P$ and unobserved data $Z$ that has the following property:

\begin{definition}[Joint Coverage Region]
     We say that $J:\mathcal{O} \rightarrow  B_{\Theta \times \mZ}$ is a $1-\alpha$-joint coverage region (JCR) for $(\theta,Z)$ based on $o(Z)$ if for all $P \in \mP$ we have
\begin{equation}\label{eq:J}
    \bP_{Z \sim P}\biggl(\left(\theta_P, Z\right) \in J\left( o(Z)\right)\biggr) \ge 1 - \alpha.
\end{equation}
\end{definition}


A visualization of our observation model is in Figure \ref{fig:coordinate} (right). 
Thus, 
given observed data $o(z)$, 
a JCR outputs a subset of the space $\Theta \times \mZ$.
This subset is required to cover the parameter $\theta_P$ of interest \emph{and} the unobserved data $Z$ \emph{simultaneously}.
In this sense, a JCR acts \emph{both} as a confidence region, covering the fixed parameter $\theta_P$ with its \emph{confidence component} $\Phi_{\Theta}(J,z)$ for $z \in \mZ$ (where $\Phi_{\Theta}$ is the section operator defined above); 
\emph{and} as a prediction region, covering the random data $Z$ with its \emph{prediction component} $\Phi_{\mZ}(J,\theta)$ for $\theta \in \Theta$.

Of course, having observed $o(z)$, it is only of interest to predict the unobserved part of the data. 
This can be seamlessly included in the above definition.
Given a map $u:\mZ\to\mU$ representing a component of the data that we wish to predict, 
we can transform $(\theta_P,z)\to (\theta_P,u(z))$
and construct a prediction region for $(\theta_P,u(Z))$.
This can be given, for any $o\in \mO$, 
by 
the image 
$\tilde J(o) = (I_\Theta, u) (J(o))$
of $J(o)$ under $(I_\Theta, u)$, 
where $I$ denotes the identity map. 
If $z$ can be decomposed into observed and unobserved parts as $z = (o(z), u(z))$,
then this
reduce the prediction region into one for the unobserved part of $z$. 
Later in Section \ref{sec:c}, we will often say that such JCRs are in a \emph{reduced form}.

To aid our understanding of joint coverage regions, in Section \ref{conn} we will study their connections to classical confidence and prediction regions. 

\section{Constructing JCRs}\label{sec:JCR-construction}
\label{sec:c}

\subsection{Using Pivots}
\label{sec:pivot}
In this section, we outline an approach to construct JCRs based on pivots and conditional pivots. 
For simplicity, we start with pivots, and turn to conditional pivots in Section \ref{sec:conditional-pivot}.
Thus, consider some measurable space $(\mL,B_{\mL})$, and 
let $L:\Theta\times\mZ \to \mL$, be a pivot, in the sense that when $Z\sim P$ for $P\in \mP$,
the distribution $Q$ of $L(\theta(P),Z)$ is known and does not depend on $P$. 
Let $S\subset \mL$ be a measurable set such that $Q(S)\ge 1-\alpha$. 
Then, we can 
 construct a  $1-\alpha$-JCR for $(\theta,Z)$ via
\beq\label{pivjcr}
J(o^*)= \left\{(\theta,z) 
\in\Theta\times\mZ: 
o(z) = o^*,\,
L(\theta,z) \in S \right\}.
\eeq
The validity of this construction is stated below and 
is a direct consequence of Theorem \ref{thm:conditional-JCR-piv} for conditional pivots.
\begin{proposition}\label{prop:pivot}
Suppose the pivot $L$ has distribution $Q$, and $S$ is a measurable set such that $Q(S)\ge 1-\alpha$. 
Then
equation \eqref{pivjcr}
returns a $1-\alpha$-joint coverage region.
\end{proposition}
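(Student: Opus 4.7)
The plan is to reduce the coverage event to the pivotal event and then invoke the distributional assumption on $L$. Because the observation constraint built into \eqref{pivjcr} is automatically satisfied at the true data, the whole argument is essentially a one-line computation once the setup is unpacked, and I would structure the proof to make this reduction explicit.

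First I would fix $P \in \mP$, write $\theta_P = \theta(P)$, and let $Z \sim P$. Substituting $o^* = o(Z)$ in \eqref{pivjcr}, the side condition $o(z) = o^*$, when evaluated at $(\theta,z) = (\theta_P, Z)$, becomes the tautology $o(Z) = o(Z)$. Therefore the event $\{(\theta_P, Z) \in J(o(Z))\}$ coincides with $\{L(\theta_P, Z) \in S\}$, and the pivot hypothesis $L(\theta_P, Z) \sim Q$ yields
\[
\bP_{Z \sim P}\bigl((\theta_P, Z) \in J(o(Z))\bigr) \;=\; Q(S) \;\ge\; 1 - \alpha.
\]

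The only real care needed is measurability bookkeeping: $J(o^*)$ must lie in $B_{\Theta \times \mZ}$, which follows from $S \in B_\mL$ together with the standing measurability of $o$ and $L$ under the product sigma-algebra, so this is not a substantive obstacle. Since the paper flags Proposition \ref{prop:pivot} as a direct consequence of the conditional-pivot result (Theorem \ref{thm:conditional-JCR-piv}), I would alternatively note the cleaner route: take the conditioning sigma-algebra in that theorem to be trivial (e.g., $\{\emptyset, \mZ\}$), so that a conditional pivot collapses to an unconditional pivot and the present statement emerges as an immediate specialization.
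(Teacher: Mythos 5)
Your proof is correct and matches the paper's treatment: the paper establishes Proposition \ref{prop:pivot} precisely as the specialization of Theorem \ref{thm:conditional-JCR-piv} to a trivial (constant) conditioning variable, which is the alternative route you name, and your direct computation---observing that the constraint $o(z)=o^*$ is tautological at $(\theta_P,Z)$ so the coverage event reduces to $\{L(\theta_P,Z)\in S\}$ with probability $Q(S)\ge 1-\alpha$---is exactly the content of that specialization. Nothing is missing.
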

In Section \ref{conn} we discuss the connection between this construction and classical pivotal confidence regions.
For pivots to lead to informative JCRs, we need $L$ to be more expressive; for instance the constant $L(\theta,Z)=0$ is a pivot, but does not lead to informative regions.
In general, if there are different pivots, the weaker the conditions under which they are pivotal, the more generally the associated JCRs are valid. We will illustrate this later in examples.

Informative pivots are known to exist under a variety of conditions, see
e.g.,
\cite{fraser1966structural,fraser1968structure,fraser1971events,brenner1983models,barnard1995pivotal,fraser1996some},
Sections 7.1.1 and 7.1.4 of \cite{shao2003mathematical},
Section 2.6 of \cite{cox2006principles},
and Section \ref{pive} for a review.
Since standard confidence regions with 
exact finite sample coverage usually
require the existence of pivots,
our methods are typically applicable whenever standard confidence regions can be constructed.

For instance, pivots exist for any parametric statistical model with independent continuously distributed scalar observations (Proposition 7.1 of \cite{shao2003mathematical}).
Another example is injective data generating models, which are often referred to as structural or structured models \citep{fraser1966structural,fraser1968structure,fraser1971events,brenner1983models,fraser1996some}. 
A key example are group invariance models or structural models \citep{fraser1968structure}, 
with classical examples including location-scale families and data with sign-symmetric or spherically distributed noise. 
These are broad enough to include practically important settings such as 
linear mixed effects models, see Section \ref{pive} for details.
See Section \ref{cons} for a discussion of discreteness considerations for constructing pivotal JCRs, including discreteness and using asymptotic pivots.
For clarity, we will usually illustrate JCRs in linear models through this paper.

\begin{example}[Linear regression]\label{example:lr-pivot}
Consider the standard linear regression model $Y_0 = x_0^\top \theta + \ep$ with 
the covariates (features, inputs) $x_0$ belonging to some space $\mX$.
We view $x_0$ as fixed
and study standard normal
noise $\ep \sim \N(0,1)$. 
Suppose $\theta$ belongs to some parameter space $\Theta$.
We denote $z=(x_0,y_0)$ and---for illustration---start with one datapoint, moving to multiple datapoints below.
Our observation consists of the features, i.e.,  $o(z) = x_0$, and we wish to predict the outcome $Y_0$.
Moreover, we wish to make inferences about the parameter $\theta$. 
This calls for constructing a JCR for $(\theta,Y_0)$. 

Since $Y_0$ and $\theta$ are related linearly in this statistical model, we aim for JCRs that capture this relation.
We can form the pivot 
$L:\Theta\times\mX\to \R$ given by
$ L(\theta,z) = y_0 - x_0^\top\theta \sim Q := \N(0,1)$ to derive an 
$1-\alpha$-JCR as
\begin{align*}
    J(x_0)=\{ (\theta,z) \in \Theta\times \mX\times \R: o(z)=x_0,\,  q_{\alpha/2} < y_0 - x_0^\top \theta < q_{1 - \alpha/2}  \}.
\end{align*}
Since $x_0$ is observed, we can simplify this into a prediction region for $y_0$, writing 
\begin{align}\label{eq:toy}
    \tilde J(x_0)=\{ (\theta,y_0)\in \Theta\times  \R: q_{\alpha/2} < y_0 - x_0^\top \theta < q_{1 - \alpha/2}  \}.
\end{align}
A  visualization for the one-dimensional case is shown in Figure \ref{fig:toy}, in which we consider the parameter space $\Theta = \R$, the feature space $\mX = \R$, and suppose that the feature value for which we wish to predict the outcome is $x_0 = 1$.
The diagonal band shape captures the linear relation between $y_0$ and $\theta$, as desired.
\begin{figure}[ht]
    \centering
    \includegraphics[height = 5.5cm, width =5.7cm]{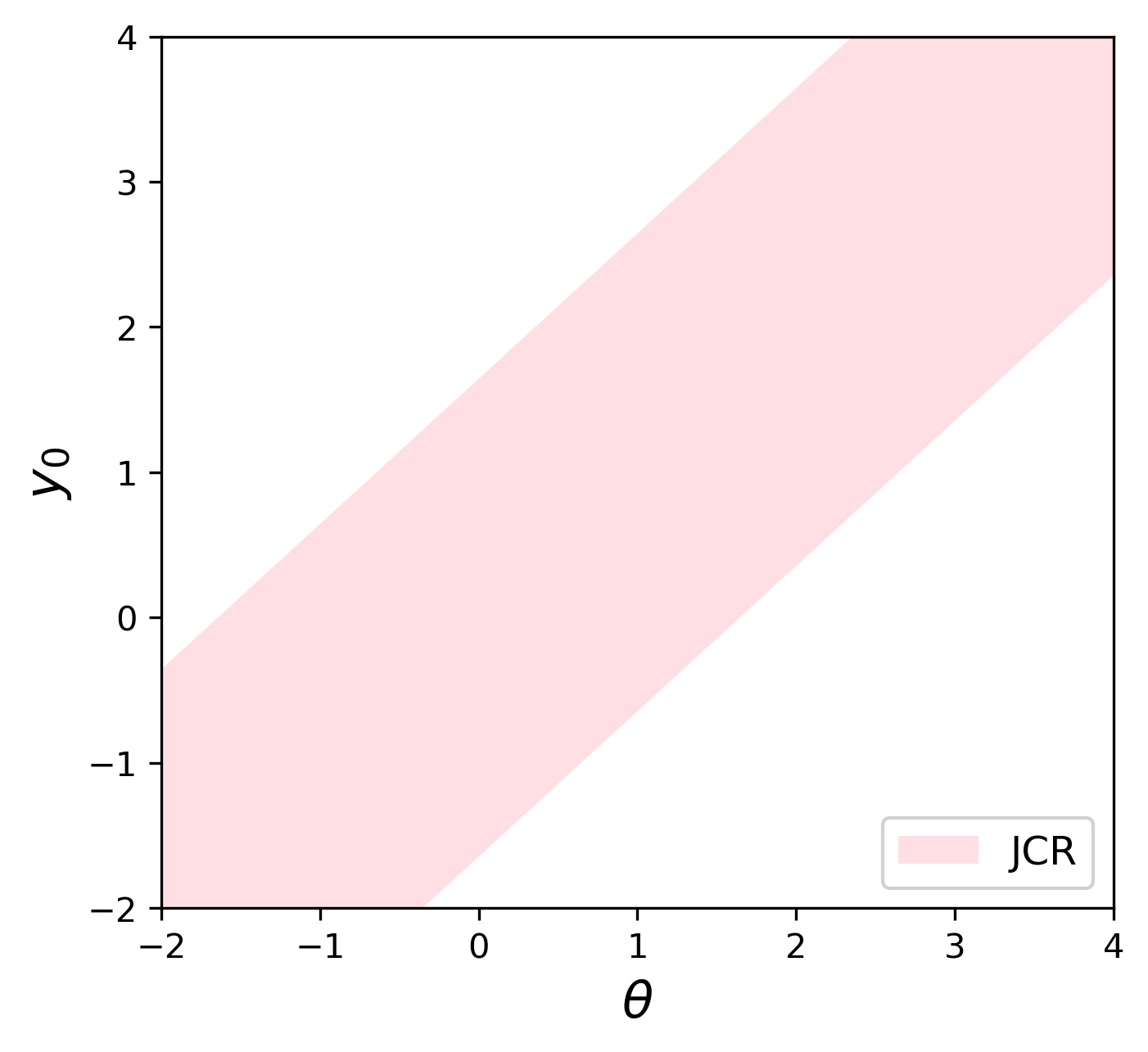}
    \caption{A visualization of the JCR for linear regression defined in \eqref{eq:toy}, where we take $\Theta = \mX = \R$ and $x_0 = 1$.}
    \label{fig:toy}
\end{figure}

Next, 
for a sample size $n\ge 1$,
let $(x_i,y_i)_{i\in [n]}$ be the observed datapoints, where $x_i\in \R^{p}$, $p\ge 1$ and $y_i\in\R$, following the standard linear model 
$Y_i = x_i^\top\theta + \ep_i$. 
Denote 
the $n\times p$ matrix
$X = (x_1^\top,\ldots,x_n^\top)^\top$,
and the $n\times 1$ vectors
$Y = (y_1,\ldots,y_n)^\top$
and $\ep = (\ep_1,\ldots,\ep_n)^\top$.
Consider also a test datapoint $\Yte = \xte^\top \theta + \epte$, where only $\xte$ is observed. 
Let $z = (X^+,Y^+)$ be the full data, where we define
the $(n+1)\times p$ matrix
$X^+ = (X^\top,\xte^\top)^\top$,
and the $(n+1)\times 1$ vector
$Y^+=(Y^\top,\Yte)^\top$.
Thus the complete data includes both $\xte,\Yte$, but the observed data is only
$o(z)=(X^+,Y)$.
We consider $X^+$ fixed and assume that $n\ge p$ and that $X$ has full rank.

For i.i.d.~normal noise  $\ep_i \sim \N(0,\sigma^2)$ and $\ep_{\mathrm{te}} \sim \N(0, \sigma^2)$ with some unknown variance $\sigma^2$, 
we can 
use the pivot ${(\yte - \xte^\top \theta)^2}/{S^2} \sim F_{1,n-p}$, where $S^2 = \sum_{i=1}^n (y_i - x_i^\top \hat{\theta})^2 / (n-p)$ and $\hat{\theta} = (X^\top X)^{-1} X^\top Y$ is the ordinary least squares estimator. 
Hence, we obtain a $1-\alpha$ JCR in reduced form
\begin{align}\label{eq:pivot-F-1}
    \left\{ (\theta,\yte):   |\yte - \xte^\top \theta| < \sqrt{F_{1,n-p}^{1 - \alpha}}S \right\}.
    \end{align}
For each $\theta$, this JCR is a fixed-width interval for $\yte$.
We now consider JCRs for a one-dimensional parameter
$\gamma = c^\top \theta \in \R$, for some $c \in \R^{ p \times 1}$ and $\Yte$.
Suppose that there exists $w \in \R^{n \times 1}$ such that $w^\top X^{+} \theta = c^\top \theta$.
This is guaranteed to hold if $c$ 
belongs to the row span of $X^+$; 
and holds in particular if $(n+1) \ge p$, and
$X^+$ has full rank.
In this case, we can take $w = X^{+,\dagger}c$, where $M^{\dagger}$ denotes the pseudoinverse of the matrix $M$. 
Then, we have the pivot
\begin{align}\label{eq:gau_high_d}
\frac{w^\top Y^{+} - \gamma}{S\|w\|_2} \sim t_{n-p}.
\end{align}

Using this, we can construct a JCR for $(\gamma,\Yte)$.
Since $\yte$ does not appear in $S$, this leads to JCRs with a fixed prediction component width. 
\end{example}

\begin{example}[Non-linear regression]\label{example:nonlinear-regression}
Consider
a non-linear regression model 
where
$Y_i = f(x_i) + \ep_i$, 
with $x_i\in \R^{p}$, $p\ge 1$, and $Y_i\in\R$, for $i\in [n]$ and (with a slight abuse of notation) for $i= \mathrm{te}$.
Suppose that the unknown function $f$ belongs to some function class $\mathcal{F}$. 
We use the same notations as in Example \ref{example:lr-pivot}, and
suppose that $\ep\sim Q$ for a known distribution $Q$.
Let $f(X^+)$ be defined by applying $f$ to each row of $X$.
Then, for all $f\in \mF$,
$Y^+ - f(X^+) \sim Q$.
Hence $Y^+ - f(X^+)$ is a pivot, and we can construct a $1-\alpha$ JCR in reduced form,
$$J(\xte;X,Y) = \{(\yte,f) \in (\R,\mathcal{F}):
Y^+ - f(X^+) \in S
\}$$
for any measurable set $S\in \R^d$ such that $S$ has probability at least $1-\alpha$ under $Q$.

\end{example}

Constructing JCRs with the pivotal approach may require solving a number of potentially challenging computational problems. 
In particular, 
to compute \eqref{pivjcr}, 
we need to search over $\Theta$ and over the level sets of $o$, 
which may require discretization and/or solving potentially challenging non-linear equations.
In some cases, one may be able to find the required sets analytically; in other cases, one may need to compute them numerically.
In this work, we will study examples where computation can be done efficiently.

\subsection{Conditional Pivots}\label{sec:conditional-pivot}

To construct JCRs when informative pivots are not known, 
we next study conditional pivots.
Suppose we have a map $V :\Theta\times\mZ\to\mV$, for some measurable space $\mV$
with a sigma-algebra $B_\mV$.
Then, $L$ is a conditional pivot given $V$, if it has a known distribution $Q_v$
on $(\mL,B_\mL)$,
conditionally on $V(\theta_P,Z)=v$, 
for $P_V$-almost every $v\in\mV$, where $P_V$ is the distribution of $V(\theta_P,Z)$, $Z\sim P$.
The following example underlies the popular conformal prediction methodology.

\begin{example}[Exchangeability of a finite sequence]\label{ex}
Suppose that $Z = (Z_1,\ldots,Z_n)$ has exchangeable entries, in the sense that for any permutation $\pi$ of $[n]$, $Z =_d (Z_{\pi_1}, \ldots, Z_{\pi_n})$. 
Suppose moreover that all entries of $Z$ are distinct almost surely. Then, conditional on the set of entries of $Z$, $Z$ is uniforml over all possible permutations of those entries. Hence, $L(\Theta,Z)= Z$ is a conditional pivot, conditionally on the set $V = \{Z_1, \ldots, Z_n\}$, with a distribution $Q_v$ uniform over all permutations of the entries of $v$. 
\end{example}

Let $S: \mV \to B_{\mL}$ be an assignment of measurable sets such that for $P_V$-a.e. $v$,  $Q_v(S(v))\ge 1-\alpha$.
Then, we can 
 construct a  $1-\alpha$-JCR for $(\theta,Z)$ via
\begin{equation}\label{jcp}
    J(o^*)= \left\{(\theta,z) 
\in\Theta\times\mZ: 
o(z) = o^*,\,
L(\theta,z) \in S(V(\theta,z)) \right\}.
\end{equation}
Its validity is summarized in the following result.
\begin{theorem}\label{thm:conditional-JCR-piv}
Suppose that $L$ is a conditional pivot, having a known distribution $Q_v$ conditionally on $V(\theta_P,Z)=v$; 
for $P_V$-almost every $v\in\mV$. 
Then for any assignment of measurable sets $S: \mV \to B_{\mL}$ with $\{\rho = (\theta,z):\, L(\rho) \in S(V(\rho))\} \in B_{\Theta\times \mZ}$, if for $P_V$-a.e. $v$,  $Q_v(S(v))\ge 1-\alpha$, 
equation \eqref{jcp} returns a $1-\alpha$-joint coverage region. 
\end{theorem}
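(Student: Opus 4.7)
The plan is to verify the JCR coverage inequality \eqref{eq:J} directly from the definition of $J$ in \eqref{jcp}, by conditioning on the value of $V(\theta_P,Z)$ and invoking the conditional pivot assumption.

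First, I would fix an arbitrary $P\in\mP$ and observe that, plugging $(\theta_P,Z)$ into $J(o(Z))$, the membership condition $o(z)=o^*$ is automatic since $o^*=o(Z)$. Hence the event $\{(\theta_P,Z)\in J(o(Z))\}$ reduces to $\{L(\theta_P,Z)\in S(V(\theta_P,Z))\}$, which is measurable by the hypothesis that $\{\rho=(\theta,z):L(\rho)\in S(V(\rho))\}\in B_{\Theta\times\mZ}$. So the goal becomes
\begin{equation*}
\bP_{Z\sim P}\bigl(L(\theta_P,Z)\in S(V(\theta_P,Z))\bigr)\ge 1-\alpha.
\end{equation*}

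Next, I would apply the tower property, conditioning on $V(\theta_P,Z)$. By the conditional pivot assumption, the conditional distribution of $L(\theta_P,Z)$ given $V(\theta_P,Z)=v$ is $Q_v$ for $P_V$-a.e.\ $v$, so
\begin{equation*}
\bP\bigl(L(\theta_P,Z)\in S(V(\theta_P,Z))\,\big|\,V(\theta_P,Z)=v\bigr)=Q_v(S(v))
\end{equation*}
for $P_V$-a.e.\ $v$. The hypothesis gives $Q_v(S(v))\ge 1-\alpha$ for $P_V$-a.e.\ $v$, so taking expectation over $v\sim P_V$ yields the required bound. This then implies the desired coverage by the definition of JCR, establishing the claim. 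Proposition \ref{prop:pivot} follows as the special case where $V$ is a constant map, so the conditional distribution $Q_v$ is simply the unconditional pivot distribution $Q$.

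The main technical subtlety is justifying the conditional probability calculation rigorously, i.e., ensuring that the map $v\mapsto Q_v(S(v))$ is well-defined and measurable so that its expectation under $P_V$ makes sense. In the paper's setup, measurability is built into the hypothesis $\{\rho:L(\rho)\in S(V(\rho))\}\in B_{\Theta\times\mZ}$ together with the standing assumption that all sigma-algebras contain singletons and that the relevant maps are measurable; under these conditions, a regular conditional distribution exists and the tower property applies in the standard form. Aside from this measurability bookkeeping, the proof is a one-line application of the tower property, and I expect no other obstacle.
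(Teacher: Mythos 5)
Your proof is correct and follows essentially the same route as the paper's: both reduce the coverage event to $\{L(\theta_P,Z)\in S(V(\theta_P,Z))\}$ and apply the tower property, conditioning on $V(\theta_P,Z)=v$ and using $Q_v(S(v))\ge 1-\alpha$ for $P_V$-a.e.\ $v$. Your explicit remarks on the automatic satisfaction of $o(z)=o^*$ and on the measurability bookkeeping are a slightly more careful write-up of the same argument.
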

The proof is given in Section \ref{proof:conditional-JCR-piv} in the Appendix.

We now describe a class of probability distributions where conditional pivots arise, as a generalization of structural or structured models \citep{fraser1966structural,fraser1968structure,fraser1971events}.
\begin{proposition}[Generalized Structural Model, GSM]\label{prop:gsm}
    Suppose that
for some 
measurable map
$\psi:E\times \mV\to \mL$, and
some random variable $\ep$ with a fixed distribution $Q$ over some measurable space $E$,
we have
$L(\theta_P,Z) = \psi(\ep, V(\theta_P,Z))$  for all $P\in \mP$. Then, for $P_V$-a.e. $v$, conditional on $V(\theta_P,Z)=v$, $L(\theta_P,Z)$ has the distribution of $\psi(\ep,v)$; and thus is a conditional pivot.
\end{proposition}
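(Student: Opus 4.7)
The plan is to directly identify the conditional distribution of $L(\theta_P,Z)$ given $V(\theta_P,Z)=v$ by using the structural representation $L=\psi(\ep,V)$, and then observe that this distribution depends only on $\psi$, $Q$, and $v$, but not on $P$. To make this go through, the implicit hypothesis I will invoke is that $\ep$ is independent of $V(\theta_P,Z)$ under each $P\in\mP$ (this is the standard reading of a structural model, and is what makes $Q$ the ``noise'' distribution driving the system). I will flag this assumption explicitly at the outset of the proof.

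First, I would fix an arbitrary $P\in\mP$ and write, using the assumed relation, $L(\theta_P,Z)=\psi(\ep,V(\theta_P,Z))$, with $\ep\sim Q$ independent of $V(\theta_P,Z)$. Next, for any measurable set $A\in B_\mL$, I would compute the conditional probability
\begin{equation*}
\bP\bigl(L(\theta_P,Z)\in A \,\big|\, V(\theta_P,Z)=v\bigr)
=\bP\bigl(\psi(\ep,V(\theta_P,Z))\in A \,\big|\, V(\theta_P,Z)=v\bigr).
\end{equation*}
By a standard disintegration argument for functions of conditionally independent variables (for instance, the factorization lemma applied to the joint law of $(\ep,V(\theta_P,Z))$), this conditional probability equals $\bP_{\ep\sim Q}\bigl(\psi(\ep,v)\in A\bigr)$ for $P_V$-almost every $v$. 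The measurability of $(\ep,v)\mapsto\psi(\ep,v)$ ensures the right-hand side is a well-defined probability kernel in $v$.

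Finally, I would observe that the resulting kernel $v\mapsto \psi(\ep,v)(Q)$, i.e.\ the pushforward of $Q$ under $\psi(\cdot,v)$, depends only on the fixed data $\psi$ and $Q$ and on $v$, and in particular not on $P$. This is precisely the defining property of a conditional pivot given $V$, with the assignment $Q_v := \psi(\cdot,v)(Q)$. The proof then closes by noting that although $P_V$ itself varies with $P$, the null set on which the conditional distribution might fail to take the stated form can be chosen inside the support of each $P_V$, so the statement ``for $P_V$-a.e.\ $v$'' is meaningful for every $P$.

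The main obstacle, if any, is conceptual rather than technical: it is the need to make the independence of $\ep$ from $V(\theta_P,Z)$ an explicit hypothesis, since without it the claim can fail (e.g.\ if $\ep$ were a deterministic function of $V$). Once this is stated, the remaining steps are essentially an application of the tower property and the change-of-variables formula for conditional expectations, which I would carry out by testing against bounded measurable functions of $L$ rather than by choosing regular conditional distributions explicitly.
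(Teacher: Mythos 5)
Your proof is correct and follows essentially the same route as the paper's: condition on $V(\theta_P,Z)=v$, observe that $L=\psi(\ep,v)$ then has the law of the pushforward of $Q$ under $\psi(\cdot,v)$, and note that this kernel depends only on $\psi$, $Q$, and $v$, not on $P$. Your explicit flagging of the independence of $\ep$ from $V(\theta_P,Z)$ is a genuine improvement rather than a deviation: the paper's one-line argument silently relies on the same hypothesis, and without it the conclusion can fail exactly as in the degenerate example you describe, so stating it is the right call.
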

See Section \ref{proof:gsm}
 in the Appendix for the proof.
Next, we will outline several examples of GSMs.
For instance, we can consider a heteroskedastic regression model as an extension of \eqref{eq:toy}, where
$L:\Theta\times\mZ\to \R$ is given by
$ L(\theta,z) = y_0 - x_0^\top\theta \sim Q_{x_0} := \N(0,x_0^2)$, which depends on the input $x_0$.
This satisfies $L(\theta,z)=\psi(\ep,V(\theta,z))$ where $V(\theta,z)=x_0$ and $\psi(\ep,x_0) \sim x_0\cdot \ep$ with $\ep \sim \N(0,1)$.
Thus, given the value of $x_0$, $L(\theta,z)$ has the distribution of $\psi(\ep,x_0)$, and thus is a conditional pivot.
By using the sets 
$S(x_0) = (q_{\alpha/2}|x_0|, q_{1 - \alpha/2}|x_0|)$,
\eqref{jcp} leads to the
$1-\alpha$-JCR
\begin{align*}
    J(x_0)=\left\{ (\theta,z) \in \Theta\times \mX\times \R: o(z)=x_0,\,  q_{\alpha/2}|x_0| < y_0 - x_0^\top \theta < q_{1 - \alpha/2}|x_0|\right  \}.
\end{align*}
This can be also viewed a JCR based on the unconditional pivot $(y_0 - x_0^\top \theta)/x_0$. 

As a second example, 
for independent, possibly non-identically distributed, continuous random variables $Z_1,\ldots,Z_n$ symmetrically distributed around $\theta$, with $Z=(Z_1,\ldots,Z_n)$, $L(\theta,Z)=(Z_1-\theta,\ldots,Z_n-\theta)$ is a conditional pivot, 
conditional on $V(\theta,z) = (|z_1-\theta|,\ldots,|z_n-\theta|)$.
Specifically, 
for $P_V$-a.e. $v$,
conditional on  $V(\theta,z)=v$,
we have $\psi \sim U$, where $U$ denotes the  discrete uniform distribution on the unit cube. 
In general, $L$ is not an unconditional pivot; only the element-wise signs of the entries of $L$ are \citep{boldin1997sign}.
However, using only the signs can lose information;  showing that conditional pivots are useful here.


As already mentioned, the conditional pivotal approach is also a direct generalization of the popular conformal prediction method \citep{gammerman1998learning,vovk1999machine,vovk2022algorithmic},
see Section \ref{sec:conditional-invariance} for discussion. 

{\bf Test statistic-based approach.}
Further, 
pivot-based JCRs can be constructed using a test statistic $m:\mL\to \R$, 
defining $S$ to be the set of 
datapoints where $m$ is sufficiently large, depending on the value of $V$, i.e., 
\begin{align}\label{def:conditional-JCR}
    J(o^*)= \biggl\{(\theta,z) 
\in\Theta\times\mZ: 
o(z) = o^*,\,
m(L(\theta,z)) \ge q_{\alpha}\bigl(m\left(Q_{V\left(\theta,z\right)}\right)\bigr) \biggr\}.
\end{align}
Recall here that $m\left(Q_{V\left(\theta,z\right)}\right)$ is the pushforward of $Q_{V\left(\theta,z\right)}$ under $m$.
This JCR provides coverage at the desired level. 
\begin{theorem}\label{thm:conditional-JCR}
Suppose that a conditional pivot $L(\theta,Z)$ has a known distribution $Q_v$ conditionally on $V(\theta_P,Z)=v$; 
for $P_V$-almost every $v\in\mV$, with $P_V$ the distribution of $V(\theta_P,Z)$, $Z\sim P$. Then for a test statistic $m: \mL\to \R$, 
\eqref{def:conditional-JCR} returns a $1-\alpha$-joint coverage region. 
\end{theorem}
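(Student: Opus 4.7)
The plan is to derive Theorem \ref{thm:conditional-JCR} as a direct corollary of Theorem \ref{thm:conditional-JCR-piv}. The idea is that the test-statistic-based construction in \eqref{def:conditional-JCR} is exactly the instance of \eqref{jcp} obtained by choosing the set-valued assignment
\[
S(v) = \{\ell \in \mL : m(\ell) \ge q_{\alpha}(m(Q_v))\},
\]
i.e., the upper $\alpha$-level set of the pushforward distribution $m(Q_v)$. With this choice, the condition $L(\theta,z) \in S(V(\theta,z))$ becomes $m(L(\theta,z)) \ge q_{\alpha}(m(Q_{V(\theta,z)}))$, so that the two JCR definitions coincide.

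Once this identification is made, only two things remain. First, I would verify that $Q_v(S(v)) \ge 1-\alpha$ for $P_V$-almost every $v$. By definition of $S(v)$ and the pushforward,
\[
Q_v(S(v)) = m(Q_v)\bigl(\{t \in \R : t \ge q_{\alpha}(m(Q_v))\}\bigr),
\]
and by the standard definition of the $\alpha$-quantile of a real distribution $\mathcal{F}$, one has $\mathcal{F}([q_{\alpha}(\mathcal{F}),\infty)) \ge 1-\alpha$. Applying this with $\mathcal{F} = m(Q_v)$ gives the claim. Second, I would check the measurability hypothesis of Theorem \ref{thm:conditional-JCR-piv}, i.e., that the set $\{(\theta,z) : m(L(\theta,z)) \ge q_{\alpha}(m(Q_{V(\theta,z)}))\}$ lies in $B_{\Theta\times\mZ}$. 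Writing this set as $\{(\theta,z) : g(\theta,z) \ge 0\}$ with $g(\theta,z) = m(L(\theta,z)) - q_{\alpha}(m(Q_{V(\theta,z)}))$, measurability follows from the blanket measurability convention adopted at the end of Section \ref{sec:introduction}, applied to $m$, $L$, $V$, and the quantile map $v \mapsto q_{\alpha}(m(Q_v))$.

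The main obstacle is the last point: measurability of the quantile map $v \mapsto q_{\alpha}(m(Q_v))$ in $v$, which implicitly requires that the family $\{Q_v\}$ depend on $v$ in a measurable way (e.g., that $v \mapsto Q_v(B)$ be measurable for each $B \in B_\mL$, which is the standard regular-conditional-distribution assumption that makes $Q_v$ a bona fide conditional law in the first place). Under that standard assumption the pushforward $v \mapsto m(Q_v)(-\infty,t)$ is measurable in $v$ for each fixed $t$, and the $\alpha$-quantile can be represented as the infimum of a countable collection of measurable sets, hence is measurable. With these verifications, \eqref{def:conditional-JCR} reduces to \eqref{jcp} and Theorem \ref{thm:conditional-JCR-piv} yields the desired $1-\alpha$ coverage.
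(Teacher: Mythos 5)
Your proposal is correct and is essentially the paper's own argument: the paper proves Theorem \ref{thm:conditional-JCR} by conditioning on $V(\theta_P,Z)=v$, noting $m(L)\sim m(Q_v)$ there, applying the definition of the quantile to get conditional coverage $\ge 1-\alpha$, and averaging via the tower property---which is exactly what invoking Theorem \ref{thm:conditional-JCR-piv} with $S(v)=\{\ell: m(\ell)\ge q_\alpha(m(Q_v))\}$ amounts to. Your added attention to measurability of $v\mapsto q_\alpha(m(Q_v))$ is a reasonable refinement that the paper handles implicitly via its blanket measurability conventions.
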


The proof is given in Section \ref{proof:conditional-JCR} in the Appendix. 

\begin{example}[Conformal prediction]
In the setting of Example \ref{ex},
consider a pure prediction region, i.e., $\Theta = \emptyset$, 
and suppose $o(z) = (z_1, \ldots, z_{n-1})$. 
Then \eqref{def:conditional-JCR} becomes, in reduced form,
$$
J(z_1, \ldots, z_{n-1})= \biggl\{z_n \in\mZ:
m(z) \ge q_{\alpha}\bigl(\{m\left(\pi\cdot z\right),\, \pi \in S_n\}\bigr) \biggr\},
$$
where $S_n$ denotes the set of $n$-permutations and for $\pi\in S_n$, $\pi \cdot z = (z_{\pi_1}, \ldots, z_{\pi_n})$.
This recovers the most basic form of conformal prediction with a conformity score $m$
\citep{saunders1999transduction,vovk2005algorithmic}.
\end{example}

In general, note that a non-strict inequality is needed in \eqref{def:conditional-JCR};
as, for instance, if $m(L) = c, \forall L \in \mL$ for some constant $c \in \R$, then 
  using a strict inequality would fail to ensure \eqref{def:conditional-JCR} has $1-\alpha$ coverage.
  While the use of a non-strict inequality may result in slight conservativeness, 
  it is possible to modify the approach to make it exact.

\subsubsection{Randomization}
If finding the quantiles of the distribution $m(Q_v)$
is computationally or analytically hard, 
we can define the following randomized JCR, which reduces the problem to finding the quantiles of a discrete uniform distribution.
For some $K\ge 1$, and for a given value of 
$V(\theta,z)$, we sample 
$M=(M_1,\ldots, M_K)$,
such that each $M_i$, for $i\in [K]$
is i.i.d.~following the distribution $m(Q_{V(\theta,z)})$.
We write
$M\sim m(Q_{V(\theta,z)})^K$,
and
for any $c\in(0,1)$, denote the 
$c$-quantile of the multiset of the entries of $M$
by $q_c(\{M\})$.
Then, we let  
$\alpha' = \lfloor (K + 1)\alpha\rfloor/K$, and define
the \emph{randomized JCR}
\begin{align}\label{eq:conditional-pivot-randomization}
    J_{N}(o^*)= \left\{(\theta,z) 
\in\Theta\times\mZ: 
o(z) = o^*,\,
m(L(\theta,z)) \ge q_{\alpha'}(\{M\}),\,
M\sim m(Q_{V(\theta,z)})^K\right\}.
\end{align}
Thus, for each value of $\theta,z$ such that $o(z)=o^*$, we draw the random vector 
$M\sim m(Q_{V(\theta,z)})^K$, 
and include $(\theta,z)$ in the JCR
if the test statistic $m(L(\theta,z))$ of the pivot
$L$ is larger than $q_{\alpha'}(\{M\})$.
We show that returns a valid JCR.
\begin{theorem}\label{thm:conditional-pivot-randomization}
The set $ J_{N}$ from  \eqref{eq:conditional-pivot-randomization} is a $1-\alpha$-joint coverage region, in the sense that
    \begin{align*}
        \bP_{Z;M \sim m(Q_{V(\theta,Z)})^K}\left((\theta_P,Z) \in J_{N}(o(Z))\right) \ge 1 - \alpha.
    \end{align*}
\end{theorem}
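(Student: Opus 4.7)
My plan is to reduce the randomized JCR statement to the standard exchangeability argument underlying conformal prediction, conditioning on the value of $V(\theta_P, Z)$. First, I would fix an arbitrary $P \in \mP$ and condition on the event $\{V(\theta_P, Z) = v\}$ for an arbitrary $v$ in the support of $P_V$. By the defining property of the conditional pivot, the conditional law of $L(\theta_P, Z)$ given $V(\theta_P, Z) = v$ equals $Q_v$, so $m(L(\theta_P, Z))$ has conditional law $m(Q_v)$. Since $M = (M_1, \ldots, M_K)$ is drawn from $m(Q_{V(\theta_P,Z)})^K$ independently of everything else given $V(\theta_P,Z)$, it follows that conditionally on $V(\theta_P,Z)=v$, the $K+1$ random variables
\[
W_0 := m(L(\theta_P, Z)),\quad W_1 := M_1,\ \ldots,\ W_K := M_K
\]
are i.i.d.\ with law $m(Q_v)$, hence in particular exchangeable.

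Next, I would translate the membership condition $W_0 \ge q_{\alpha'}(\{M\})$ into a rank statement. By the quantile convention in the paper, and the choice $\alpha' = \lfloor (K+1)\alpha \rfloor / K$, we have $q_{\alpha'}(\{M\}) = M_{(j)}$ where $j := \lfloor (K+1)\alpha \rfloor$. The event $W_0 \ge M_{(j)}$ is equivalent to $\#\{i \in [K] : M_i \le W_0\} \ge j$, or equivalently $\#\{i \in \{0,\ldots,K\} : W_i \le W_0\} \ge j+1$; that is, the rank of $W_0$ among the $W_i$ (tie-breaking in favor of $W_0$) is at least $j+1$.

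Then I would invoke exchangeability: for i.i.d.\ (hence exchangeable) $W_0, \ldots, W_K$, the conditional probability that at least $K+1-j$ of the remaining $W_i$ are $\ge W_0$ (equivalently, that the max-tie-break rank of $W_0$ is at least $j+1$) is at least $(K+1-j)/(K+1)$ by a standard symmetry argument. Plugging in $j = \lfloor (K+1)\alpha\rfloor \le (K+1)\alpha$ gives
\[
\bP\bigl(W_0 \ge q_{\alpha'}(\{M\}) \bigm| V(\theta_P, Z) = v\bigr) \ge \frac{K+1-j}{K+1} \ge 1-\alpha.
\]
Since this bound holds for $P_V$-a.e.\ $v$, integrating over $v$ yields the unconditional bound and completes the proof.

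The only delicate point is the tie-breaking in the rank argument; I would either note that the ``$\ge$'' convention in the definition of $J_N$ and in the quantile definition is precisely what makes the symmetry bound go through in the non-continuous case, or alternatively appeal to a standard lemma (as used in conformal prediction) stating $\bP(W_0 \le W_{(\lceil (K+1)(1-\alpha)\rceil)}) \ge 1-\alpha$ for any exchangeable sequence. Aside from this bookkeeping, everything else is routine.
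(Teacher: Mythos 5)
Your proposal is correct and follows essentially the same route as the paper's proof: condition on $V(\theta_P,Z)=v$, observe that $m(L(\theta_P,Z)),M_1,\ldots,M_K$ are conditionally i.i.d.\ (hence exchangeable), apply the standard quantile/rank bound, and integrate over $v$. The only difference is cosmetic—the paper cites the exchangeability lemma from \cite{vovk2022algorithmic} where you spell out the rank argument explicitly—so no further changes are needed.
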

 The proof is in Section \ref{proof:conditional-pivot-randomization} in the Appendix.
In general, constructing \eqref{eq:conditional-pivot-randomization}
requires drawing new random variables $M_i, i\in [K]$ for each $z$, and can thus be computationally expensive.
However,
we will show that  under group invariance, 
randomization with conditional pivots can become computationally efficient (Section \ref{sec:conditional-invariance}). 

For the special case of an unconditional pivot $L$ with distribution $Q$,
randomization amounts to sampling 
$K\ge1 $ i.i.d.~random variables $M_1,\ldots, M_K\sim m(Q)$, and computing, with  $\alpha' = \lfloor (K + 1)\alpha\rfloor/K$
\begin{align*}
    J_M(o^*)= \left\{(\theta,z) 
\in\Theta\times\mZ: 
o(z) = o^*,\,
m(L(\theta,z)) \ge q_{\alpha'}(\{M_1,\ldots, M_K\}) \right\}.
\end{align*}
Intriguingly, randomization 
can be viewed as considering a conditional pivot under an extended probability space including $m(L(\theta,Z))$ and $M_1,\ldots, M_K$.
Since these variables are iid, 
we can consider the  conditional pivot that is uniform over all permutations of datapoints, conditioning on the set of observations (Section \ref{sec:conditional-invariance}). 


\subsection{Split JCRs}\label{sec:split-JCR}
In this subsection, we describe a split, or split-data,  construction of JCRs---inspired by inductive or split conformal prediction \citep{papadopoulos2002inductive}---which can be more computationally efficient. 
We assume that the data 
$z$ can be partitioned into
\emph{calibration data} $\zc$
and \emph{test data} $\zte$, 
as 
$z = (\zc, \zte) \in \mZc \times \mZte =:\mZ$. 
We are concerned with the setting where there are multiple test datapoints $\zte$, 
and we want to construct prediction regions for them based on a given calibration dataset $\zc$.
We assume that the test datapoints are conditionally i.i.d.~given $\zc$; and consider one generic test datapoint $\zte$ for notational clarity.
We may also have  \emph{training data} $\zt$ used to construct, say, a predictor or a test statistic, which we can later use in the JCR. 
We view $\zt$ as fixed, and usually do not mention it further.

We assume the observed data is $o(z) = (\zc,o_0(\zte))$, for some observation function $o_0: \mZte \to \mO$. 
Similarly to Section \ref{sec:conditional-pivot},
assume that  there is a $V(\theta,\zc,\zte)$-conditional pivot $L(\theta,\zc,\zte)$ taking values in $\mL$. 
The  $1-\alpha$-JCR from \eqref{jcp} becomes, in reduced form
\begin{align}\label{redu-ind}
    \tilde J(o^*)= \left\{(\theta,\zte) 
\in\Theta\times\mZte: 
(\zc,o_0(\zte)) = o^*,\,
L(\theta,\zc,\zte) \in S(V(\theta,\zc,\zte)) \right\},
\end{align}
Having observed $\zc$, 
we only need to compute the parts of 
$\tilde J$ that depend on each new test datapoint $\zte \in \mZte$.
As we will see, this can reduce the computational burden.

We can also take a test statistic $m:\mL\to \R$, possibly depending on $L$ and $\zt$, and construct
\begin{align}\label{eq:split-test-stat}
  \tilde J(o^*)= \left\{(\theta,\zte) 
\in\Theta\times\mZte: 
(\zc,o_0(\zte)) = o^*,\,
m(L(\theta,\zc,\zte)) \ge q_{\alpha}(m(Q_{V(\theta,\zc,\zte)})) \right\}.
\end{align}
This has $1-\alpha$ coverage due to Theorem \ref{thm:conditional-JCR}.
The advantage of split JCRs is that we
can fix $\zc,L$ and
$m$ for all future $\Zte \in \mZte$.
As in  split conformal prediction \citep{papadopoulos2002inductive}, 
we can learn 
a useful test statistic based on $\zt$, and then calibrate it over
the calibration data $\zc$.
If $L$ is an unconditional pivot,
this reduces the computational cost to computing a quantile of $m(Q)$, which can be used for all future test datapoints $\zte$.
Further,
as a consequence of Section \ref{sec:conditional-pivot}, randomization also applies here, with the same guarantee.

\subsection{Adequate Sets for Supervised Problems}\label{sec:adequate-set}

Here we propose \emph{adequate sets}, an approach to reduce computational cost in certain supervised problems. 
We assume that the test datapoint has the form $\zte = (\xte,\yte)$, where $\xte$ are the observed features and $\yte$ is the unobserved prediction target, and $o(z) = (\zc,\xte)$. 
We aim to improve the computational efficiency of constructing a JCR to be used 
for a sequence of new test inputs $\xte^1, \ldots, \xte^n \in \mathcal{X}_{\mathrm{te}}$.
As in the previous section,
we assume that the test datapoints are conditionally i.i.d.~given $\zc$; and consider one generic test datapoint $\zte$ for notational clarity.

Suppose for the moment that in the split JCR from \eqref{redu-ind}, 
we do not consider $\xte$ as observed, i.e., 
we take $o_0$ to map to the empty set.
Then, we find that $o^* = \zc$, and so the JCR equals
$$\tilde J(\zc)= \left\{(\theta, \xte,\yte) 
\in\Theta\times\mathcal{X}_{\mathrm{te}} \times \mathcal{Y}_{\mathrm{te}}: 
L(\theta,\zc,\xte,\yte) \in S(V(\theta,\zc,\xte,\yte)) \right\}.$$
We can take the $\Theta \times \mathcal{Y}_{\mathrm{te}}$-section of this set
over $\xte \in\mathcal{X}_{\mathrm{te}}$ to obtain the JCR $J(\zc,\xte)$: 
$$J(\zc,\xte)
= \left\{(\theta,\yte) 
\in\Theta \times \mathcal{Y}_{\mathrm{te}}: 
L(\theta,\zc,\xte,\yte) \in S(V(\theta,\zc,\xte,\yte)) \right\}.$$
Now, we assume that the condition defining 
$J$ 
can be simplified via 
an \emph{adequate map} 
$A:
\Theta \times \mZte \to \mathcal{A},$ 
for some measurable space $\mA$,
and
an \emph{adequate set} $W:\Theta\times \Zc \to B_{\mA}$, 
in the sense that
$L(\theta,\zc,\xte,\yte) \in S(V(\theta,\zc,\xte,\yte))$ is equivalent to
$ A(\theta,\xte,\yte) \in W(\theta,\zc)
$, for all $\theta,\zc,\xte,\yte$ under consideration.
The intuition is that the adequate set and map \emph{decouple the
functional dependence} between $\zc$ and $(\xte,\yte)$ in the condition. 
This is reasonable if the condition is determined entirely 
based on $\zc$, and then the same condition is applied to all future $\xte,\yte$;  we will give examples where this happens.
In this case, the JCR simplifies to
\begin{align}\label{eq:W-J}
    J(\zc,\xte) = \{(\theta,\yte)\in\Theta \times \mathcal{Y}_{\mathrm{te}}:
    A(\theta,\xte,\yte) \in W(\theta,\zc)\}.
\end{align}
This JCR inherits the coverage properties of general JCRs.
\begin{theorem}\label{thm:adequate-set}
   The construction in \eqref{eq:W-J} returns a $1-\alpha$-joint coverage region.
\end{theorem}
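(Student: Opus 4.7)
The plan is to deduce this directly from Theorem \ref{thm:conditional-JCR-piv} (or its split-data specialization in \eqref{redu-ind}) by showing that the adequate-set construction is merely a rewriting of a valid conditional-pivotal JCR with an appropriately chosen observation function. The whole argument is essentially a bookkeeping/equivalence argument, so the main obstacle is just keeping the notation straight between the observed part $o(z) = (\zc, \xte)$ and the intermediate construction that treats $\xte$ as unobserved.

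First, I would invoke the split JCR from \eqref{redu-ind} applied to the observation function $o(z) = (\zc, \xte)$, where we decompose the test datapoint as $\zte = (\xte, \yte)$ with $\xte$ observed. Taking $o_0(\zte) = \xte$, Theorem \ref{thm:conditional-JCR-piv} yields that
\begin{equation*}
\tilde J(\zc, \xte) = \{(\theta, \yte) \in \Theta \times \mathcal{Y}_{\mathrm{te}} : L(\theta, \zc, \xte, \yte) \in S(V(\theta, \zc, \xte, \yte))\}
\end{equation*}
satisfies $\bP_{Z \sim P}((\theta_P, \Yte) \in \tilde J(\zc, \Xte)) \ge 1 - \alpha$ for all $P \in \mP$, since $L$ is a $V$-conditional pivot with $Q_v(S(v)) \ge 1-\alpha$ for $P_V$-a.e.\ $v$.

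Second, I would apply the defining hypothesis on the adequate map $A$ and the adequate set $W$: the equivalence
\begin{equation*}
L(\theta, \zc, \xte, \yte) \in S(V(\theta, \zc, \xte, \yte)) \iff A(\theta, \xte, \yte) \in W(\theta, \zc)
\end{equation*}
holds for all $(\theta, \zc, \xte, \yte)$ under consideration. Substituting this equivalence into the set-builder definition of $\tilde J(\zc, \xte)$ yields exactly the set $J(\zc, \xte)$ defined in \eqref{eq:W-J}. Hence $\tilde J(\zc, \xte) = J(\zc, \xte)$ as subsets of $\Theta \times \mathcal{Y}_{\mathrm{te}}$, and the coverage inequality transfers verbatim:
\begin{equation*}
\bP_{Z \sim P}\bigl((\theta_P, \Yte) \in J(\zc, \Xte)\bigr) \ge 1 - \alpha,
\end{equation*}
which is the required $1-\alpha$ JCR property.

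The proof is essentially a one-line reduction, so the only thing I would be careful about is the measurability of $\{(\theta, z) : A(\theta, \xte, \yte) \in W(\theta, \zc)\}$, which needs to hold so that the event in Theorem \ref{thm:conditional-JCR-piv} is well defined; under the running assumption that all relevant maps are measurable, this follows immediately from the equivalence, since the corresponding set for $L$ and $S$ was already required to be measurable. No probabilistic content beyond Theorem \ref{thm:conditional-JCR-piv} is needed.
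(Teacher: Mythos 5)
Your proof is correct and takes essentially the same route as the paper's: both arguments observe that the defining equivalence $L(\theta,\zc,\xte,\yte) \in S(V(\theta,\zc,\xte,\yte)) \iff A(\theta,\xte,\yte)\in W(\theta,\zc)$ makes the coverage event for $J(\zc,\xte)$ identical to that of the already-validated conditional-pivotal JCR, so the $1-\alpha$ guarantee transfers verbatim. The only (immaterial) difference is that the paper routes the identification through the region $\tilde J(\zc)$ that treats $\xte$ as unobserved, whereas you route it through the split region with $\xte$ observed; the underlying event is the same in both cases.
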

The proof is in Section \ref{proof:adequate-set}. 
As an illustration, 
in \cref{example:lr-pivot},
the region \eqref{eq:pivot-F-1} 
can be written via 
an adequate 
map taking values 
$A(\theta,\xte,\yte) = |\yte - \xte^\top \theta|$
and an adequate 
set taking values, 
for $\zc  = (X,Y)$, and $S = S(\zc)$,
\begin{align}
    W(\theta,\zc) = \left\{ (\theta,\xte,\yte):   |\yte - \xte^\top \theta| < \sqrt{F_{1,n-p}^{1 - \alpha}}S \right\}.
    \end{align}
We will give other examples under group invariance in Section \ref{sec:adequate-set-invariance}.

{\bf Test statistic-based approach and randomization.}
Given a test statistic $m$, we can similarly transform \eqref{eq:split-test-stat}
into
\begin{align*}
    J(\zc,\xte)= \left\{(\theta, \yte) 
\in\Theta\times \mathcal{Y}_{\mathrm{te}}: 
m(L(\theta,\zc,\xte,\yte)) \ge q_{\alpha}(m(Q_{V(\theta,\zc,\xte,\yte)}) \right\}.
\end{align*}
This will simplify as above if
$L(\theta,\zc,\xte,\yte)$ 
does not depend on $\zc$ and
its distribution 
given $V(\theta,\zc,\xte,\yte)$ does not depend on $\xte,\yte$. 
In that case, randomization can also be implemented efficiently.
We will show examples under group invariance
in Section \ref{sec:adequate-set-invariance}.

\section{Group Invariance}\label{sec:conditional-invariance}

As an important example of conditional pivots, 
we consider problems with \emph{group invariance}.
Specifically, suppose that there is an \emph{invariant function} $I:\Theta \times \mZ \to \mI$
with a sigma-algebra $B_{\mI}$, 
for some space $\mI$,
and a group $\mG$
acting on $\mI$ via an action $\phi:\mG\times \mI \to \mI$, abbreviated as $\phi(g,I) = gI$.
Suppose that the function $I$ is invariant in distribution under the group $\mG$, namely
\beq\label{if}
gI(\theta_P,Z) =_d I(\theta_P,Z),
\eeq
for all $g\in \mG$ and all $P \in \mP$, when $Z\sim P$.
This assumption covers many examples, as shown below.
If this condition holds for $\mG$, it also holds for all subgroups; so all conclusions below apply to those as well.
Given $z,P$, denote the orbit of $I(\theta_P,z)$ 
under the action of $\mG$
by
$O_I(\theta_P,z) = \{gI(\theta_P,z): g\in \mG \}.$

We assume that $\mG$ is a compact group with a left Haar measure $U$; normalized to be a probability distribution, see e.g., \cite{eaton1989group,wijsman1990invariant}. 
Let $U_{O_I(\theta_P,z)}$ be the uniform measure on $O_I(\theta_P,z)$, induced by the distribution of $GI(\theta_P,z)$ when $G\sim U$.
Then, by taking $V = O_I$, we find that $I$ is a conditional pivot, with the uniform distribution $U_{O_I(\theta_P,z)}$ over $O_I(\theta_P,z)$. See Section \ref{sec:uniform} in the Appendix for details.

We now propose an algorithm for JCR construction, following the general approach for conditional pivots from Section \ref{sec:conditional-pivot}. 
We assume that the orbits $O_I(\theta_P,z)$ 
belong to a space $\mO'$, which is endowed with a sigma-algebra $B_{\mO'}$; alternatively, we may also choose a representative from each orbit in an appropriate measurable way.
We take $L = I$, $Q_{o'} = U_{o'}$,
and let $S: \mO' \to B_{\mI}$ be an assignment of measurable sets such that for $P_O$-a.e. $o' \in \mO'$,  $U_{o'}(S(o'))\ge 1-\alpha$;
where $P_O$ is the distribution of $O_I(\theta_P,Z)$.
Then, we can 
 construct a  $1-\alpha$-JCR for $(\theta,Z)$ via Algorithm \ref{a1}.
 
We then consider a test statistic-based approach.
We consider some $m: \mI\to \R$, mapping $I(\theta,z)$ to $\R$, and possibly depending on $z$. 
Allowing a dependence on $z$ leads to additional flexibility, as we will see from examples.
We then compute
the probability measure $m(U_{O_I(\theta_P,z)})$, 
the distribution of $m(GI(\theta_P,z))$ when $G\sim U$.
As a special case of \eqref{def:conditional-JCR}, we can construct a JCR by 
\begin{equation*}\label{J1}
J(o^*)=\left\{(\theta, z): m(I(\theta, z)) \ge q_{\alpha'}
\bigg(  m(U_{O_I(\theta_P,z)}) \bigg),
\
o(z) = o^*\right\},
\end{equation*}
where $\alpha' = \alpha$ if $\mG$ is infinite, 
and $\alpha' = \lfloor |\mG|\alpha\rfloor/|\mG|$ if $\mG$ is finite. 
There is a slight distinction between the quantiles,
as for a finite group, $I$ has a positive probability  mass function over $U(O_I)$.

\begin{algorithm}
\caption{JCR based on group invariance}
\KwIn{Observation $o^*$; invariant function $I:\Theta \times \mZ \to \mI$; group $\mG$.
}
\KwOut{Joint Coverage Region for $Z$ and $\theta_P$}
Let $J(o^*) = \emptyset$, $\mZ' = \{z \in \mZ, o(Z)=o^*\}$.\\
Choose measurable sets $S: \mO' \to B_{\mI}$ such that for $P_O$-a.e.  $o' \in \mO'$,  $U_{o'}(S(o'))\ge 1-\alpha$.\\
\For{$\theta \in \Theta$ \textnormal{and} $z \in \mZ'$}{
    Compute  the orbit $O_I$ of $I = I(\theta,z)$ under $\mG$\;\lIf{$I(\theta, z) \in S(O_I)$}{Add $(\theta, z)$ to $J(o^*)$}
}
\KwResult{Region $J(o^*)$}
\label{a1}
\end{algorithm}
This JCR construction inherits the coverage guarantee of general JCRs, as shown below.
\begin{theorem}\label{full-group-JCR}
Suppose that for an invariant function $I:\Theta \times \mZ \to \mI$ and for a group $\mG$, $gI(\theta_P,Z) =_d I(\theta_P,Z)$ holds for all $P \in \mP$ and all $g\in \mG$ when $Z\sim P$. 
Then
Algorithm \ref{a1} returns a $1-\alpha$-JCR.
\end{theorem}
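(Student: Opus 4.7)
The plan is to derive this result as a direct application of Theorem \ref{thm:conditional-JCR-piv}, since group invariance supplies exactly the conditional pivot structure required there. I would take $L = I$, $V = O_I$, $\mV = \mO'$, and $Q_{o'} = U_{o'}$, so that the JCR produced by Algorithm \ref{a1} coincides with the construction in equation \eqref{jcp}. Once these identifications are in place, the coverage statement of Theorem \ref{thm:conditional-JCR-piv} applies verbatim, provided I can verify that $I$ really is a conditional pivot given $O_I$ with the claimed uniform distribution on orbits, and that the hypothesis $U_{o'}(S(o')) \ge 1-\alpha$ for $P_O$-a.e.\ $o'$ is explicitly assumed in the algorithm's input specification (line 2 of Algorithm \ref{a1}).

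The central technical step is to justify the conditional uniformity claim: for $P_O$-almost every orbit $o' \in \mO'$, the conditional distribution of $I(\theta_P, Z)$ given $O_I(\theta_P, Z) = o'$ equals $U_{o'}$. The argument proceeds as follows. By assumption \eqref{if}, for every $g \in \mG$, $gI(\theta_P, Z) =_d I(\theta_P, Z)$. Since $\mG$ is compact with normalized Haar measure $U$, integrating over $G \sim U$ independently of $Z$ gives that the joint distribution of $(O_I(\theta_P, Z), GI(\theta_P, Z))$ equals the joint distribution of $(O_I(\theta_P, Z), I(\theta_P, Z))$, because $O_I$ is $\mG$-invariant by definition of the orbit. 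Conditioning on $O_I(\theta_P, Z) = o'$ then shows that $I(\theta_P, Z) \mid \{O_I = o'\}$ has the distribution of $Go'_0 \mid \{O_I = o'\}$ for any representative $o'_0 \in o'$ with $G \sim U$, which by definition is $U_{o'}$. This is the content of the paragraph in Section \ref{sec:uniform} referenced in the excerpt, and it establishes that $I$ is a $V$-conditional pivot in the sense of Section \ref{sec:conditional-pivot}.

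With the conditional pivot verified, applying Theorem \ref{thm:conditional-JCR-piv} yields
\[
\bP_{Z \sim P}\bigl( I(\theta_P, Z) \in S(O_I(\theta_P, Z)) \bigr) \ge 1 - \alpha,
\]
and intersecting with $\{o(Z) = o(Z)\}$ (which holds trivially) shows that $(\theta_P, Z) \in J(o(Z))$ on this event, giving the JCR guarantee.

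I expect the main obstacle to be the measure-theoretic bookkeeping around orbit spaces: verifying that $O_I: \Theta \times \mZ \to \mO'$ is measurable for a suitably chosen $\sigma$-algebra $B_{\mO'}$, that the assignment $o' \mapsto U_{o'}$ is a regular conditional probability, and that $S \circ O_I$ defines a measurable subset of $\Theta \times \mZ$. In the compact-group setting these are standard (one may either quotient by $\mG$ or pick a measurable section/representative), but they are the only nontrivial part of the argument beyond quoting Theorem \ref{thm:conditional-JCR-piv}. The finite-group case avoids these issues entirely, since orbits are finite sets and $U_{o'}$ is the uniform measure on them.
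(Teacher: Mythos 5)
Your proof is correct, but it takes a genuinely different route from the paper's. You reduce Theorem \ref{full-group-JCR} entirely to Theorem \ref{thm:conditional-JCR-piv} by verifying that $I$ is a conditional pivot given its orbit, with conditional law $U_{o'}$; the key step $(O_I, GI) =_d (O_I, I)$ for $G\sim U$ independent of $Z$, followed by disintegration over orbits, is sound (it is essentially the content of the paper's Section \ref{sec:uniform}), and since Algorithm \ref{a1} explicitly requires $U_{o'}(S(o'))\ge 1-\alpha$, the coverage follows at once for finite and infinite groups alike. The paper instead argues only the infinite-group case this way; for finite groups it gives a direct, unconditional rank argument: it shows via the left-coset identity $\{g_ug_l\}_u=\{g_v\}_v$ that the rank $R_j$ of $m(g_jI)$ among $\{m(g_uI)\}_{u\in[K]}$ has the same law for every $j$, hence is uniform on $[K]$ absent ties, and then handles ties by perturbing the statistics with small independent uniforms to show $P(R_j\ge k)\ge (K-k+1)/K$ still holds. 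What the paper's longer argument buys is a self-contained verification that the specific quantile-based choice $S(o')=\{w: m(w)\ge q_{\alpha'}(m(U_{o'}))\}$ with $\alpha'=\lfloor|\mG|\alpha\rfloor/|\mG|$ attains the required orbit-conditional mass even in the presence of ties and nontrivial stabilizers---a point your proof does not address, though it does not need to for the theorem as literally stated about Algorithm \ref{a1}. What your approach buys is uniformity (one argument for both finite and infinite groups), a cleaner match to the statement of the theorem, and an explicit identification of where the only real work lies, namely the measure-theoretic regularity of the orbit map and of $o'\mapsto U_{o'}$, which the paper also leaves informal.
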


The proof is in Section \ref{proof:full-group-JCR} in the Appendix. 
If the group is large, randomization may reduce the computational cost, while ensuring coverage.
\begin{theorem}\label{sample-JCR}
In the setting of \Cref{full-group-JCR}, sample $G_{1:K}$ i.i.d.~from $U$. Define 
$$J_{\gk}(o^*) = \left\{(\theta,z): m(I(\theta,z)) \ge q_{\alpha''}\big( m(g_1I(\theta,z)),\ldots, m(g_KI(\theta,z))\big), o(z) = o^*\right\},$$ where $\alpha'' = \lfloor \alpha(K+1)\rfloor/K$.
Then $J_{\Gk}$ is a $1-\alpha$-joint coverage region:
    \begin{align*}
        \bP_{Z,\Gk}\big( (\theta_P,Z) \in J_{\Gk}(o(Z))\big) \ge 1 - \alpha.
    \end{align*}
\end{theorem}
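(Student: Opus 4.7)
The plan is to run the conformal-style rank argument for the $(K+1)$-tuple $(M_0, M_1, \ldots, M_K)$, where for a fixed $P \in \mP$ we write $\theta = \theta_P$, $I = I(\theta,Z)$, $M_0 = m(I)$, and $M_j = m(G_j I)$ for $j \in [K]$. The coverage claim
\begin{align*}
\bP_{Z, G_{1:K}}\bigl( M_0 \ge q_{\alpha''}(M_1,\ldots,M_K)\bigr) \ge 1-\alpha
\end{align*}
will then follow from two steps: first, establishing that this tuple is exchangeable; second, applying a standard rank bound calibrated by $\alpha'' = \lfloor(K+1)\alpha\rfloor/K$.

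The heart of the proof is the exchangeability of $(I, G_1 I, \ldots, G_K I)$. Let $O = O_I(\theta,Z)$ denote the random orbit under $\mG$. Reusing the observation from the proof of Theorem \ref{full-group-JCR}---that the invariance assumption $gI =_d I$ for every $g\in\mG$, combined with the normalized Haar measure $U$ on $\mG$, makes $I$ uniformly distributed on its orbit---we get $I \mid O \sim U_O$. Because each $G_j \sim U$ is independent of $Z$, and the pushforward of $U$ under $g\mapsto g I_0$ equals $U_O$ for any fixed $I_0 \in O$, we also get $G_j I \mid O, I \sim U_O$, with the $G_j I$'s mutually conditionally independent. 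Assembling, conditional on $O$ the tuple $(I, G_1 I, \ldots, G_K I)$ is iid $U_O$, and hence conditionally---and thus unconditionally---exchangeable. Since $m$ is deterministic, applying it preserves exchangeability, giving the exchangeability of $(M_0, M_1, \ldots, M_K)$.

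The final step is a routine quantile-rank bound. By the paper's quantile convention and a short check that $\lfloor K \alpha'' \rfloor = \lfloor(K+1)\alpha\rfloor$, the value $q_{\alpha''}(M_1,\ldots,M_K)$ equals the $\lfloor(K+1)\alpha\rfloor$-th order statistic of $M_1,\ldots,M_K$. Writing $k=\lfloor(K+1)\alpha\rfloor$, the event $\{M_0 \ge M_{(k)}\}$ is equivalent (modulo ties) to the rank of $M_0$ in $(M_0, M_1, \ldots, M_K)$ being at least $k+1$, whose probability under exchangeability is at least $(K+1-k)/(K+1) \ge 1-\alpha$, using $k \le (K+1)\alpha$. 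I expect the main technicality to be handling possible ties among the $M_j$'s, which could affect the strict-versus-weak direction in the rank count; this is resolved by standard arguments---the weak inequality in the definition of $J_{G_{1:K}}$ works in our favor, so the rank bound is unaffected---and by the assumed measurability of the orbit space needed to justify the conditional-uniform step.
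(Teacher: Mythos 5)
Your proof is correct and follows the same two-step skeleton as the paper's: establish exchangeability of $(m(I), m(G_1I),\ldots,m(G_KI))$, then invoke the standard order-statistic bound calibrated at $\lfloor(K+1)\alpha\rfloor$, with ties absorbed by the weak inequality. The substantive difference is in how the exchangeability lemma is proved. The paper introduces an extra independent $G\sim U$, compares $A=(I,G_1I,\ldots,G_KI)$ with $B=(GI,G_1I,\ldots,G_KI)$, rewrites $B=(I',G_1'I',\ldots,G_K'I')$ with $I'=GI$ and $G_i'=G_iG^{-1}$, and uses Haar invariance together with $GI=_d I$ to get $A=_d B$; since the entries of $B$ are conditionally iid given $Z$, both tuples are exchangeable. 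You instead condition on the orbit $O$ and argue the tuple is conditionally iid $U_O$, which leans on the conditional-uniformity fact $I\mid O\sim U_O$ from Section \ref{sec:uniform} and on the observation that the pushforward of $U$ under $g\mapsto gI_0$ is the same for every $I_0$ in the orbit (bi-invariance of the Haar measure on a compact group, which you should state since the paper only assumes a left Haar measure). Your route yields the slightly stronger conditional-iid conclusion and ties the result explicitly to the conditional-pivot framework, at the price of importing the orbit-conditioning machinery and its measurability caveats; the paper's coupling argument is self-contained, needing only the bare assumption $gI=_d I$ plus Haar invariance. The final quantile and rank computations agree with the paper's.
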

The proof is given in Section \ref{proof:JCR-infinite-group} in the Appendix. 
Randomization 
can be viewed as considering 
the conditional pivot $L(\theta,Z,G_{1:K}) = \big(m(I),m(G_1I),\ldots,m(G_KI)\big)$,
whose entries are exchangeable, 
and thus its distribution is conditionally uniform under the permutation group, given the multiset of its entries. 
Taking the section over $\zc,\xte,g_{1:K}$,
we obtain the JCR from \Cref{sample-JCR}.

\subsection{Split Version}\label{sec:split-JCR-invariance}
The split JCR construction from Section \ref{sec:split-JCR}
can lead to computational savings under group invariance.
Suppose
that $z=(\zc,\zte) \in \mZ$, 
and 
consider a test statistic $m: \mI \to \R$;
this can potentially depend on  training data, a dependence we do not display since $\zt$ is suppressed.
As a special case of the methods from Section \ref{sec:split-JCR}, we propose the JCR in reduced form
\begin{align}\label{eq:split-group-inv}
    \tilde J(o^*) = \left\{(\theta,\zte) \in \Theta \times \mZte: (\zc,o_0(\zte)) = o^*,\, m(I(\theta,\zc,\zte)) \ge q_{\alpha'}
\big(m(U_{O_I(\theta,\zc,\zte)})\big) \right\},
\end{align}
where $\alpha' = \alpha$ if $\mG$ is infinite, 
and $\alpha' = \lfloor |\mG|\alpha\rfloor/|\mG|$ if $\mG$ is finite (see Algorithm \ref{a2}).
For each new input $o_0(\zte^i)$, 
we only need to search over $\mZte^* = \{\zte \in \mZte: o_0(\zte) = o_0(\zte^i)\}$ to construct the JCR, as $\zc$ is fixed. 
This can improve efficiency for a series of test datapoints $\zte$.
We show that this algorithm returns a valid JCR.

\begin{algorithm}
\caption{Split JCR under group invariance}
\KwIn{Observations $\zc, o_0(\zte^1), \ldots, o_0(\zte^n)$, invariant function $I:\Theta \times \mZ \to \mI$, group of transforms $\mG$. 
}
\KwOut{JCRs for $(\theta,\zte^i)$, for $i\in [n]$.}
{Choose a test statistic $m: \mI \to \R$.}\\
\For{\textnormal{each input} $o_0(\zte^i)$}{
Let $o^*=(\zc,o_0(\zte^i))$. Set $J(o^*) = \emptyset,\, \mZte^* = \{\zte \in \mZte: o_0(\zte) = o_0(\zte^i)\}$\;
\For{$\theta \in \Theta$ \textnormal{and} $\zte \in \mZte^*$}{
     Compute the probability measure $m(U_{O_I(\theta,\zc,\zte)})$, i.e., the distribution of $m(GI(\theta,\zc,\zte))$ when $G\sim U$\;\lIf{$m(I(\theta,\zc,\zte)) \ge q_{\alpha'}
\big(m(U_{O_I(\theta,\zc,\zte)})\big)$}{Add $(\theta,\zte)$ to $\tilde J(o^*)$}
}
Return region $\tilde J(o^*)$.
}
\label{a2}
\end{algorithm}

\begin{proposition}[Split JCR]\label{general JCR-split}
Suppose that for an invariant function $I:\Theta \times \mZc \times \mZte \to \mI$ and for a group $\mG$, $gI(\theta_P,\Zc,\Zte) =_d I(\theta_P,\Zc,\Zte)$ holds for all $P \in \mP$ and all $g\in \mG$ when $(\Zc,\Zte) \sim P$. Then 
Algorithm \ref{a2} is a $1-\alpha$-joint coverage region. 

\end{proposition}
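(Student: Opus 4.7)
The plan is to derive Proposition \ref{general JCR-split} as a direct corollary of Theorem \ref{full-group-JCR}, by viewing Algorithm \ref{a2} as a specialization of Algorithm \ref{a1} to the joint data $Z = (\Zc, \Zte)$ equipped with observation function $o(z) = (\zc, o_0(\zte))$ and the test-statistic-based choice of measurable sets $S(O) = \{I \in \mI: m(I) \ge q_{\alpha'}(m(U_O))\}$. The group-invariance hypothesis $gI(\theta_P, \Zc, \Zte) =_d I(\theta_P, \Zc, \Zte)$ for all $g \in \mG$ is precisely the hypothesis of Theorem \ref{full-group-JCR} applied on $\Theta \times \mZc \times \mZte$, and the choice of $S$ satisfies $U_{O}(S(O)) \ge 1 - \alpha$ for each orbit $O$ by definition of the quantile, with the floor adjustment $\alpha' = \lfloor |\mG| \alpha \rfloor / |\mG|$ absorbing the atom at the quantile when $\mG$ is finite.

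First I would verify that Algorithm \ref{a2} produces the $\Theta \times \mZte$-section of the unreduced JCR from Algorithm \ref{a1}. For any observed value $o^* = (\zc^*, o_0^*)$, the constraint $o(z) = o^*$ forces $\zc = \zc^*$ and $o_0(\zte) = o_0^*$, so the inner enumeration over $(\theta, z)$ with $o(z) = o^*$ in the unreduced algorithm coincides with the enumeration over $(\theta, \zte)$ with $o_0(\zte) = o_0^*$ in Algorithm \ref{a2}; the pivotal inclusion condition $m(I(\theta,z)) \ge q_{\alpha'}(m(U_{O_I(\theta,z)}))$ is unchanged under this identification. Hence the output of Algorithm \ref{a2} is precisely the image of the output of Algorithm \ref{a1} under the projection $(\theta, \zc, \zte) \mapsto (\theta, \zte)$, in the sense of the reduced JCR described after Definition 1.

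Finally, Theorem \ref{full-group-JCR} delivers $\bP_{Z \sim P}\bigl((\theta_P, Z) \in J(o(Z))\bigr) \ge 1 - \alpha$. Because $\Zc$ is determined by $o(Z)$ and enters every tuple of $J(o(Z))$ identically, the event $\{(\theta_P, Z) \in J(o(Z))\}$ coincides with $\{(\theta_P, \Zte) \in \tilde J(o(Z))\}$, giving the stated $1-\alpha$ coverage for the reduced region. I expect the main subtlety to be technical rather than conceptual: one must confirm that the section operator applied to the measurable JCR produces a measurable region, and check that any dependence of $m$ on training or calibration data is still compatible with the conditional-pivot framework used in Theorem \ref{full-group-JCR}. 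Both follow from the standing measurability assumptions in Section \ref{JCR-framework} and the fact that $m$ enters only through the pivotal test statistic, so no genuinely new argument is required beyond the one already used for the unreduced case.
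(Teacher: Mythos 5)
Your proof is correct, and it reaches the conclusion by a slightly different reduction than the paper. The paper disposes of Proposition \ref{general JCR-split} in one line by appealing to the conditional-pivot results of Section \ref{sec:split-JCR}: the invariant function $I$ is a conditional pivot given its orbit $V=O_I$, with conditional law $U_{O_I}$, so the split JCR \eqref{eq:split-group-inv} is an instance of \eqref{eq:split-test-stat} and its coverage follows from Theorem \ref{thm:conditional-JCR}. You instead route through Theorem \ref{full-group-JCR}: you treat $Z=(\Zc,\Zte)$ as the full data with observation map $o(z)=(\zc,o_0(\zte))$, check that the test-statistic sets $S(O)=\{i\in\mI: m(i)\ge q_{\alpha'}(m(U_O))\}$ have $U_O$-mass at least $1-\alpha$, and then observe that Algorithm \ref{a2} outputs the reduced form of Algorithm \ref{a1}'s region, on which the projection $(\theta,\zc,\zte)\mapsto(\theta,\zte)$ is injective because $\zc$ is pinned by $o^*$, so the two membership events coincide. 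Both routes are one-step reductions to previously proved coverage theorems and are equivalent in content; yours makes the ``split equals section of unsplit'' step explicit (which the paper leaves implicit in its discussion of reduced forms in Section \ref{JCR-framework}), while the paper's route makes clearer that nothing beyond the conditional-pivot framework is needed. The only point worth stating with a little more care in your write-up is the quantile bound for a finite group: with the paper's convention $q_{\alpha'}=v_{(\lfloor K\alpha\rfloor)}$ over the $K=|\mG|$ orbit values counted with multiplicity, at least $K-\lfloor K\alpha\rfloor+1$ of them are $\ge q_{\alpha'}$ even in the presence of ties, which is exactly the counting the paper carries out via ranks in the proof of Theorem \ref{full-group-JCR}; your phrase ``by definition of the quantile'' is implicitly doing that work.
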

The proof follows from 
the results for conditional pivots in Section \ref{sec:split-JCR}.

{\bf Randomization.} As before, we can replace computing the quantile over the entire orbit by that over only an i.i.d. sample $G_{1},\ldots,G_{K}$ from $U$. With $g_{1:K}=(g_{1:K})$, we obtain a JCR
\begin{align}\label{eq:sample-split}
    \tilde J_{\gk}(o^*) = \left\{(\theta,\zte):(\zc,o_0(\zte))=o^*,\, m(I(\theta,\zc,\zte)) \ge q_{\alpha''}\big( m(g_iI(\theta,\zc,\zte)),i\in[K]\big)\right\}
\end{align}
similar to the one from Theorem \ref{sample-JCR}.
We have argued in Section \ref{sec:split-JCR} 
that the main computational cost in split JCRs
is computing the appropriate quantiles. 
Here we illustrate that 
this becomes simpler
under group invariance.
Given $\zc$, 
and sampling elements $G_{1:K} \sim  U$,
we can compute the required quantile
for any new $\zte$ based on
$m(G_1V(\theta,z))$, $\ldots,m(G_KV(\theta,z)))$.
Thus, we do not need to sample new elements from the orbit induced by $\zte$, and can instead re-use $G_i$, $i\in[K]$.

\subsection{Examples}\label{conditional-invariance-JCR-example}
In this section, we show how group invariance can be used to construct JCRs.

\subsubsection{Regression}

We return to the regression setting from Example \ref{example:lr-pivot} and outline an approach to construct JCRs based on weaker assumptions.

\begin{example}[Linear regression]\label{lr}
We consider the regression setting from Example \ref{example:lr-pivot}, but
now assume only that $\ep_1,\ldots,\ep_n,\epte$ are exchangeable. 
Specifically, we denote
$I(\theta, z) = Y^+-X^+\theta$, 
and consider the  permutation group 
$\mG = S_{n+1}$ on $n+1$ elements.
This group acts by permuting the entries of $g$, 
represented via $(n+1)\times(n+1)$ permutation 
matrices $g$. 

Since $I = (\ep_1,\ldots,\ep_n,\epte)^\top$ is an invariant function---in the sense of \eqref{if}---under the permutation group, 
we can consider arbitrary test statistics $m$ of $I$.
For instance, we may take the absolute covariance $m(I) = \left|\sum_{i \in N} (x_i - \overline{x})(I_i - \overline{I})\right|/n$, where $N=\{1,2,\ldots,\mathrm{te}\}$ and $\overline{x}, \overline{I}$ denotes the mean of $x_i$ and $I_i$ (respectively) over $i\in N$.
Since $\mG$  has $(n+1)!$ elements, which can be large, we can randomize and sample $K$ group elements $G_{1:K}$ from $\mG$. 
The corresponding randomized JCR from Theorem \ref{sample-JCR} is thus
    $$
    \left\{(\theta,\yte): m(Y^+-X^+\theta) \leq q_{1-\alpha''}\bigg(m\big(g_i(Y^+-X^+\theta)\big),i\in[K]\bigg)\right\}.
    $$
This permutation-based JCR is illustrated in Section \ref{sec:comparison-lr}.

  Alternatively, we can make the even weaker assumption
  of invariance under subgroups of $\mG$. 
  For instance, if we only assume that the noise is invariant under all cyclic shifts
  $(\ep_1, \ldots, \ep_n, \epte) \to (\ep_k, \ldots, \ep_n, \epte, \ep_1, \ldots, \ep_{k-1})$ for $k\ge 1$, 
  we can take the corresponding cyclic shift group $\mG_1$ acting on $(\ep_1,\ldots,\ep_n,\epte)$. 
  Considering a test statistic $m(I_1,\ldots,I_n,I_{\mathrm{te}}) = f(I_{\mathrm{te}})$, for some function $f$,
  a two-sided JCR turns out to depend on the empirical quantiles of $f$ over the coordinates:
    \begin{align}\label{jcyc}
   \tilde{J} = \left\{(\theta,\yte) : q_{\alpha_1}\big(f(y_i - x_i^\top\theta),\, i\in [n] \big) \leq f(\yte - \xte\theta) \leq q_{\alpha_2}\big(f(y_i - x_i^\top\theta),\, i\in [n] \big)\right\}.
\end{align}
This coincides 
with the JCR  under full permutation invariance;
but it is valid more generally under cyclic-shift invariance.
\end{example}

\subsubsection{Signal-plus-noise model}
We next consider certain signal-plus-noise models, aiming to jointly provide confidence regions for the signal, 
and prediction regions for future observables from the model.
\begin{example}\label{example-spn}
  Consider
  $n$ independent observations $X_i \in \R^{p \times 1}$, $i\in [n]$ such that $X_i = \theta + \ep_i$ for a signal parameter $\theta \in \Theta\subset \R^{p \times 1}$, and for noise vectors $\ep_i$.
  Using these observations, we are interested to construct a JCR for $\theta$ and a future independent observation $\Xte = \theta + \epte$.
  Thus, 
  we have the full data $z=(x_1,\ldots,x_n,\xte)$ and the observed data $o(z)= (x_1,\ldots,x_n)$. 
  While one could consider several types of invariance, here 
  we assume spherically distributed noise \citep{kai1990generalized,gupta2012elliptically,fang2018symmetric}, 
  i.e., that 
   for all $i\in \{N\}$ and
  for any orthogonal matrix $O$ belonging to the orthogonal group $\mG_0 = O(n+1)$,
  $\ep_i =_d O \ep_i$.
  Then the noise is invariant under the direct product $\mG = \mG_0^{n+1}$.
  However, the noise distribution can vary across observations.

  We re-arrange the model as $X^+ = 1_{n+1}\theta^\top + E^+$, where $X^+=(x_1^\top;\ldots;x_n^\top;\xte^\top)^\top$, $E^+=(\ep_1^\top;\ldots;$ $\ep_n^\top;\epte^\top)^\top$. We also denote $X=(x_1^\top;\ldots;x_n^\top)^\top$, $E=(\ep_1^\top;\ldots;\ep_n^\top)^\top$.
    We consider 
    the invariant function
    $I(\theta,z) = X^+ - 1_{n+1}\theta^\top$
    and the
    test statistic $m(I) = \| I^\top 1_{n+1}\|_\infty/(n+1)$.
    A randomized JCR is obtained by sampling $G_1,\ldots, G_K$ i.i.d.~from the Haar measure over $\mG$:
       \begin{align*}
        J_{G_{1:K}}(x_1,\ldots,x_n) = \{(\theta,\xte): m(I) \leq q_{1-\alpha''}\left(m(G_iI), i\in[K] \right)  \}.
    \end{align*}

\end{example}

Above, we relied on  orthogonal invariance. 
If we only assume the weaker condition
that the noise vectors have independent sign-symmetric entries, 
then we can use the sign-flip matrix group $\mG_0 = \{\diag(a_1, \ldots, a_{p+1}), a_i\in \{\pm 1\}, i\in [p+1]\}$. 
However, a limitation is that
the prediction component of the JCR is less informative. 
Nevertheless, since we only know that the noise is symmetrical around zero and $\epte$ is independent of $\ep_i$, $i\in [n]$, it is reasonable to be conservative in the prediction component without additional information.

\subsection{Adequate Sets under Group Invariance}\label{sec:adequate-set-invariance}

Although split JCRs can be faster to compute under group invariance, 
in certain cases it might still be intractable to compute $q_{\alpha}\big(m(U_{O_I(\theta,z)})\big)$ or $m(g_iI(\theta,z))$ for $g_i \in \mG, i\in[K]$. 
Moreover, in the split setting, 
we need to compute this for each $\zte$.
We now show how to use
adequate sets
from Section \ref{sec:adequate-set} under group invariance to reduce the computational cost.

We assume that the action of $\mG$ decomposes under the adequate map
$A: \Theta \times \mZte \to \mathcal{A}$:
 for all $g\in\mG$, there is a function $g' = g'(g): \Theta \times \mZc \times \mathcal{A} \to \mI$
depending on $g$, 
such that 
for all $\theta\in\Theta$, $z = (\zc,\zte) \in \mZ$ and $g\in\mG$, the group
action has the structure
$$gI(\theta,z) = g'\left[\theta, \zc, A(\theta,\zte)\right].$$ 
Thus, the adequate map $A$ 
captures the dependence of the action of $g$ 
on $\zte$.
Then evaluating JCRs reduces to computing regions to which $A$ belongs. Intuitively, we aim to obtain a region for $A$ that obeys \eqref{eq:split-group-inv}. We construct this as an adequate set $W:\Theta\times \Zc \to B_{\mA}$, for $A$ such that,
for all $a\in \mA$,
\begin{align}\label{ade-inv}
a \in W(\theta,\zc)
\textnormal{ iff } 
m(g'[\theta,\zc,a]) \ge q_{\alpha'}\left(m(U_{O_I(\theta,z)})\right).
\end{align}
Then, using the adequate set $W(\theta,\zc)$, we can construct a JCR for $\yte$, 
computed for each new input $\xte$ by querying $A(\cdot)$:
\beq\label{ade}
J(\zc,\xte) = \left\{ (\theta,\xte,\yte): A(\theta,(\xte,\yte)) \in W(\theta,\zc) \right\}.
\eeq

Generally, we want $A$ to have a simple form
(e.g., a linear function taking values in $\mathcal{A} = \mathbb{R}$). 
We give an example below.

\begin{example}
    We consider one-dimensional regression (Example \ref{lr}),
    assuming the noise is invariant under the cyclic shift group $\mG$, 
    with $|\mG|=n+1$, acting on $I = (y_1-x_1^\top\theta,\ldots,y_n-x_n^\top\theta,a)^\top$, where  $a =
    A(\theta,(\xte,\yte)) = \yte - \xte^\top\theta \in \R$.
        Let $m(I) = \left|I_{n+1} - \sum_{i=1}^{n+1} I_i/(n+1)\right|$. 
    For a group element $g \in \mG$, 
    such that the last coordinate of $gI$ is $I_j$,
    $m(I) \leq m(gI)$
    amounts to
    \begin{align*}
     \left|a- \frac{(\sum_{i=1}^{n} I_i + a)}{n+1}\right| \leq \left|I_j- \frac{(\sum_{i=1}^{n} I_i + a)}{n+1}\right|.
\end{align*}
Let $W_g \subset \R$ 
denote the set of $a\in \R$ satisfying the above inequality. 
For $n>1$, one can verify directly that this is an interval, 
since the coefficient $n/(n+1)$ of $a$ on the left-hand side is greater than the corresponding coefficient $1/(n+1)$ on the right. 
Then an adequate set for $a \in \R$ is
$$
W(\theta,\zc) = \left\{ (\xte,\yte) : \yte - \xte^\top\theta \in W_g \textnormal{ for at least } \lfloor \alpha|\mG|\rfloor \textnormal{ group elements } g \in \mG \right\}.
$$
One can verify that \eqref{ade-inv} holds.
The associated JCR is
    \begin{align*}
        \tilde J(\zc,\xte) = \big\{(\theta,\yte): \yte - \xte^\top\theta \in W(\theta, \zc)\big\}.
    \end{align*}
\end{example}   
After computing $W$, 
one can compute this for a new test feature $\xte$, by checking when the condition $\yte - \xte^\top\theta$ holds.
If we can find $W$ in a closed form, this may be done analytically.

{\bf Randomization.} 
For randomization with an adequate set,
we assume that a finite number of transforms $\{g_{1:K}\}$ are obtained via sampling, 
and let $g_0$ be the identity element of $\mG$. 
We aim to compute \eqref{eq:sample-split}
for all given $\xte \in \mathcal{X}_{\mathrm{te}}$.
To begin, for each transform $g_i, i \in [K]$, we compute the 
set of 
$a \in W_i(\theta, \zc) \subset \mathcal{A}$ for which
$m(g_i'\left[\theta, \zc, a \right]) \leq m(g_0'\left[\theta, \zc, a \right]).$
Then, we can construct the adequate set $W(\theta, \zc)$, which includes those $a \in \mathcal{A}$ that appear in more than $\lfloor{\alpha (K+1)}\rfloor$ sets $\{W_i(\theta, \zc)\}_{i\in[K]}$.
With the adequate set $W$, we construct the JCR for any $\xte$ via \eqref{ade}.
     The full procedure
    is shown in Algorithm \ref{a3}.
We show below that this returns a valid $1-\alpha$ prediction region.

\begin{algorithm}
\caption{JCR based on adequate sets under group invariance}
\label{a3}
\KwIn{Observations $\zc$, $\xte \in \mathcal{X}_{\mathrm{te}}$, invariant function $I:\Theta \times \mZ \to \mI$, transforms $\gk$ in $\mG$.}
\KwOut{Joint Coverage Region for $Z$ and $\theta_P$ for each input $\xte \in \mathcal{X}_{te}$.}
Set $J(o^*) = \emptyset, \mZ^* = \{z \in \mZ: o(z) = o^*\}$\\
Choose a test statistic $m: \mI \times \mZt \to \R$, which may depend on  $\zt$.\\
\For{$\theta \in \Theta$}{
    \For{$i \in [K]$}{
        Compute the set $W_i(\theta, \zc) \subset \mathcal{A}$ of 
        $a$ for which
        $m(g_i'\left[\theta, \zc, a \right]) \leq m(g_0'\left[\theta, \zc, a \right])$
    }
    Construct $W(\theta, \zc)$, containing $a \in \mathcal{A}$ that appear in more than $\lfloor{\alpha (K+1)}\rfloor$ of the sets $\{W_i(\theta, \zc)\}_{i\in[K]}$.\\
}    
\For{$\xte \in \mathcal{X}_{\mathrm{te}}$}{
    Compute $J(\zc,\xte) = \left\{ (\theta,\xte,\yte): A(\theta,(\xte,\yte)) \in W(\theta,\zc) \right\}.$
}
\end{algorithm}

\begin{theorem}
 Algorithm \ref{a3} returns a 
$1 - \alpha$-joint coverage region. 
\end{theorem}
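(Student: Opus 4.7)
The plan is to show that the region $J(\zc,\xte)$ constructed by Algorithm \ref{a3} coincides with the randomized split JCR from \eqref{eq:sample-split} for each fixed $\xte$, so its validity is inherited from Proposition \ref{general JCR-split} combined with the randomization argument of Theorem \ref{sample-JCR}.

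First I would unfold the definition of $W_i(\theta,\zc)$: by construction, with $a = A(\theta,\zte)$, the membership $a \in W_i(\theta,\zc)$ is equivalent to $m(g_i'[\theta,\zc,a]) \leq m(g_0'[\theta,\zc,a])$. Using the decomposition hypothesis $gI(\theta,z) = g'[\theta,\zc,A(\theta,\zte)]$ together with the fact that $g_0$ is the identity, this reduces to $m(g_i I(\theta,z)) \leq m(I(\theta,z))$. Therefore the adequate-set condition $a \in W(\theta,\zc)$---that $a$ lies in more than $\lfloor \alpha(K+1)\rfloor$ of the $W_i$---translates into the count condition $|\{i \in [K] : m(g_i I(\theta,z)) \leq m(I(\theta,z))\}| > \lfloor \alpha(K+1)\rfloor$.

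Applying the quantile convention $q_c(v_{1:K}) = v_{(\lfloor Kc\rfloor)}$ with $\alpha'' = \lfloor \alpha(K+1)\rfloor/K$, this count condition is equivalent (up to tie-handling) to $m(I(\theta,z)) \geq q_{\alpha''}(m(g_i I(\theta,z)),\, i\in[K])$, which is exactly the condition defining \eqref{eq:sample-split}. Hence $J(\zc,\xte)$ agrees with the $\xte$-section of the randomized split JCR, and coverage at level $1-\alpha$ follows because the exchangeability of $\bigl(m(I(\theta_P,Z)), m(G_1 I(\theta_P,Z)), \ldots, m(G_K I(\theta_P,Z))\bigr)$---a consequence of the invariance $gI(\theta_P,Z) =_d I(\theta_P,Z)$ together with $G_{1:K}$ being iid Haar-uniform on $\mG$---delivers the desired bound exactly as in the proof of Theorem \ref{sample-JCR}.

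The main subtlety is the careful handling of ties and of the strict-versus-weak inequalities between the ``more than $\lfloor \alpha(K+1)\rfloor$'' threshold in the algorithm and the $q_{\alpha''}$ condition in \eqref{eq:sample-split}; any slack merely makes the region more conservative, still preserving validity. The remaining checks---joint measurability of $W_i$ and $W$ in $(\theta,\zc)$, inherited from the measurability of $I$, $m$, $A$, and $g'$---are routine.
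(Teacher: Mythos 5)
Your argument follows the same route as the paper's. The paper treats this theorem as a special case of Theorem \ref{thm:adequate-set}: since $a\in W(\theta,\zc)$ is, by construction, a re-encoding of the defining condition of the randomized split JCR, the event $\{A(\theta,\Xte,\Yte)\in W(\theta,\Zc)\}$ coincides with the original coverage event, whose probability is at least $1-\alpha$ by the group-invariance/exchangeability argument of Theorem \ref{sample-JCR}. Your unfolding of $W_i$ via $g_iI(\theta,z)=g_i'[\theta,\zc,A(\theta,\zte)]$ and the identification with \eqref{eq:sample-split} is exactly this reduction, made explicit.

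The one place where your reasoning goes wrong is the tie/threshold discussion at the end. Setting $N=|\{i\in[K]:m(g_iI(\theta,z))\le m(I(\theta,z))\}|$, the condition $m(I)\ge q_{\alpha''}\bigl(m(g_iI),\,i\in[K]\bigr)$ with $\alpha''=\lfloor\alpha(K+1)\rfloor/K$ reads $N\ge\lfloor\alpha(K+1)\rfloor$, while membership in ``more than $\lfloor\alpha(K+1)\rfloor$'' of the $W_i$ reads $N\ge\lfloor\alpha(K+1)\rfloor+1$. The latter defines a \emph{smaller} region, and shrinking a region can only lower its coverage probability; so this discrepancy is anti-conservative, not ``more conservative, still preserving validity'' as you assert. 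Concretely, in the tie-free exchangeable case $N$ is uniform on $\{0,\dots,K\}$, so $P\bigl(N\ge\lfloor\alpha(K+1)\rfloor+1\bigr)=(K-\lfloor\alpha(K+1)\rfloor)/(K+1)$, which is strictly below $1-\alpha$ (e.g., $K=9$, $\alpha=0.1$ gives $0.8$ rather than $0.9$). The resolution is to read the threshold as ``at least $\lfloor\alpha(K+1)\rfloor$,'' which makes the algorithm's region identical to \eqref{eq:sample-split} and lets the coverage bound go through; the strict-versus-weak distinction cannot be dismissed as harmless slack. With that correction, the rest of your proof is sound.
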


This result can be viewed as a special case of Theorem \ref{thm:adequate-set}, and its proof is given in Section \ref{proof:adequate-set}.
Example \ref{example:adequate-set-spn} illustrates it for multivariate regression.

\begin{example}
\label{example:adequate-set-spn}
    We consider a multivariate multiple-output regression model as an extension of Example \ref{example-spn}. Assume that we have $n$  observations $x_i \in \R^{k \times 1}, y_i \in \R^{p \times 1}$, $i\in [n]$ from the model $Y_i = \theta^\top x_i + \ep_i$ for 
    $i \in N =\{1,2,\ldots,n,\mathrm{te}\}$. 
    Here $\theta \in \R^{k \times p}$ is the unknown regression parameter.
    With $\zc = \{(x_i,y_i) \textnormal{ for } i\in [n]\}$, we are interested in obtaining a JCR jointly for $\theta$ and new observations $\Yte$, for each of a sequence of $\xte$-s. 
    
For illustration, as in Example \ref{example-spn}, we assume that 
the noise $\ep_i \in \R^{p \times 1}$, $i \in N$,
are independent 
and orthogonally invariant.
We then consider a test statistic $m(I) = \| I^\top 1_{n+1}\|_\infty/(n+1)$, where $I(\theta,z) = Y^+ - \theta X^+$ is invariant under $\mG$. 
As discussed in Example \ref{example-spn}, a randomized JCR is
       \begin{align*}
        J(x_{1:n},y_{1:n},\xte) = \{(\theta,\yte): m(I) \leq q_{1-\alpha''}\left(m(g_iI), i\in[K] \right)  \},
    \end{align*}
    where $G_{1:K}$ are sampled i.i.d.~from the Haar measure over $\mG$.
    
    Here we describe a corresponding adequate set for reducing the computational cost. We take the adequate map $A(\theta,\zte) = \yte - \theta^\top \xte$, and denote $I_i = y_i - \theta^\top x_i$ for simplicity. For each $g_j$, $j\in[K]$, we consider
    \begin{align*}
    W_i(\theta,\zc) = 
        &\left\{\zeta: \max\{(|I_j^\top 1_{n+1}|)_{j\in[n]},|\zeta^\top 1_{n+1}|\}    \leq
        \max\{(|(g_jI_1)^\top1_{n+1}|)_{j\in[n]},| (g_j\zeta)^\top1_{n+1}|\} \right\}.
    \end{align*}
    Only $|\zeta^\top 1_{n+1}|$ and $|(g_j\zeta)^\top 1_{n+1}|$ depend on $\zeta$ when $\theta$ is fixed, thus we can find the region $W_i(\theta,\zc)$ for $\zeta$ by simple algebra. Then, by considering $\zeta\in \A$ belonging to more than $\lfloor \alpha (K+1) \rfloor$ sets $\{W_i(\theta,\zc)\}_{i \in [K]}$, we find the adequate set $W(\theta,\zc)$.
    Finally, for each input $\xte$, we can find the corresponding JCR via \eqref{ade},
    which may save computation
    when we have a large number of test points $\xte$.
    
\end{example}

The next example is an extension of Example \ref{lr}.

\begin{example}[{One-dimensional Regression:  Spherical Noise}]\label{linear-regression-adequate-set}
In Example \ref{lr}, we assumed that the noise $(\ep_1^\top;\ldots;\ep_n^\top;\epte^\top)^\top$ is jointly invariant under a  group $\mG$
with a linear matrix representation.
To illustrate adequate sets,
we partition the $(n+1)\times(n+1)$ matrix $g_i$ as $g_i = (g_{i1};g_{i2})$, where $g_{i2}$ is the last column of $g_i$. We note
\begin{align*}
    g_i(Y^+ - X^+\theta) = g_{i1}(Y-X\theta) + g_{i2}(\yte - \xte^\top \theta) =: g'[\theta,(X,Y),\yte-\xte^\top \theta].
\end{align*}
For given $\theta$, 
we take $\mathcal{A} = \mathbb{R}$ and consider the adequate map $A(\theta,(\xte,\yte)) = \yte - \xte^\top  \theta \in \mathbb{R}$. 
Then Algorithm \ref{a3} proceeds as follows.
First, for $i \in [K]$ and some test statistic $m$ (e.g., $m(\cdot)=\|\cdot\|_\infty$), we compute the region $W_i(\theta, \zc)$ of $a \in \mathbb{R}$ that satisfy 
    $$m(g_{i1}(Y-X\theta) + g_{i2}a) \leq m(g_{01}(Y-X\theta) + g_{02}a) = m(Y-X\theta + \eta a),$$ where $\eta = (0,\,0,\ldots,\,0,1)^\top \in \R^{n+1}$. 
Next, we compute the adequate set $W(\theta,\zc)$ to include those $a \in \mathbb{R}$ that belong to at least $\lfloor{\alpha (K+1)}\rfloor$ sets $\{ W_i(\theta, \zc)\}_{i\in[K]}$.
Then, given $\xte$, we find a prediction region for $\yte$ via 
$J(\theta, o(Z)) = \{a + \xte^\top \theta |\,\, a \in W(\theta, \zc)\}$.

This can be extended to any regression model $Y = f(\theta,x) + \ep$ when $f$ is separable in the sense that for $X = (X_1,\ldots,X_n)$, we have---overloading notation---$f(\theta,X) = (f(\theta,X_1),\ldots,f(\theta,X_n))$. 
For instance, consider a simple neural network $f(\theta,x) = B_1\sigma(B_2\sigma(\ldots \sigma(B_lx)))$, where $\sigma$ denotes the ReLU activation with $\sigma(x) = \max\{0,x\}$, 
for all $x\in \R$; extended elementwise to matrices.
Here, $\theta = (B_1,\ldots,B_l)$ 
where $B_t \in \mathbb{R}^{m_t \times m_{t+1}}$ are the weight matrices for $t=1,\ldots,l$. 
Specifically, $m_{l+1} = p$ is the row-dimension of each input $x \in \mathbb{R}^{p \times q}$, for some $q$. 
We then take $I(\theta,Z) = (
Y - f(\theta,X),\yte - f(\theta,\xte))$. 
For $g_{i1}$ and $g_{i2}$ as in the linear case,
$$g_iI(\theta,Z) = g_{i1} (Y - f(\theta,X)) + g_{i2} (\yte - f(\theta,\xte)),$$ 
and thus we can find $W(\theta,\zc)$ as before. 
Thus, we find a prediction region for $\yte$ via
$T(\theta, o(Z)) = \{a + f(\theta,\xte) |\, a \in W(\theta, \zc)\}$.
\end{example}
    

\section{A Case Study in Linear Models}\label{sec:simulation}

In this section, we present a case study of JCRs
in linear models.
We compare several aspects, such as the coverage rate, showing that permutation-based JCRs are empirically valid under weaker assumptions than JCRs based on stronger invariance.
We also compare the shapes (size, boundedness, height and width) of JCRs. 
 

\subsection{The Shapes of JCRs Based on Spherical Invariance}\label{sec:shape-example}

\subsubsection{Normal Mean Problem}\label{sec:toy-model}

We start with a one-dimensional normal mean problem
to illustrate the shape of various JCRs.
Suppose for simplicity that 
we have independent observations $y_i \sim \N(\theta,1)$ 
for $i\in [n]$. 
We aim to find a JCR for $\theta$ and 
for an independent future observation 
$\yte \sim \N(\theta,1)$.
For any $\wte$,
 we have a pivot
$\sum_{i \in N} y_i + \wte \yte - (n+\wte)\theta \sim \N\left(0,n+ \wte^2\right).
$
Thus, we obtain the pivotal JCR \eqref{pivjcr} 
\begin{align*}
    \bigg\{(\theta,\yte): 
    \left|\wte\yte - (n+\wte)\theta +  \sum_{i=1}^n y_i\right| \leq \sqrt{n + \wte^2}q_{1-\alpha/2}
    \bigg\}.
\end{align*}
Further, 
denoting $\omega = 1/\wte \neq 0$, this equals
\begin{align}\label{eq:toymodel-JCR1}
    \bigg\{(\theta,\yte): 
    \yte -  \left(1+\cfrac{n}{\omega}\right)\theta  + \frac{1}{\omega}\sum_{i=1}^n y_i \in 
    \sqrt{1 + \cfrac{n}{\omega^2}} \cdot [q_{\alpha/2}, q_{1-\alpha/2}]
    \bigg\}.
\end{align}
This parametrization directly controls the shape of the JCR.
When $\omega$ increases, 
the JCR has a shorter prediction component length $2\sqrt{1 + n/\omega^2}q_{1-\alpha/2}$. 
As $\omega \rightarrow \infty$, the JCR
becomes approximately 
$|\yte - \theta| \in [q_{\alpha/2},q_{1-\alpha/2}]$,
whose bounds are the normal quantiles. 
The observations $y_i$, $i \in [n]$ do not play a role in this limit.

Further, when 
$\omega = -n$, the region is parameter-free, and 
is equivalent to a pure prediction region generated by the ancillary statistic
$\yte - \frac{1}{n}\sum_{i=1}^n y_i \sim \N\left(0,1+\frac{1}{n}\right).$
When $\omega \neq -n$, 
the confidence component of the JCR is:
\begin{align}\label{eq:toymodel-JCR2}
    \left\{(\theta,\yte): 
    \left|\theta - \frac{\omega}{\omega + n}\yte + \frac{1}{\omega + n}\sum_{i=1}^n y_i\right| \leq \sqrt{\frac{\omega^2 + n}{(\omega + n)^2}} q_{1-\alpha/2}
    \right\}.
\end{align}
When $\omega \rightarrow 0$, this becomes approximately $|\theta - \sum_{i=1}^n y_i / n| \leq q_{1-\alpha/2}/\sqrt{n}$, which is the standard two-sided normal confidence interval for $\theta$.

For a fixed $\omega < \infty$, as the sample size $n$ increases, both 
the slope of $\theta$ in \eqref{eq:toymodel-JCR1} 
and the width $2\sqrt{1 + n/\omega^2}q_{1-\alpha/2}$  of the vertical section increase. 
Intuitively, if we use more datapoints in \eqref{eq:toymodel-JCR2} while keeping the weight $\omega$ of $\yte$ unchanged, the relative influence of $\yte$ decreases,
as reflected in the slope $\omega/(\omega + n)$, yielding a less informative prediction component.
On the other hand, more data causes the region's confidence component to shrink, as expected.

Figure \ref{fig:shape-example1} shows an example with $n=100$ 
and $\omega = -n,0,1,0.1n,\infty$, 
which lead to very different JCRs. Specifically, $\omega \rightarrow 0$ yields a vertical strip, i.e., a confidence region, 
while $\omega = -n$ leads to a  horizontal strip, i.e., a prediction region. 

\begin{figure}[ht]
    \centering
    \includegraphics[height = 7.3cm, width =8.2cm]{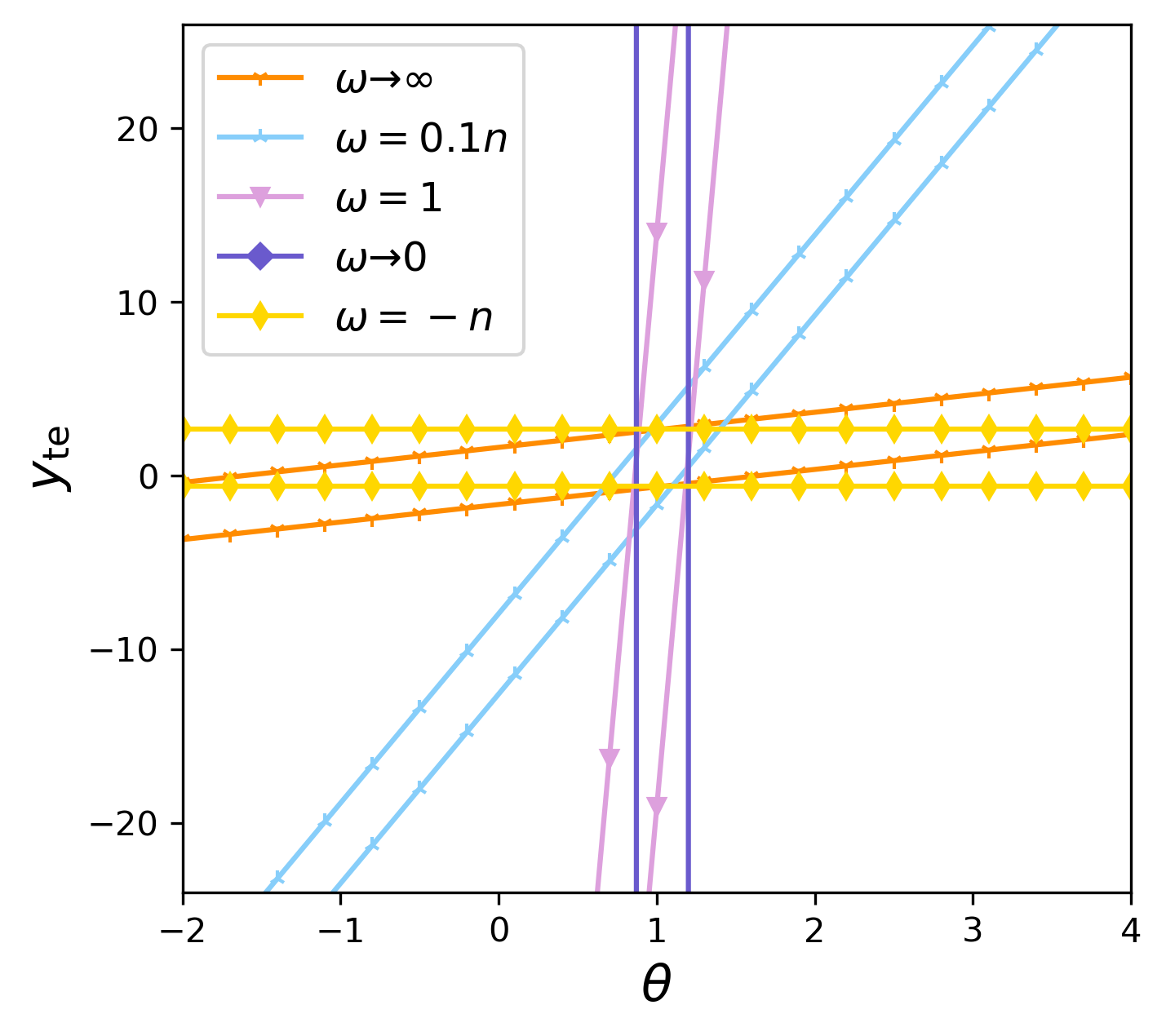}
\caption{A comparison of JCRs generated by various $\omega$-s as described in Section \ref{sec:shape-example}.}
\label{fig:shape-example1}

\end{figure}

\subsubsection{Linear Regression}

We now consider linear regression with normal noise. 
The same analysis applies to spherically distributed noise; this is omitted for clarity.
We first study the one-dimensional case, followed by the multi-dimensional case below.
Thus, let $y_i = x_i\theta + \ep_i, i \in N = \{1,2,\ldots,\mathrm{te}\}$ where $\ep_i \sim \N(0,\sigma^2)$ are iid, 
$x_i$ are considered fixed, and $\sigma^2$ is unknown.
As above, for any $w_i,i\in N$,
we have a pivot
\begin{align*}\label{eq:lr-pivot}
    \sum_{i \in N} w_i y_i - \bigg(\sum_{i \in N} w_i x_i\bigg)\theta \sim \N\left(0,\bigg(\sum_{i \in N} w_i^2\bigg)\sigma^2\right).
\end{align*}

Letting $\htheta_{\mathrm{OLS}}$ be the usual OLS estimator,
and
$S^2 = \sum_{i=1}^n (y_i - x_i\htheta_{\mathrm{OLS}})^2/(n-1) \sim \sigma^2 \chi^2_{n-1}/(n-1)$,
if $\sum_{i \in N} w_i^2>0$,
we find a JCR based on the Student $t$-distribution:
\begin{equation}\label{eq:lr-JCR}
    \left\{(\theta,\yte): t_{n-1, \alpha/2} \leq \frac{\sum_{i=1}^n w_i y_i + \wte \yte - (\sum_{i \in N} w_i x_i)\theta}{S\sqrt{\sum_{i \in N} w_i^2}} \leq t_{n-1, 1-\alpha/2}\right\}.
\end{equation}
Here, for $c\in(0,1)$, $t_{n-1,c}$ is the $c$-quantile of the $t$ distribution with $n-1$ degrees of freedom.
For instance, 
for $(w_1,\ldots,w_n) = X^\top/\|X\|^2$, $\wte = 0$, we obtain a confidence region for $\theta$ based on the pivot $(\htheta_{\mathrm{OLS}} - \theta)/(S/\|X\|) \sim t_{n-1}$.
On the other hand, taking $(w_1,\ldots,w_n) = X^\top/\|X\|^2$ and $\wte = -(\sum_{i=1}^n w_i x_i)/\xte$ yields the usual prediction region
$$
    \yte \in \xte^\top\thetaols + 
    \sqrt{\xte^2/\|X\|^2+1}\cdot S\cdot [t_{n-1, \alpha/2},
     t_{n-1, 1 - \alpha/2}].
$$
We visualize and compare these methods in Section \ref{sec:comparison-lr}.

\subsection{JCRs Based on Permutation Invariance}\label{sec:permutation-lr}

In this section, we 
study JCRs in the linear model based on the weaker assumption of permutation invariance. 
We assume that the noise vectors $\ep_i, i \in N$ are exchangeable and
represent the action of permutations on $\R^{n+1}$ by the group $\mG$ of $(n+1)\times(n+1)$ permutation matrices.

Recalling $\gamma = c^\top \theta$, suppose that there is a $\delta \in \R^{p \times 1}$ such that 
for all $\theta\in\Theta$,
and for some $\Psi \in \R^{(n+1) \times 1}$,
$X^{+} \theta + 1_{n+1} \delta^\top \theta = \Psi \gamma$. 
This 
holds for $p=1$,
with $\delta = 0_{n+1}$ and $\Phi = X^\top/c$. 
In higher dimensional settings, 
it does not always hold, but 
it does in the important case of a two-sample 
problem
where 
$X_i \sim \mu_1 + \ep_i$ 
for $i\in [m]$
and 
$Y_i \sim \mu_2 + \ep'_i$
for $i\in [n]$;
with i.i.d.~noise variables.
This is a regression model
with $\theta = c^\top \gamma$, where $\gamma = (\mu_1,\mu_2)^\top$, $c = (1,-1)^\top$
and $x_i = (1,0)^\top$ or $(0,1)^\top$ determined by the group.

In the general model, 
since 
the coordinates of $Y^{+} - X^{+} \theta -1_{n+1} \delta^\top=E-1_{n+1} \delta^\top$ are exchangeable, 
we have the invariant function 
\begin{align}\label{eq:permutation-JCR}
    I(\gamma, (X^{+},Y^{+}))
    = Y^{+} - h(\gamma) =  Y^{+} - X^{+} \theta - 1_{n+1} \delta^\top \theta.
\end{align}
For any test statistic $m$ 
and $G_{1:K}$ sampled i.i.d.~from the uniform measure over $\mG$,
a permutation-based JCR is
\beq\label{pj}
\bigg\{(\theta,\yte): m(I) \leq q_{1-\alpha/2}\big( m(g_iI), i\in[K] \big) \bigg\}.
\eeq

Different $m$-s lead to JCRs 
with varying foci.
For instance, 
in a
one-dimensional setting
where 
$I = Y^+ - X^+ \theta$, we can consider the weighted statistic $m(I) = \sum_{i \in N} |a_i(I_i- \overline{I})|$, where $\overline{I} = \sum_{i \in N} I_i / (n+1)$
and $a_i \in \R$ for all $i\in N$. 
For a statistic where $a_i$ are  relatively balanced across observations, 
adding a prediction component for an unknown element $I_\mathrm{te}$ does not greatly influence the region when $n$ is large. 
Since $m$ treats the elements of $I$ similarly, the associated JCR would still focus on the confidence side. 

Further, if we consider the subgroup of $\mG$ that keeps the last coordinate fixed, the JCR turns out to be a confidence region, since $\yte$ does not contribute to the results of the comparisons of test statistics for various $g_i$.
Also, if we take $a_{\mathrm{te}} = 1$, while $a_i = 0$ for $i\in [n]$ and use the cyclic shift group, 
the region can be viewed as estimating a quantile of the distribution of the residual $|\epte - \overline{\ep}|$ using a quantile of the empirical distribution of $|\ep_i -\overline{\ep}|$, for $i\in[n]$.
We will compare the above approaches in Sections \ref{sec:comparison-lr}.

\subsection{A Comparison of JCRs}\label{sec:comparison-lr}

In this section, we compare several JCRs in a one-dimensional linear regression model.
We consider independent
training datapoints $(x_i,y_i),\, i\in [n]$ generated from a linear model $y_i = x_i \theta + \ep_i$, and we aim to find JCR for 
$\theta$ and
an independent observation $\yte = \xte \theta + \epte $ given a new input $\xte$.
We take $n=500$, generate (and fix) i.i.d.~$x_i \sim U[0,1]$,
set $\xte = 5$, 
and consider noise entries sampled i.i.d.~from
$\ep_i \sim \N(0,1)$. 
We consider the following $1-\alpha$ JCRs. 

\begin{compactitem}
  \item {\bf Intersection-based JCR}: We show an intersection of classical confidence and prediction intervals, each with coverage level $1-\alpha/2$, namely
$
C = \htheta_{\mathrm{OLS}} + S/\|X\|^2\cdot [t_{n-1,\alpha/4},t_{n-1,1-\alpha/4}]$
and
$$
T = \xte^\top\thetaols + S \sqrt{\xte^2/\|X\|^2+1} \cdot 
\left[ - \sqrt{F_{1,n-1}^{1 - \alpha/2}},  \sqrt{F_{1,n-1}^{1 - \alpha/2}}\right],
$$
where $\htheta_{\mathrm{OLS}} = X^\top Y/\|X\|^2$, $S^2 = \sum_{i=1}^n (y_i - x_i\thetaols)^2 /(n-1)$.
    \item 
    {\bf Gaussian pivotal JCR}:
    As described in Section \ref{sec:shape-example}, we consider a JCR 
    based on a Gaussian noise distribution:
    $$
    \left\{ (\theta,\yte): t_{n-1, \alpha/2} \leq \frac{\yte - \xte \theta}{S} \leq t_{n-1, 1-\alpha/2} \right\},
    $$
    which is a special case of \eqref{eq:lr-JCR} 
     with $w_i = 0$, $i\in [n]$ and $\wte = 1$.

    \item {\bf Permutation-based JCR}: As described in Section \ref{sec:permutation-lr},  we consider the permutation group acting on $(\ep_1,\ldots,\ep_n,\epte)$, with the test statistic 
    \begin{align}\label{eq:pb-m}
     m(\ep_1,\ldots,\ep_n,\epte) = \left|\sum_{i\in \{N\}} (x_i - \overline{x})(\ep_i - \overline{\ep})\right|,   
    \end{align}
    where $\overline{x}, \overline{\ep}$ denotes the mean of $x_i$ and $\ep_i$, respectively, for $i\in \{N\}$. 
    We  use the JCR from \eqref{pj} with $K=500$.
    This is a one-dimensional special case of \eqref{eq:permutation-JCR}.

    \item
    {\bf Cyclic-shift-based JCR}:
    We consider the cyclic shift group, as discussed in Section \ref{sec:permutation-lr}, 
    and construct the JCR from \eqref{jcyc} with $f$ being the identity map and $\alpha_1 = 1-\alpha_2 = \alpha/2$.

\end{compactitem}

Figure \ref{fig:lr_normal_compare} visualizes and compares these JCRs. 
For reference, we also show the following.
 \begin{compactitem}    
    \item {\bf Oracle Prediction Component}: If we knew $\theta$, the shortest $1-\alpha$-prediction region for $\yte$, given $\xte$, would be $[\xte\theta + q_{\alpha/2}, \xte\theta + q_{1-\alpha/2}]$.
    We refer to this as the oracle prediction component, associated with an oracle JCR. In general, this depends on knowing $\theta$ and is not implementable. 

    \item {\bf $1-\alpha$-Bounded JCR}: To obtain a bounded JCR with $1-\alpha$-coverage, we can take the intersection of two $1-\alpha/2$-level regions in Figure \ref{fig:lr_normal_compare}, such as the intersection of a JCR and the classical confidence region.
    \item {\bf Truth}: the true parameter and outcome.
\end{compactitem}

\begin{figure}[ht]
  \begin{minipage}{0.45\linewidth}
          \includegraphics[height = 6.7cm, width =7cm]{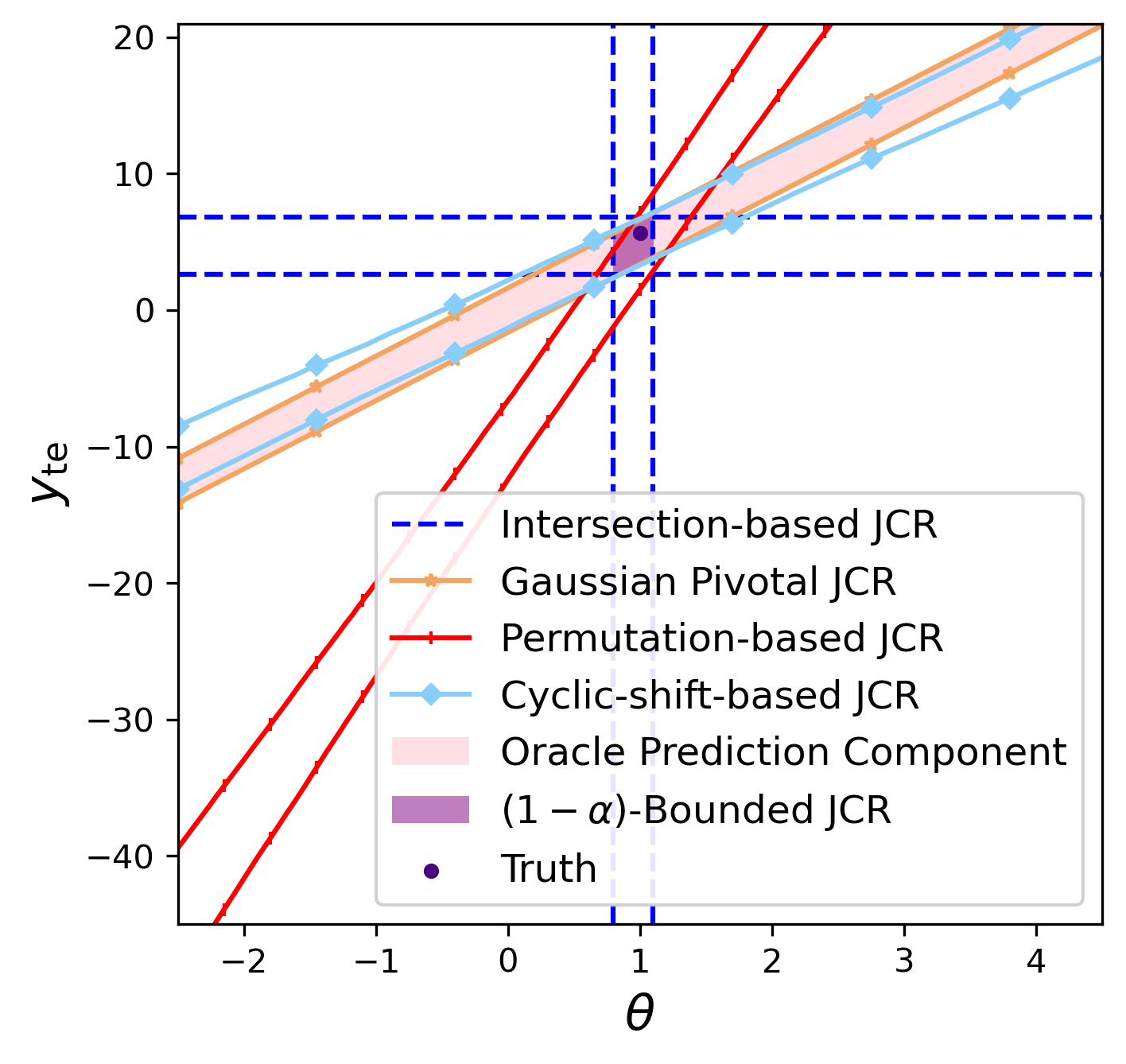}
\caption{A comparison of JCRs with $1-\alpha$ coverage, as presented in Section \ref{sec:comparison-lr}.}
\label{fig:lr_normal_compare}
  \end{minipage}
  \qquad
  \begin{minipage}{0.45\linewidth}
        \includegraphics[height = 6.7cm, width =7cm]{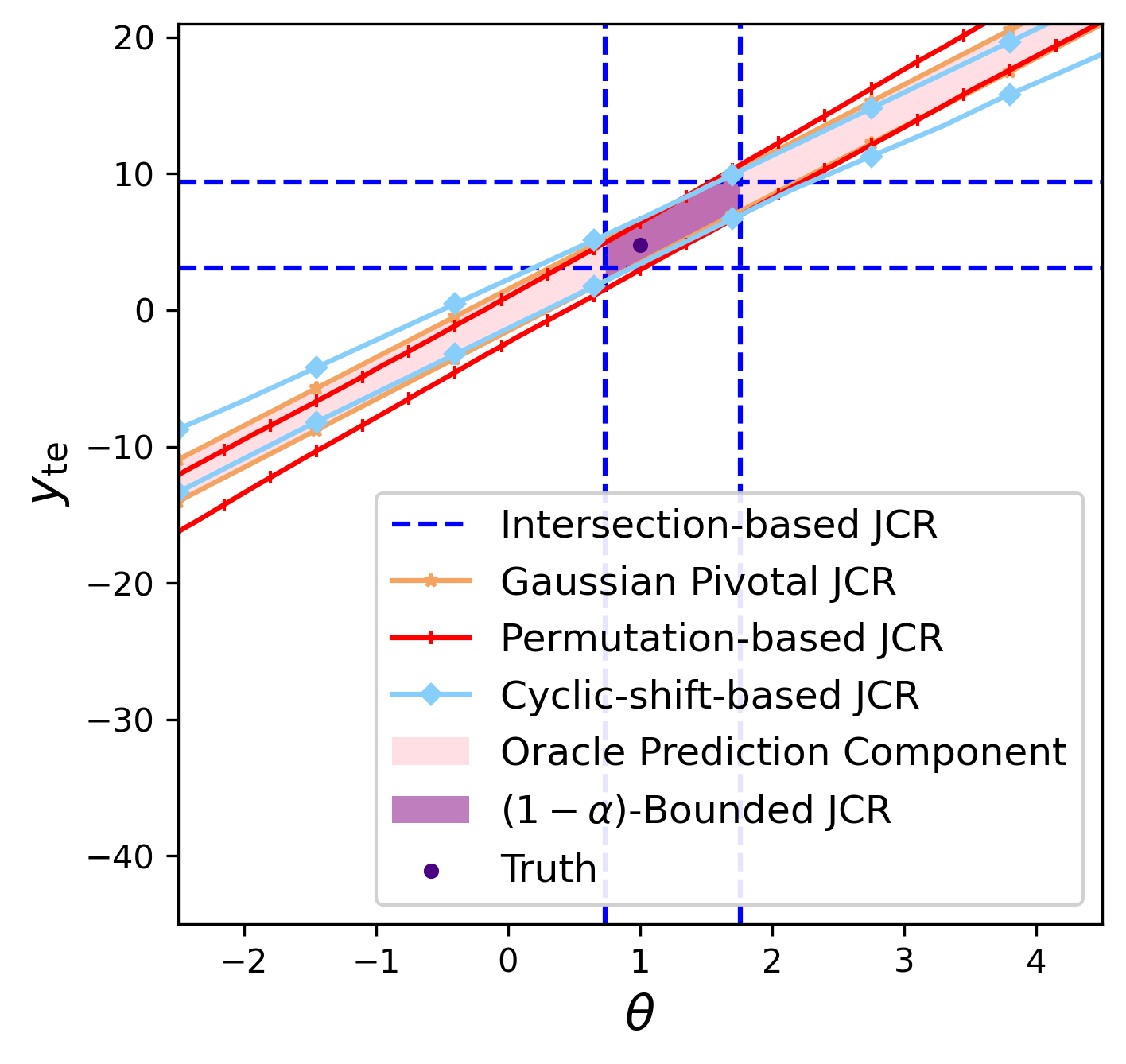}
\caption{A comparison of JCRs presented in Section \ref{sec:comparison-lr} with a small sample size $n=50$.}
\label{fig:lr_normal_compare_small_samplesize_50}
  \end{minipage}
\end{figure}

   As in Section \ref{sec:shape-example},
   the shapes of JCRs have various implications. 
   In any JCR, the horizontal sections over $\yte$ can heuristically be viewed as the confidence regions for $\theta$ given specific $\yte$ (while of course they are in general not conditionally valid regions). 
   On the other hand, the vertical section over $\theta = 1$ is a parameter-aware prediction region for $\yte$ given $(x_i,y_i), i\in [n]$. 
   In a sense, under the true parameter $\theta= 1$, 
   the cyclic shift-based and the Gaussian pivotal JCR estimate the quantiles of the distribution of the noise.
    This yields prediction components close to the oracle one.  
    
    Figure \ref{fig:lr_normal_compare}
    also shows that the permutation-based JCR has a shorter confidence component. 
    Indeed, \eqref{eq:pb-m} treats the elements of $I$ equally, and the associated JCR focuses more on its confidence component. 
    On the other hand, $I_\mathrm{te}$ and other $I_i$-s are treated asymmetrically in the Gaussian and cyclic pivot based JCRs. 
    Compared to the intersection of classical confidence and prediction intervals, an intersection of 
    a JCR and the classical confidence interval
    yields a smaller region, better capturing the linear structure of the problem.

Figure \ref{fig:lr_normal_compare_small_samplesize_50} further visualizes JCRs in an example with a smaller sample size $n=50$, while keeping all else unchanged. 
Again, we can take the intersection of two $1-\alpha/2$ regions to yield a bounded $1-\alpha$-JCR, 
such as the intersection between the permutation-based JCR and a classical confidence region. 
Generally, when the sample size is smaller, the intersection-based JCR is wider
than invariance-based JCRs and
their intersections with classical confidence regions. 
We further show the advantages of bounded JCRs in Section \ref{sec:multiple-task}.
Meanwhile, as the sample size increases, the permutation-based JCR approaches a vertical strip,
i.e., a pure confidence region, 
which conforms with the analysis from Section \ref{sec:toy-model}.

\subsubsection{Weaker Assumptions}

In this subsection, we show the robustness of the permutation-based JCR 
beyond normal noise. 
We use the setup from Section \ref{sec:comparison-lr}, with a sample of size $n=100$.
We compare the intersection-based JCR, Gaussian pivotal JCR, cyclic shift-based JCR, and permutation-based JCR (see Section \ref{sec:comparison-lr})
for various noise distributions. 

In Table \ref{table:compare-noise}, we report two-sided 95\% Clopper-Pearson confidence intervals  (CPCIs)
for the binomial parameters of coverage, 
based on the empirical coverage rate over $2,000$ repeated experiments. 
If an interval contains $0.9$, the corresponding approach is consistent with valid coverage. 
All approaches are empirically valid
under normal noise. 
The intersection-based JCR based on $95\%$ confidence and prediction regions is slightly conservative.
Permutation-based approaches empirically have a correct coverage under i.i.d.~noise, 
while approaches based on Gaussian or orthogonal assumptions are not valid anymore.

\begin{figure}
    \centering
    \includegraphics[width=0.5\linewidth]{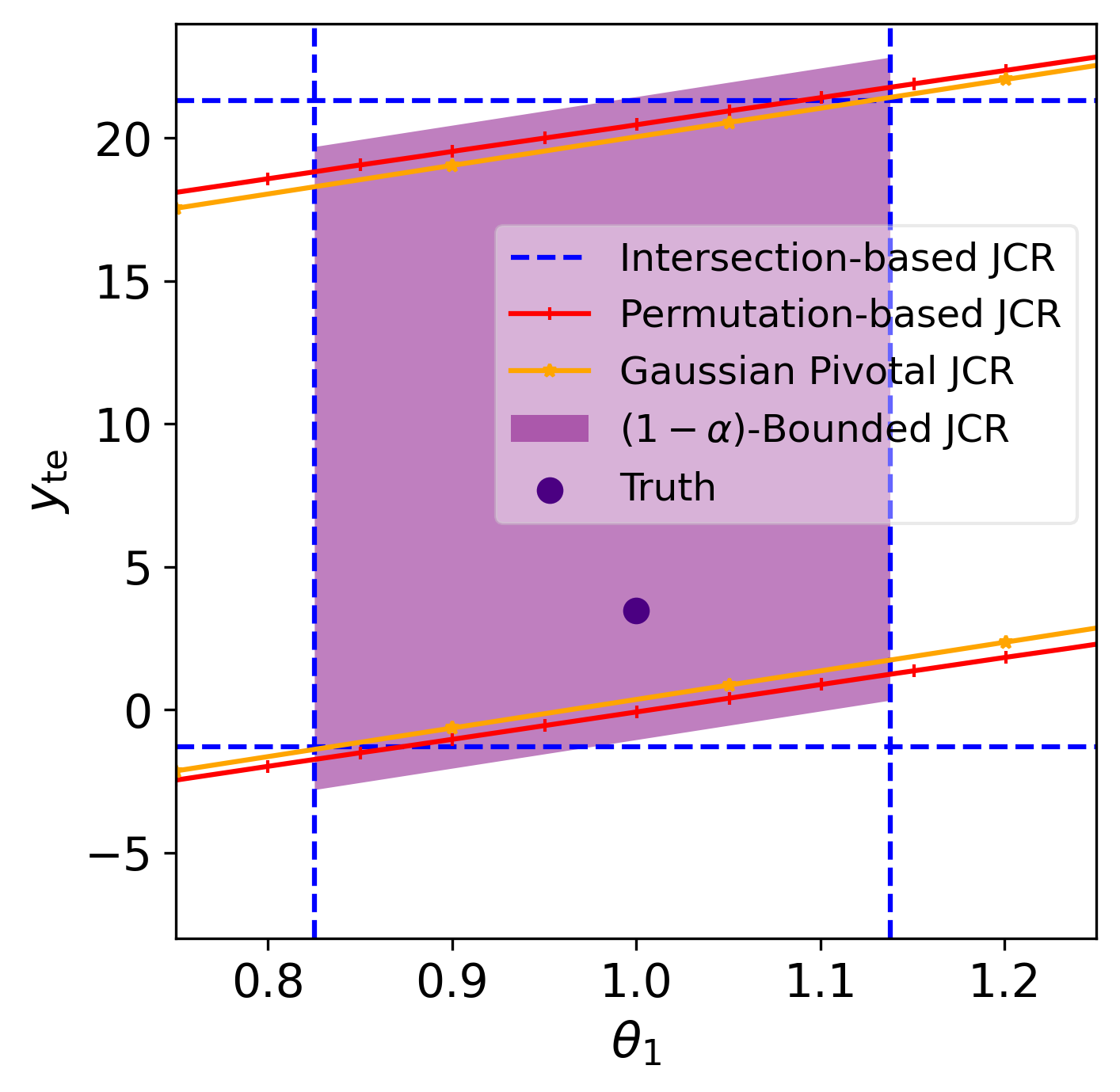}
    \caption{A comparison of joint coverage regions with $1 - \alpha$ coverage, as presented above with $p = 5$.}
    \label{fig:comparison_highd}
\end{figure}

We may also consider a multi-dimensional regression model $y_i = x_i^\top \theta + \ep_i$, $\ep_i \sim \N(0,\sigma^2), i \in [n]$ for  $\theta \in \R^p$.  
We write $X \in \R^{n \times p}$ for the observation matrix and $y = (y_1,\ldots,y_n)^\top $ for the response.
We consider  bounded $\theta_i \in [a_i,\ b_i], i \in [n]$,
and 
for an independent datapoint $\yte = \xte^\top \theta + \epte$,
aim to construct a 
$(1 - \alpha)$-joint coverage regions for $\theta_1$ and $\yte$. 
Similar to the one-dimensional case, we consider various joint coverage regions.
The details are defered in the Appendix.

In Figure \ref{fig:comparison_highd}, we compare several regions
using $n = 50,000$, $p = 5$, 
and the coordinates of all features of all datapoints sampled uniformly over $[0,1]$. 
We assume that the parameter $\theta = (\theta_1,\ldots,\theta_p)^\top $ satisfies $\theta_i \in [0.75, 1.25]$, 
and set the true parameter to be $\theta = (1,1,1,1,1)^\top$.
We  take $\sigma^2=5$.
For a new datapoint with $\xte = (10, 0.05, 0.05, 0.05, 0.05)^\top$,
we construct 0.95-joint coverage regions for $\theta_1$ and $\yte$. 
The results mirror those from Section \ref{sec:comparison-lr}. 
Compared to intersection-based approaches, the slopes in 
Gaussian-pivot-based and permutation-based joint coverage regions better reflect the signal pattern in our model. 
We show a further example for $\xte = (0.5,0.5,0.5,0.5,0.5)$ in  the Appendix.

\begin{table}
\centering
\begin{tabular}{ccccccc}
\toprule 
\multicolumn{2}{c}{Noise Distribution}& {Intersection}& {Gaussian pivot}  & {Cyclic-shift} & {Permutation}\\  
\hline 
\multicolumn{2}{c}{Normal $\N(0,1)$}& [90.03\%,\,92.55\%]& [88.39\%,\,91.09\%]& [87.81\%,\,90.57\%] & [89.55\%,\,92.12\%]\\  
\multicolumn{2}{c}{Heterosk. normal}& [92.81\%,\,94.95\%]& [99.82\%,\,100.00\%]& [88.50\%,\,91.19\%] & [87.28\%,\,90.10\%]\\  
\multicolumn{2}{c}{Cauchy noise}
& 
[92.76\%,\,94.91\%]& [95.88\%,\,97.48\%]& [88.18\%,\,90.90\%] & [89.50\%,\,92.08\%]\\   
\multicolumn{2}{c}{Uniform $U[-5,5]$}& [92.11\%,\,94.36\%]& [94.17\%,\,96.09\%]& [87.81\%,\,90.57\%] & [88.34\%,\,91.05\%]\\  
\bottomrule
\end{tabular}
\caption{Coverage of JCRs for various noise distributions, see Section \ref{sec:comparison-lr}. 
Heteroskedastic noise corresponds to the mixture distribution $0.2\cdot \N(0,1)+0.4\cdot \N(10,1)+0.4\cdot \N(-10,1)$.
We show confidence intervals for the coverage based on $2,000$ independent trials. 
The permutation-based approaches have valid coverage for arbitrary i.i.d.~noise, unlike approaches based on Gaussian or orthogonal assumptions.}
\label{table:compare-noise}
\end{table}

\section{Applications of JCRs}\label{sec:application}

In this section, we outline a few applications of JCRs.
We study simplified models, because our goal is to illustrate that JCRs can be used; and future work is needed to develop these applications.

\subsection{Prediction by JCR Projection}\label{sec:toy_jcr_pred}

We show that in some cases we can obtain better prediction regions by projecting a JCR.
Consider a setting where 
we have a measurement $X_j$ that is recorded periodically over periods $j=1,2,\ldots$. 
For instance, this could be the amount of funds in an account that records a lot of transactions.
Having observed the first few measurements $X_1, \ldots, X_T$, we wish to predict the next measurement $X_{T+1}$.
Suppose that it is reasonable to assume that the measurements follow a a parametric distribution 
$(X_1, \ldots, X_{T+1})\sim p_{\theta}$, with certain restrictions on the parameter $\theta\in \Theta$ captured by the parameter space $\Theta$.

One approach to this prediction problem is to construct a JCR $J$ in reduced form for $(\theta,X_{T+1})$, based on the observed data $X_1, \ldots, X_T$.
Then, we project JCR $J$
into the space where $X_{T+1}$ belongs, taking
$\tilde{T}(J) = \cup_{\theta \in \Theta} \Phi(J,\theta)$.
If $J$ is a $1-\alpha$ JCR, 
this is a $1-\alpha$-prediction region.
Depending on the structure of the parameter space and the noise level, we will show in an example below that this can lead to more efficient prediction regions than reasonable alternative methods.

For simplicity of illustration, 
we consider a setting with two measurements $X_1 \sim \N(\theta,1)$, $X_2 \sim X_1 + \N(\theta,1)$
that are jointly normal,
for an unknown mean parameter $\theta$.
In the above example, if there are a lot of small transactions (both deposits and withdrawals) recorded in the account, then a normal approximation for the amount in the account may be reasonable. 
Suppose that the parameter space $\Theta = [\theta_1, \theta_2] \subset \R$ is a  bounded interval. In our example, this may be motivated by the observation that, based on historical records, the average daily total transaction amount is bounded between some known values.
We observe $x_1$, i.e., $o(X_1,X_2) = x_1$, and we aim to find a $1-\alpha$-prediction region for $X_2$.

One can construct prediction regions in several ways.
One prediction region is $T_\alpha= \{X_2 \in 2x_1 \pm \sqrt{2}q_{1-\alpha/2}\}$,  generated by combing the pivots $X_1 - \theta \sim \N(0,1)$, $X_2 - X_1 - \theta \sim \N(0,1)$. 
This eliminates the unknown $\theta$ from the pivot $X_2 - 2X_1 \sim \N(0,1)$. 
Another method is to use an estimate $\htheta$ instead of $\theta$. 
In this case, for $\htheta = x_1$, we 
heuristically obtain the prediction region $T'= \{X_2 \in x_1 + \htheta \pm q_{1-\alpha/2}\} = \{2x_1 \pm q_{1-\alpha/2}\}$
using 
the approximation $\htheta\approx \theta$, which leads to 
the approximation
$X_2 - X_1 - \htheta \sim \N(0,1)$.
However, since $\htheta$ is noisy,
this approximation is inaccurate, and the JCR does not have the desired level of coverage.

Alternatively, we consider projecting the JCR $$
J_\alpha(x_1) = \{(\theta,X_2): X_2-x_1 - \theta \in [q_{\alpha/2},q_{1-\alpha/2}]\}
$$
into the space where $X_2$ belongs, i.e., we take
$\tilde{T}(J_{\alpha}) = \cup_{\theta \in \Theta} \Phi(J_\alpha,\theta).$
This is clearly a valid $1-\alpha$-prediction region,
and, we argue that it can sometimes be shorter than $T_\alpha$.
 In fact, the three intervals $T_\alpha, T', \tilde{T}(J_{\alpha})$ have widths $2\sqrt{2}q_{1-\alpha/2}, 2q_{1-\alpha/2}, 2q_{1-\alpha/2}+(\theta_2-\theta_1)$ respectively, while only $T_\alpha, \tilde{T}(J_{\alpha})$ have $1-\alpha$ coverage. 
 When $\theta_2 - \theta_1 < 2(\sqrt{2}-1)q_{1-\alpha/2}$, the projection $\tilde{T}(J_{\alpha})$ 
 is shorter than $T_\alpha$.
 
 We conduct simulations 
 with $\Theta = [-0.2,0.2]$, $\theta = 0$,
 and  $\alpha = 0.9$. 
 In a single trial with $x_1 = 0.3$ in Figure \ref{fig:toy_pred} (left), we show the regions $T_\alpha, T', \tilde{T}(J_{\alpha})$ defined above. 
 Here $\tilde{T}(J_{\alpha})$ is shorter than $T_\alpha$. 
 We also consider the coverage of the three methods over $10,000$ independent trials. 
 Specifically, we consider the models $X_1 \sim \N(\theta,\sigma^2)$, $X_2 \sim X_1 + \N(\theta,\sigma^2)$, with $\sigma$ varied from one to ten. 
 Figure \ref{fig:toy_pred} (right) 
 supports that only $T_\alpha, \tilde{T}(J_{\alpha})$ 
 have coverage above $1-\alpha$, 
 while $T'$ is anti-conservative due to the noise in estimating $\theta$. 
 The projection JCR becomes less conservative as the noise increases, 
 while its length $2 \sigma q_{1-\alpha/2} + (\theta_2 - \theta_1)$
 is shorter than  $2 \sqrt{2} \sigma q_{1-\alpha/2}$ for $T_{\alpha}$, once $\sigma > (\theta_2-\theta_1)/[2(\sqrt{2}-1)]q_{1-\alpha/2}$.
 This example illustrates that projecting a JCR can be an effective way to construct prediction regions in certain regimes; here specifically in the case of relatively large noise and bounded parameter space. 
 
\begin{figure}
\begin{minipage}[ht]{0.5\linewidth}
\centering
    \includegraphics[height = 7cm, width =7.4cm]{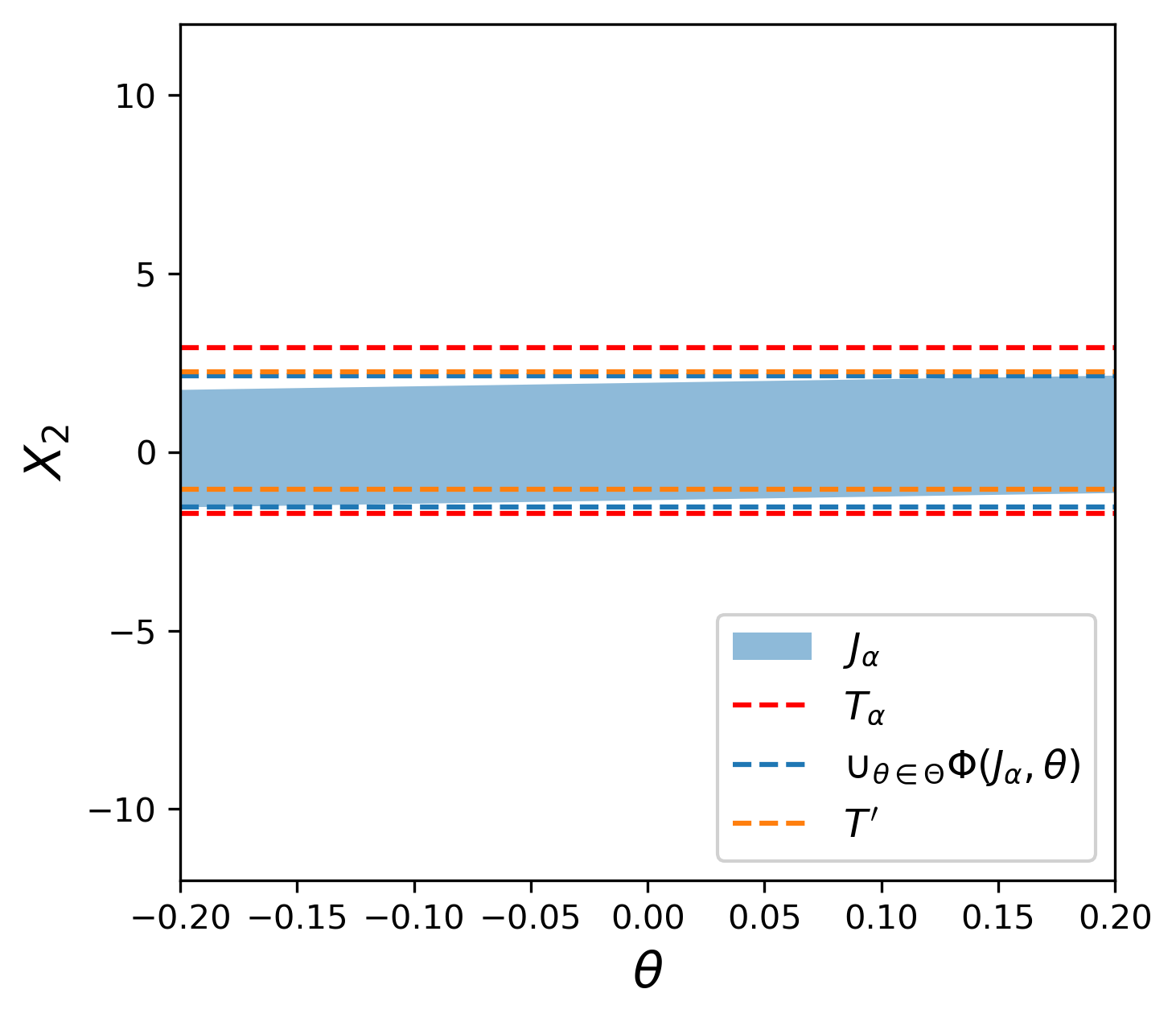}
    \end{minipage}
    \qquad
\begin{minipage}[ht]{0.5\linewidth}
\centering
    \includegraphics[height = 7cm, width =7.4cm]{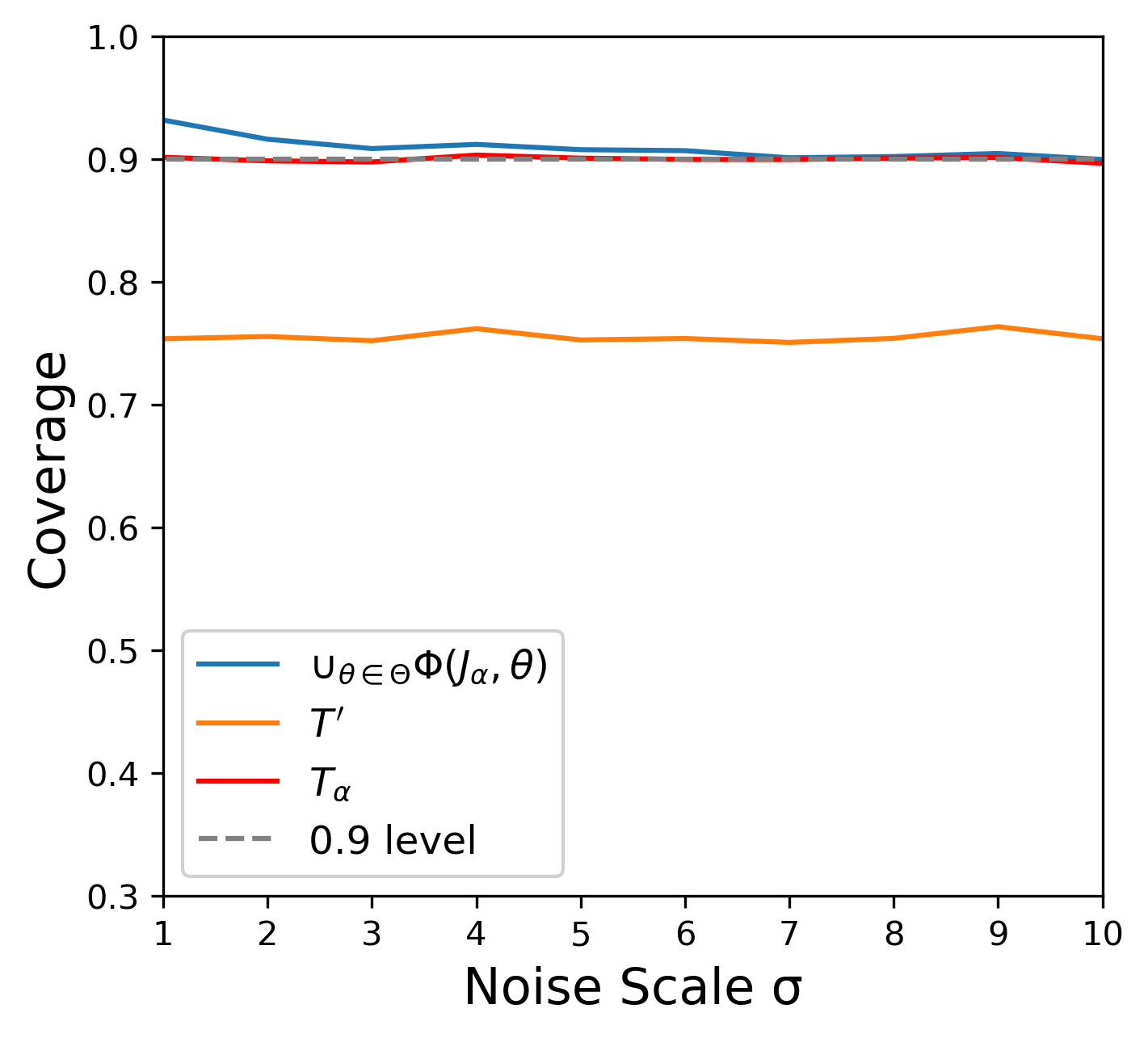}
    \end{minipage}
\caption{Left: A visualization of  $T_\alpha, T', \tilde{T}(J_{\alpha})$ as defined in Section \ref{sec:toy_jcr_pred}.
We consider a single trial with $x_1 = 0.3$. 
Right: The empirical coverage as a function of the noise level $\sigma$, ranging from one to ten.}
\label{fig:toy_pred}
\end{figure}

\subsection{Miscoverage Control in Multiple Inference Problems}\label{sec:multiple-task}

    
JCRs can be used for drawing inferences on multiple parameters and future observables.
As in Section \ref{sec:application}, consider two random variables $X_1,X_2$ that satisfy $X_1 \sim \N(\theta,1)$, $X_2 \sim X_1 + \N(\theta,1)$.
Suppose that we only observe $x_1$, i.e., $o(X_1,X_2) = x_1$  and we aim to (1)
construct a valid confidence region for $\theta$;
and 
(2) construct a valid prediction region for $X_2$.

Our goal is to control the probability of miscoverage.
If we deal with the two tasks separately,
the criterion turns out to be the family-wise error rate (FWER).
In this case,
denote by $I_i$ the indicator of the miscoverage for the $i^{\mathrm{th}}$ task, so $I_i = 0$ for successful coverage and $I_i = 1$ for failure.
We thus aim to control the error rate $P(I_1 + I_2 > 0)$ at a given level $\alpha$.
Typical solutions 
would be a confidence region $C_\alpha=\{\theta \in x_1 \pm q_{\alpha/2}\}$ and a prediction region $T_\alpha=\{X_2 \in 2x_1 \pm \sqrt{2}q_{1-\alpha/2}\}$.
However, to control $P(I_1 + I_2 > 0)$ at a certain level $\alpha$, the following problems arise:
\begin{compactitem}
    \item 
    %
    Due to multiplicity, we need an additional correction to control the family-wise error rate, e.g., the Bonferroni correction.
    \item We may want to avoid the most stringent multiplicity corrections. 
    However, using the same data $x_1$ for both tasks may make this challenging. For instance, we have
    \begin{align*}
        &P(\theta \in C_\alpha, X_2 \in T_\alpha)\\
        =& \int p(X_1)[\Phi(X_1 + \sqrt{2}q_{\alpha/2}) - \Phi(X_1 + \sqrt{2}q_{1-\alpha/2})][\Phi(2X_1 + q_{\alpha/2}) - \Phi(2X_1 + q_{1-\alpha/2})] dX_1.
    \end{align*}
    This correlation due to $x_1$
    might make the joint probability even harder to compute in cases with more tasks.
\end{compactitem}

Instead of considering $I_1 = I(\theta \notin C_\alpha)$ and $I_2 = I(X_2 \notin T_\alpha)$ separately and aiming to control $P(I_1\neq0, \textnormal{ or } I_2\neq 0)$, 
we can consider the \emph{joint miscoverage indicator}
$I_{12}=I((\theta,X_2) \notin J_\alpha)$
for a joint coverage region $J_\alpha$, 
and aim to control 
the miscoverage rate $P(I_{12}\neq0)$.
This conforms with the structure of JCRs, which involve both the unknown parameter $\theta$ and the future observable $X_2$.

Specifically, we consider the JCR
\begin{align}\label{eq:jcr-toy}
    J_\alpha = \{(\theta,X_2): X_2-2\theta \in [\sqrt{2}q_{\alpha/2},\sqrt{2}q_{1-\alpha/2}]\},
\end{align}
which covers $\theta$ and $X_2$ simultaneously with $P(I_{12}\neq0) = P((\theta,X_2) \notin J_\alpha) \leq \alpha$.
Formally, 
since $X_1 - \theta \sim \N(0,1)$, $X_2 - X_1 -\theta \sim \N(0,1)$, we have the pivot $X_2 - 2\theta \sim \N(0,2)$, so that
$$
P((\theta,X_2) \notin J_\alpha) = P(X_2-2\theta \notin [\sqrt{2}q_{\alpha/2},\sqrt{2}q_{1-\alpha/2}]) = \alpha.
$$

Of course, we may also use the pivot $X_2 - X_1 - \theta \sim \N(0,1)$, defining
\begin{align}\label{eq:jcr-toy1}
J_\alpha'(x_1) = \{(\theta,X_2): X_2-x_1 - \theta \in [q_{\alpha/2},q_{1-\alpha/2}]\}.
\end{align}
This involves $X_1$  and is thus random, but has a shorter prediction component. 
We can further intersect the JCRs in \eqref{eq:jcr-toy}, \eqref{eq:jcr-toy1} with confidence regions for $\theta$ 
to obtain bounded JCRs. 
For instance, we can intersect $J_{\alpha/2}$ or $J_{\alpha/2}'$ with $C_{\alpha/2}= \{\theta \in x_1 \pm q_{\alpha/4}\}$ to yield a slightly conservative region with coverage rate over $1-\alpha$. 

In Figure \ref{fig:toy_jcr}, we show the regions $J_{\alpha}, J_{\alpha}', C_{\alpha/2}, T_{\alpha/2}$ as defined above,
as well as the intersections $J_{\alpha/2} \cap C_{\alpha/2}$, $J_{\alpha/2}' \cap C_{\alpha/2}$ as we described. 
Our JCR approach better captures problem structure.
To validate coverage, 
we take $\theta = 0$ and run $10,000$ independent trials. 
In each trial, we record the following events: 
$(\theta, X_2) \in C_{\alpha/2} \times T_{\alpha/2}$, $(\theta,X_2) \in J_{\alpha}$, $(\theta,X_2) \in J_{\alpha/2} \cap C_{\alpha/2}$.
We compute the coverage rates and their corresponding Clopper-Pearson CIs (CPCIs) for $\alpha = 0.1$. 
The coverage rates turn out to be $91.93\%$, $89.91\%$ and $90.23\%$ with their $95\%$-CPCIs $[91.38\%,\ 92.46\%]$, $[89.30\%,\ 90.49\%]$ and $[89.63\%,\ 90.81\%]$, respectively. 
As expected, the region $J_{\alpha/2} \cap C_{\alpha/2}$ is slightly conservative, 
while 
the intersection JCR $ C_{\alpha/2} \times T_{\alpha/2}$ is more so. 

\begin{figure}
    \centering
    \includegraphics[height = 7cm, width =7.8cm]{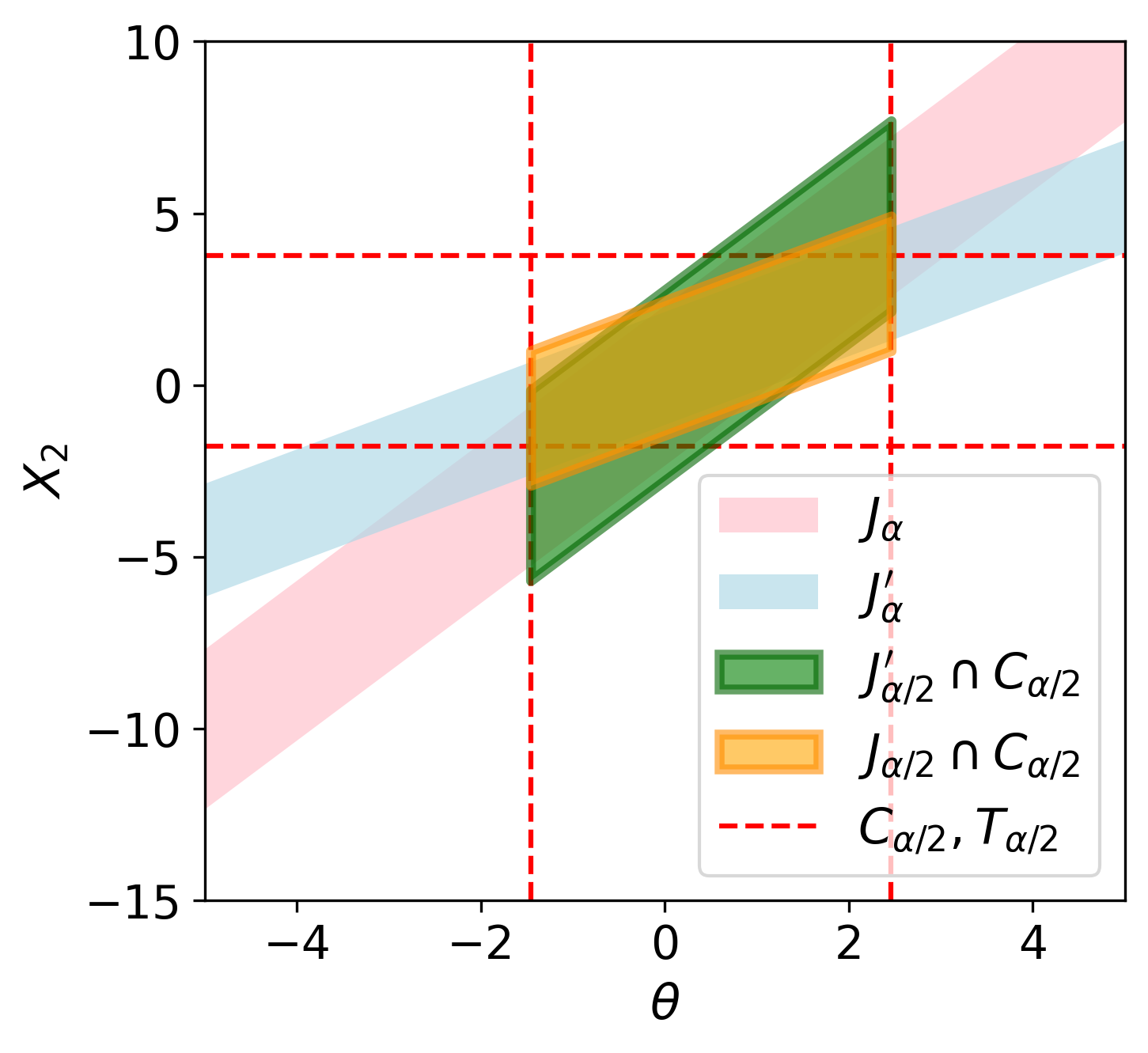}
\caption{A visualization for $J_{\alpha}, C_{\alpha/2}, T_{\alpha/2}$ as defined in Section \ref{sec:multiple-task}, for a single trial with $x_1 = 0.5$. }
\label{fig:toy_jcr}
\end{figure}


\section{Empirical Illustration}
\label{emp}

\subsection{Diabetes Data}
\label{diab}

We evaluate 
the JCRs from 
Section \ref{sec:comparison-lr} on the diabetes dataset used in \cite{efron2004least}, 
which reports ten variables of $442$ diabetes patients at baseline, 
as well as the response of interest, a quantitative measure of disease progression one year after baseline.
We consider the linear effect of body mass index (BMI)  on disease progression,  
centering both measurements.
We fit a linear model $y = x\theta + \ep$, 
where $\ep$ is independent noise, of disease progression $y$ on BMI $x$. 
See Figure \ref{fig:diabete-JCR-supp} in the Appendix for a plot.

As discussed, each JCR is valid under specific assumptions on the noise.
However, it is unclear which assumptions hold for this dataset. 
Moreover, the true value of the parameter $\theta$ for the linear effect is not known; making confidence statements hard to evaluate.
Therefore, we consider semi-empirical data to evaluate our methods. 

We randomly select a preliminary sample of $242$ 
measurements---denoted $X', Y'$---to derive a preliminary estimate $\htheta_{\mathrm{OLS}}^0$, via ordinary least squares.
We randomly select one datapoint 
from the remaining $200$, 
and use the others to construct JCRs.
We repeat the following experiment $1,000$ times: 
we randomly select one datapoint and use the features of the remaining datapoints and the preliminary estimated parameter $\htheta_{\mathrm{OLS}}^0$ to generate outcomes from a linear regression model with normal noise and approximate variance $S^2 = (Y'-X'\htheta_{\mathrm{OLS}}^0)^2/(n-1)$. Then, we construct the Gaussian pivotal, cyclic-shift-based, and permutation-based JCRs from \eqref{eq:lr-JCR}, \eqref{jcyc}, \eqref{pj} respectively using those data, with $\alpha = 0.05$.
The Gaussian and cyclic-shift based JCRs are computed in closed form.
For the permutation-based JCR, 
we randomize using $K=1,000$ transforms. 
Then, we evaluate the coverage of the JCRs on the test datapoint with outcomes generated using the same linear model.

The empirical coverages are $94.2\%$, $94.1\%$, and $94.3\%$ for the Gaussian pivotal, cyclic shift-based, permutation-based JCRs.
Their corresponding $95\%$-CPCIs are $[92.57\%$, $\,95.57\%]$, $[92.46\%,\,95.48\%]$  and $[92.68\%$, $\,95.65\%]$.
The results are consistent with valid coverage. 
A trial is shown in Figure \ref{fig:diabete-JCR} (right), where the three JCRs have different shapes, as in the simulation.
\begin{figure}[ht]
  \centering
    \includegraphics[height = 5cm, width =5.5cm]{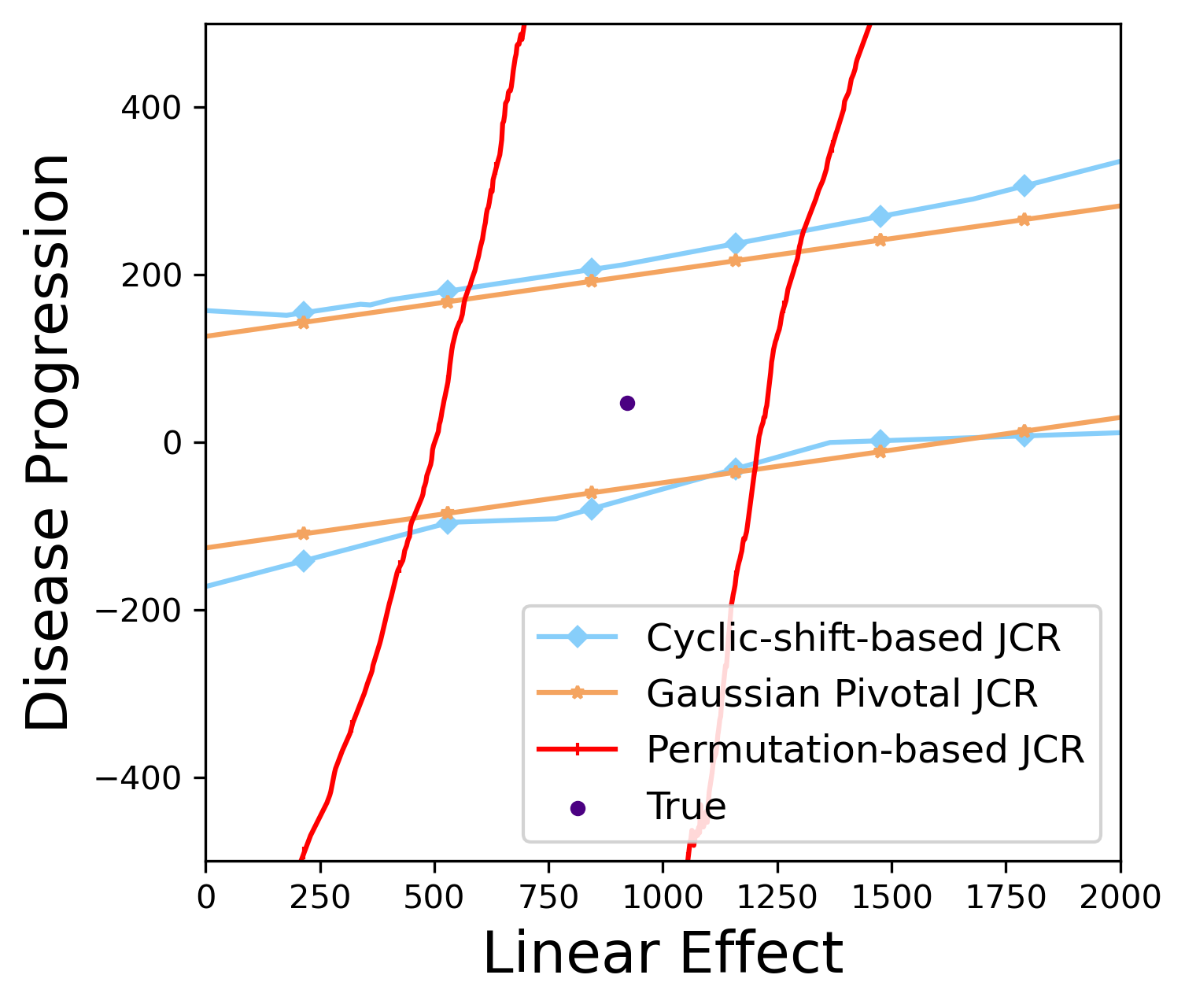}
  \caption{
  Diabetes dataset.
  We show the
  JCRs for the linear effect of BMI on disease progression, and the disease progression for a new patient given their BMI. 
  The purple point labeled ``True"
  shows $(\htheta_{\mathrm{OLS}}^0, \yte)$ for one test datapoint,
  with the progression level $\yte = 47.16$ of the new patient and the approximated linear effect $\htheta_{\mathrm{OLS}}^0 = 922.39$. 
  See also Section \ref{sec:diab-supp} in the Appendix for a scatterplot of the outcome and BMI for all 442 datapoints, with a least squares line.}
  \label{fig:diabete-JCR}
\end{figure}
Next, we illustrate methods for linear regression 
with ten features. 
We consider JCRs for the effect of BMI (a fixed parameter), 
as well as the disease progression outcome (a random variable). 
We use the same protocol as before.
The coverage of the JCR \eqref{eq:gau_high_d} 
is $95.1\%$ with its corresponding $95\%$-CPCI $[93.57\%, \,96.35\%]$, 
which is consistent with 95\% coverage.

\subsection{NYC Flight Delay Data}\label{nyc}
\begin{figure}[ht]
  \begin{minipage}{0.45\linewidth}
    \centering
    \includegraphics[height = 6cm, width =6.6cm]{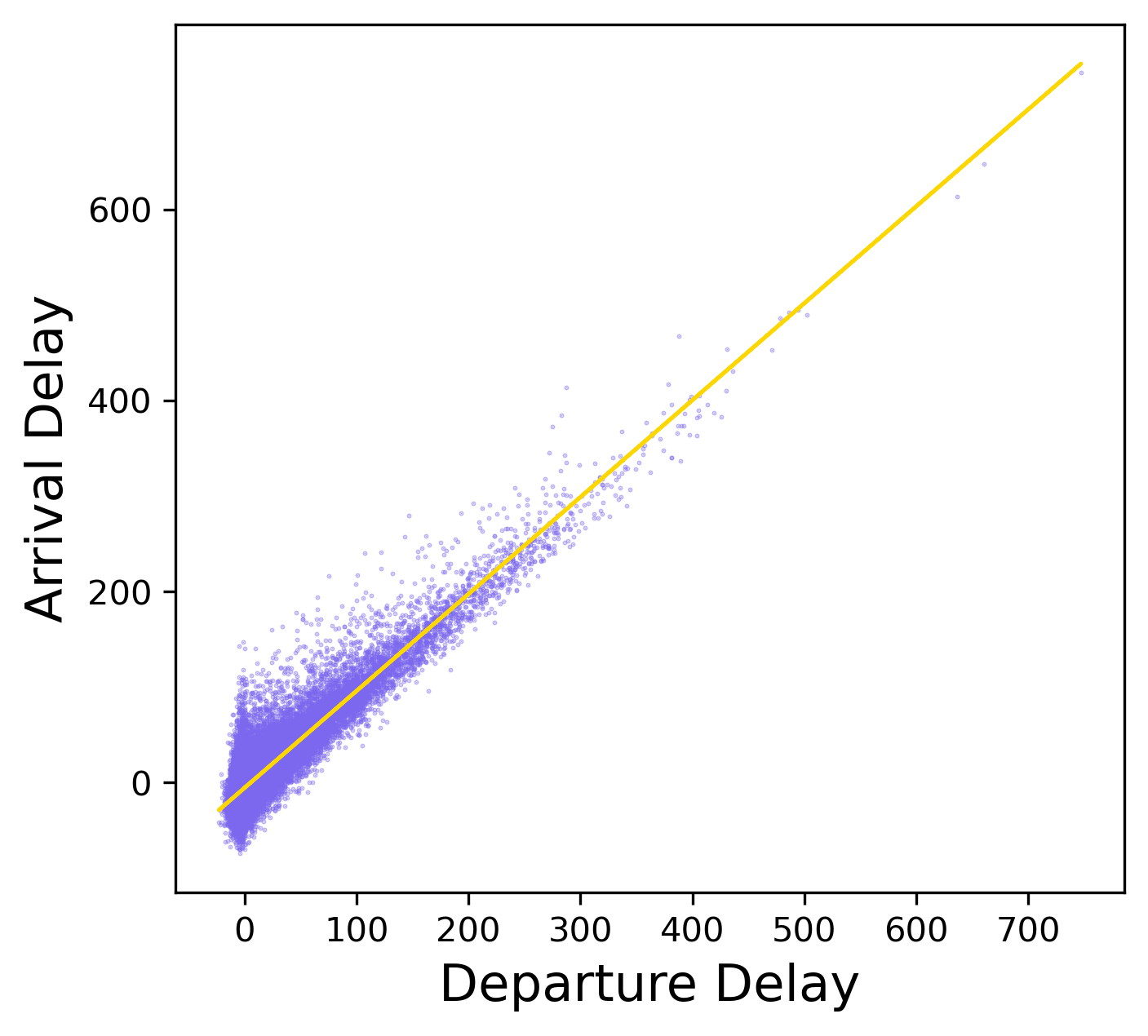}
  \end{minipage}
  \qquad
  \begin{minipage}{0.45\linewidth}
  \centering
    \includegraphics[height = 6cm, width =6.6cm]{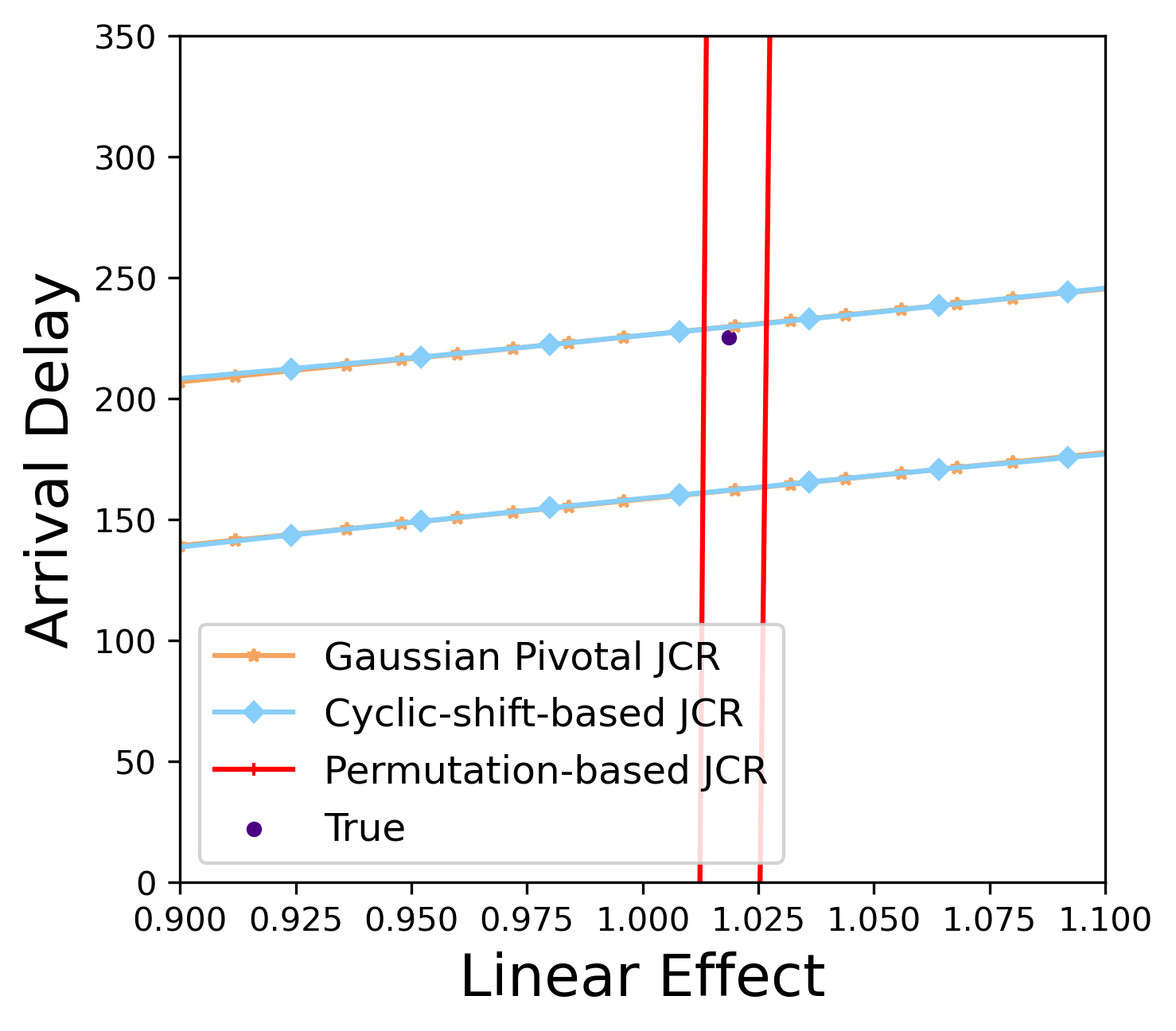}
\label{fig:nyc-JCR} 
  \end{minipage}
  \caption{NYC flights dataset.
  Left:
  Scatterplot of the arrival delay outcome and the departure delay feature with best-fit line.
  Right:
  JCRs for the linear effect of departure delay on arrival delay and the arrival delay for a new flight given its departure delay $\xte = 192.2$ (after centralized). The purple point labeled ``True"
  shows $(\htheta_{\mathrm{OLS}}^0, \yte)$ for one generated test datapoint,
  with the arrival delay $\yte = 225.3$ of the flight and the approximated linear effect $\htheta_{\mathrm{OLS}}^0 = 1.019$.}
  \label{fig:nyc}
\end{figure}
We evaluate the JCRs from Section \ref{sec:comparison-lr} on the NYC flight dataset \citep{wickham2018nycflights13}, 
which reports various features for $60,448$ flights, including a response of interest, the arrival delay of each flight. 
We follow the protocol from Section \ref{diab}, 
fitting a linear regression model
of arrival delay $y$ to departure delay $x$ on a randomly chosen half of the data; the centered data in Figure \ref{fig:nyc} shows a good linear relation. 


The empirical coverage is $94.8\%$, $94.6\%$, and $95.8\%$ for the Gaussian pivotal, cyclic shift-based, permutation-based JCRs with their corresponding $95\%$-CPCIs $[93.24\%,\,96.09\%]$, $[93.01\%$, $\,95.92\%]$ and $[94.36\%$, $\,96.96\%]$.
The results are consistent with valid coverage. A single trial is shown in Figure \ref{fig:nyc} (right).
Similarly, we fit a regression using all $14$ features, and construct a JCR for the effect of departure delay on arrival delay. 
The JCR \eqref{eq:gau_high_d}
has $94.9\%$ coverage with corresponding $95\%$-CPCI $[93.35\%,\,96.18\%]$, which is consistent with 95\% coverage.

\section*{Acknowledgements}


We are very grateful to Jacob Bien, Eugene Katsevich, Hua Su and Larry Wasserman for helpful discussions.
During this work, ED was supported in part by NSF award DMS 2046874 (CAREER).

\section{Appendix}\label{appendix}

\subsection{Connections between JCRs, Confidence Regions, and Prediction Regions}
\label{conn}
\begin{figure}
    \centering
    \includegraphics[height = 5.5cm, width =7cm]{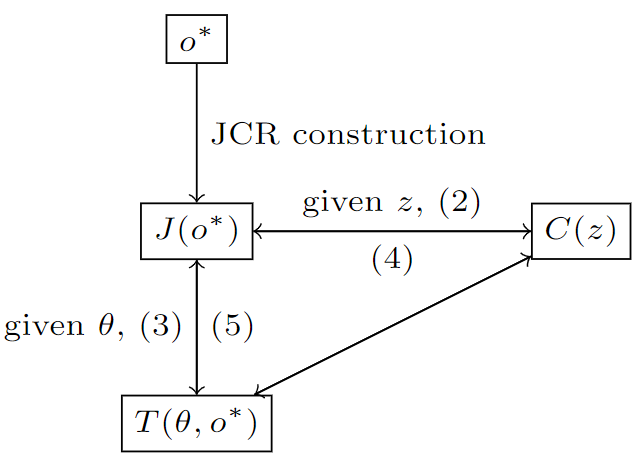}
    \caption{The relationship and transformations between the trio of regions.}
    \label{fig:trio}
\end{figure}
To aid our understanding of joint coverage regions, we now explain some of their connections to classical confidence and prediction regions. 
We first recall the classical definitions of confidence and prediction regions, as they arise in our framework.
Recall the setting from Section \ref{JCR-framework}:
the full data is $Z\sim P$, but we observe only $o(z)$.
The parameter of interest is $\theta$.

A $1-\alpha$-confidence region  for $\theta$ based on the observed data $o(z)$
is a map $\tilde C: \mO \to B_\Theta$ 
such that for all $P\in \mP$, $\bP_{Z \sim P}(\theta_P \in \tilde C(o(Z)) ) \ge 1-\alpha$.
However,
as we explain below, 
to understand JCR, it is helpful to consider a different, hypothetical, form of a confidence region, which is based on the generally unobserved full data $z$.

\begin{definition}[Full-data Confidence Region] \label{def-conf}
    We say that $C:\mZ\to B_\Theta$ 
    is a $1-\alpha$-full data confidence region for $\theta$ if for all $P\in \mP$,
    $\bP_{Z \sim P}(\theta_P \in C(Z) ) \ge 1-\alpha.$
\end{definition}
Of course, a full-data confidence region is in general not implementable, as 
we do not in general observe $z$. However, this theoretical notion will still be useful for understanding JCR, as it turns out they are in a one-to-one correspondence.

Further, if we observe the full dataset, so that the observable is $o(z)=z$, then a full-data confidence region $C$ for $\theta_P$ can be found from a JCR $J$ for $(\theta_P,Z)$ by dropping the second component.
In this case
we can also use $C$ to construct hypothesis tests for $\theta$, via the usual duality between testing and confidence regions: 
we reject the null hypothesis $H_0 : \theta = \theta^*$ when $\theta^* \notin C(z)$. 

To introduce the connection to prediction regions, we will temporarily need to consider a slightly different notion of full data; and we indicate this by a  ``$+$" superscript notation for all notions related to the full data.
In particular, consider full data $Z^+\sim P$, 
over a measurable set $\mZ^+$ with an associated sigma-algebra $B_{\mZ^+}$,
and consider an observation map $o:\mZ^+ \to \mO$.
Then, a map $\tilde T: \mO \to B_{\mZ^+}$
is a $1-\alpha$-prediction region for $Z^+$ based on $o(Z^+)$ if for all $P\in \mP$, $\bP_{Z^+ \sim P}(Z^+ \in \tilde T(o(Z^+)) ) \ge 1-\alpha$.

In principle, we can define $Z^+$ to be an arbitrary quantity that is associated with $P$, and thus we could also consider it to be the pair of the parameter and the observation $Z$  we have considered before, i.e., $Z^+ = (\theta(P),Z)$.
This is allowed by the formal definition of prediction regions; 
but is a bit unusual. 
Thus, formally, our notion of JCR can be viewed as an instance of standard prediction regions.
However, 
considering JCRs as we do here---and separating their coverage target into a deterministic parameter and a stochastic observable---leads a number of new insights, illustrated throughout our paper.
This supports that our JCR notion is a valuable addition to statistical methodology.

Returning to prediction regions, to understand JCRs, it is thus helpful to consider a different form of a prediction region, which can also depend on the generally unobserved parameter $\theta(P)$.

\begin{definition}[Parameter-Aware Prediction Region] \label{def-pred}
     We say that $T:\Theta\times\mathcal{O}\to B_\mZ$ is a $1-\alpha$-parameter-aware prediction region for $Z$ if for all $P \in \mP$,
    $\bP_{Z \sim P}(Z \in T(\theta_P,o(Z)) ) \ge 1-\alpha.$
\end{definition}

In general, a parameter-aware prediction region $T$ depends on the unknown parameter $\theta_P$, and is thus not practically implementable.
However, as before, it turns out that this notion is also useful for understanding JCR, as again they are in a one-to-one correspondence.
Further, if we only have a pure prediction problem, i.e., $\theta_P$ is a constant independent of $P$, 
then 
a parameter-aware prediction region becomes a usual prediction region.
Such a region can be constructed directly from a JCR by dropping the component in the $\Theta$ space.
There are important pure prediction examples, in particular in the area of conformal prediction.

Given a standard  $1-\alpha_1$-confidence region $\tilde C$ and $1-\alpha_2$-prediction region $\tilde T$, 
direct ways to define JCRs include 
$\tilde C(o) \times \mZ$ (a $1-\alpha_1$-JCR) and
$\Theta \times \tilde T(o)$ (a $1-\alpha_2$-JCR),
which however are informative in only one coordinate.
An alternative
is via the intersection
$
J(o) = \left(\tilde C(o) \times \mZ\right) \cap \left(\Theta \times \tilde T(o)\right),
$
which is a $1-(\alpha_1+\alpha_2)$ JCR. Indeed, 
$$
P(Z\in J(o(Z))) = P(\theta_P\in \tilde C(o(Z)) \textnormal{ or } Z \in \tilde T(o(Z)))
\ge 1-(\alpha_1+\alpha_2).
$$
However, this JCR \emph{does not take into account the relation} between the parameter and the data, and thus generally does not reflect the structure of the statistical problem. 
For instance, if the data $Z$ to be predicted  has the form $Z = \theta_P + \ep$ for some noise $\ep$, then we expect that a reasonable JCR could be a "band" in $\Theta\times \mZ$. This would capture the  relation between the parameter and the data.

With Definitions \ref{def-conf} and \ref{def-pred},
we can construct a
full-data    confidence region $C$ from a JCR $J$ by defining $C(z)$, for all $z\in\mZ$, as 
    \begin{equation}\label{eq:J-C}
     C(z) =\{\theta\in \Theta : 
      (\theta,z) \in J(o(z)) \}.
    \end{equation}
    
Equivalently, $C(z) =\cup_{\theta \in \Theta} \{\theta:\, (\theta,z) \in J(o(z)) \}$, or more abstractly
$C(z) =\Phi_{\Theta}[J(o(z)),z]$.
We can also write 
$C(z) =\Pi_{\Theta}[ J(o(z)) \cap (\Theta\times \{z\})]$.
See Figure \ref{fig:J-C-T} for an illustration of this and the following constructions.

We can also   
construct a parameter-aware prediction region $T$ by defining $T(\theta,o^*)$, for all $\theta\in\Theta$, $o^*\in\mathcal{O}$, as
\begin{equation}\label{eq:J-T}
T(\theta,o^*) = \{z\in\mZ: 
o(z) = o^*,\,
(\theta, z) \in J(o^*)\}.
\end{equation}
More abstractly,
$T(\theta,o^*) = \Phi_{\mZ}[J(o^*),\theta] \cap o^{-1}(o^*)$; see Figure \ref{fig:J-C-T}.
Further, we can construct a JCR $J$ based on a full-data confidence region $C$ via  
\begin{equation}\label{eq:C-J}
J(o^*) = \{(\theta, z)\in\Theta\times\mZ:
o(z) = o^*,\,
\theta \in C(z)\}. 
\end{equation}
More abstractly, $J(o^*) = \cup_{z\in o^{-1}(o^*)} \left(C(z) \times \{z\}\right)$. See Section \ref{proof:measurability} for conditions under which this construction leads to a measurable function $J$.
Finally, we can construct a JCR $J$ based on a  parameter-aware prediction region $T$ by 
\begin{equation}\label{eq:T-J}
J(o^*) = \left\{(\theta,z) 
\in\Theta\times\mZ: 
o(z) = o^*,\,
z \in T(\theta,o^*) \right\}.
\end{equation}
More abstractly, $J(o^*) = 
\cup_{\theta\in\Theta} \{\theta\} \times [T(\theta,o^*) \cap o^{-1}(o^*)]$. 
See Section \ref{proof:measurability} for conditions under which this construction leads to a measurable function $J$.
\begin{figure}[ht]
    \centering
    \includegraphics[height = 6cm, width =6.81cm]{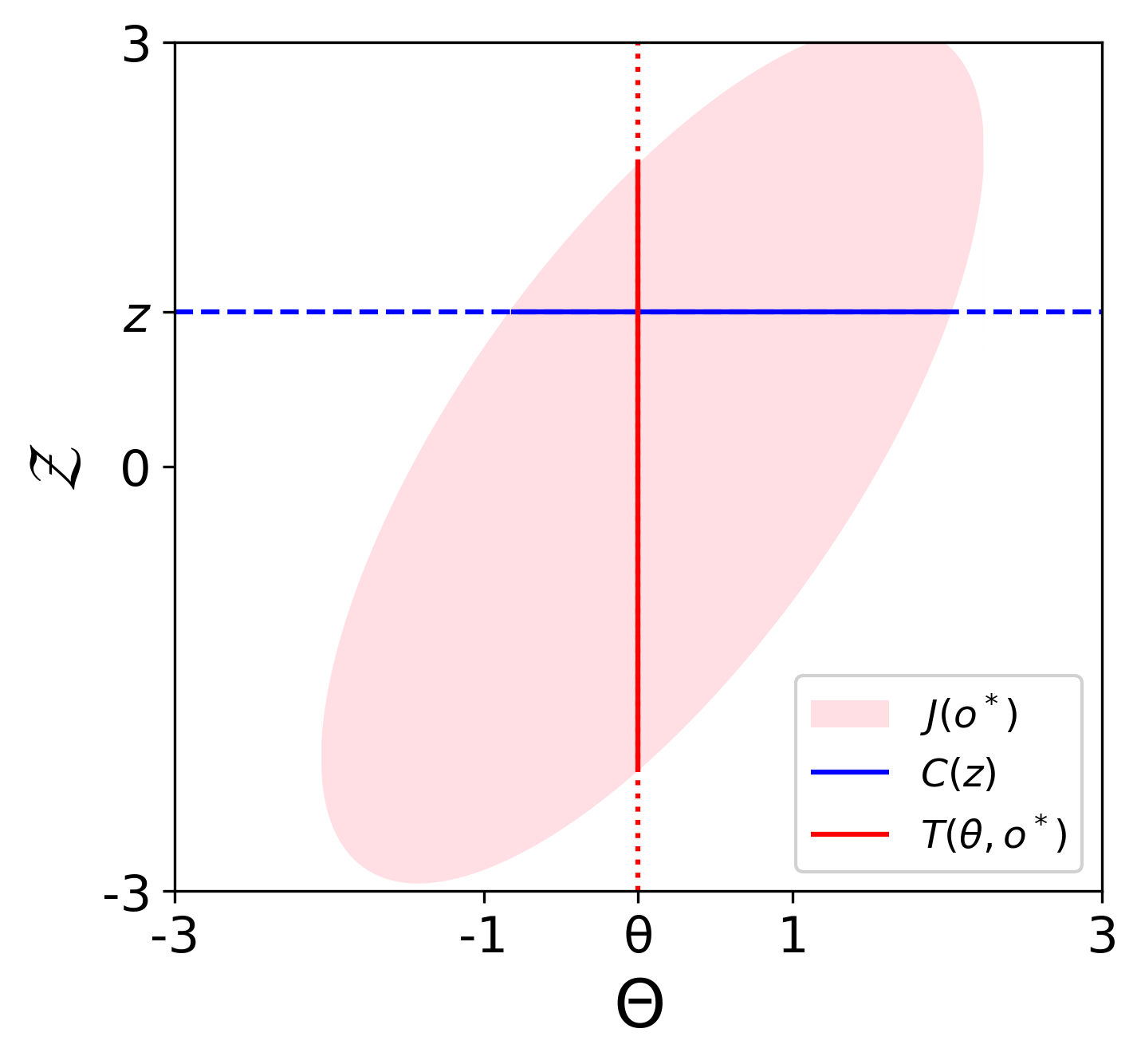}
    \caption{Visualizing the correspondences within the trio of regions. 
    The construction \eqref{eq:J-C}, \eqref{eq:J-T}, \eqref{eq:C-J}, \eqref{eq:T-J} can then be understood in the natural way. For instance, Equation \eqref{eq:J-C} can be viewed as taking a section of $J$ over $z \in \mZ$. Conversely, by considering \eqref{eq:C-J}, we merge the region $C(z)$ for all valid $z \in \mZ$ that satisfy $o(z) = o^*$. Equation \eqref{eq:J-T}, \eqref{eq:T-J} can also be understood in a similar way.}
    \label{fig:J-C-T}
\end{figure}

The following result shows that these operations are inverses:
\begin{lemma}\label{equivalance}
We have the following:
\begin{enumerate}
    \item Given any region $J$, construct the region $C$ using \eqref{eq:J-C} and then construct the region $\tilde J$ using \eqref{eq:C-J}. Then $\tilde J = J$.
    \item Given any region $J$, construct the region $T$ using \eqref{eq:J-T} and then construct the region $\tilde J$ using \eqref{eq:T-J}. Then $\tilde J = J$.
\end{enumerate}
\end{lemma}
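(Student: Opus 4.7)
The lemma consists of two symmetric claims asserting that the constructions \eqref{eq:J-C}--\eqref{eq:C-J} and \eqref{eq:J-T}--\eqref{eq:T-J} are mutually inverse. My plan is a direct element-chasing argument in each direction, verified for a fixed observable $o^* \in \mathcal{O}$ by two inclusions. Throughout, I use the convention—already built into the defining formulas \eqref{eq:C-J} and \eqref{eq:T-J} via the clause $o(z) = o^*$, and present in every construction of Section \ref{sec:c}—that $J(o^*)$ contains only pairs $(\theta, z)$ with $o(z) = o^*$.

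For Part 1, let $J$ be given, let $C$ be obtained via \eqref{eq:J-C}, and let $\tilde J$ be obtained from $C$ via \eqref{eq:C-J}. In the $\subseteq$ direction, any $(\theta, z) \in \tilde J(o^*)$ satisfies $o(z) = o^*$ and $\theta \in C(z)$; unfolding \eqref{eq:J-C} gives $(\theta, z) \in J(o(z)) = J(o^*)$. In the $\supseteq$ direction, take $(\theta, z) \in J(o^*)$, so that $o(z) = o^*$. Then \eqref{eq:J-C} gives $\theta \in C(z)$, and combined with $o(z) = o^*$ this places $(\theta, z)$ in $\tilde J(o^*)$ by \eqref{eq:C-J}.

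For Part 2, the argument is structurally identical after swapping the roles of the $\Theta$- and $\mZ$-slices. Given $J$, define $T$ via \eqref{eq:J-T} and $\tilde J$ via \eqref{eq:T-J}. For the $\subseteq$ inclusion, $(\theta, z) \in \tilde J(o^*)$ yields $o(z) = o^*$ and $z \in T(\theta, o^*)$, which by \eqref{eq:J-T} forces $(\theta, z) \in J(o^*)$. For the $\supseteq$ inclusion, $(\theta, z) \in J(o^*)$ together with $o(z) = o^*$ gives $z \in T(\theta, o^*)$ via \eqref{eq:J-T}, hence $(\theta, z) \in \tilde J(o^*)$ by \eqref{eq:T-J}.

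The algebra is tautological once the fiber convention is made explicit; the only point deserving care, rather than a genuine obstacle, is that both reconstructions produce sets contained in $\Theta \times o^{-1}(o^*)$, so the equality $\tilde J = J$ presupposes this canonical form for the input $J$. Measurability of the reconstructed region is a separate issue, which I will defer to Section \ref{proof:measurability} along with the analogous remarks for \eqref{eq:C-J} and \eqref{eq:T-J}.
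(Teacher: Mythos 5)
Your proof is correct and takes essentially the same route as the paper's: a direct element-chasing argument establishing both inclusions for each fixed $o^*\in\mathcal{O}$. Your explicit remark that the equality $\tilde J = J$ presupposes $J(o^*)\subseteq \Theta\times o^{-1}(o^*)$ usefully makes precise a convention the paper leaves implicit when it writes ``suppose that $(\theta,z)\in J(o^*)$ and $o^*=o(z)$.''
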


\begin{proof}
For the first claim, fix $o^*\in \mathcal{O}$. Suppose that $(\theta,z)\in J(o^*)$ and $o^*=o(z)$. Then, since $\theta \in \Phi_\Theta(J(o(z))$, by \eqref{eq:J-C} we have that $\theta \in C(z)$; or equivalently $(\theta,z)\in C(z)\times \{z\}$. 
Hence, by \eqref{eq:C-J} it follows that $(\theta,z)\in \tilde J(o(z)) = \tilde J(o^*)$. This shows that $J(o^*) \subset \tilde J(o^*)$. 
Similarly, suppose that $(\theta,z)\in \tilde J(o^*)$. 
Then, since $o^*=o(z)$, by \eqref{eq:C-J} we have $(\theta,z)\in C(z)\times \{z\}$, or equivalently $\theta \in C(z)$. Thus, by \eqref{eq:J-C} it follows that $\theta \in \Phi_\Theta(J(o(z))$, and thus $(\theta,z)\in J(o^*)$.
Since these claims hold for all $o^*\in \mathcal{O}$, it follows that $J = \tilde J$.

The proof of the second claim is similar. Fix $o^*\in \mathcal{O}$. 
Suppose that $(\theta,z)\in J(o^*)$ and $o^*=o(z)$. 
Then, since $z\in \Phi_{\mZ}[J(o^*)] \cap o^{-1}(o^*)$, by \eqref{eq:J-T} we have that $z \in T(\theta,o^*)$. Hence, by \eqref{eq:T-J} it follows that $(\theta,z)\in \tilde J(o(z)) = \tilde J(o^*)$. This shows that $J(o^*) \subset \tilde J(o^*)$. 
Similarly, suppose that $(\theta,z)\in \tilde J(o^*)$. Then, since $o^*=o(z)$, by \eqref{eq:T-J} we have $z \in T(\theta,o^*)$. Thus, by \eqref{eq:J-T} it follows that $(\theta,z)\in J(o^*)$.
Since these claims hold for all $o^*\in \mathcal{O}$, it follows that $J = \tilde J$.
\end{proof}

This shows that the three regions are in a one-to-one correspondence. We call such a triple $(J,C,T)$ a \emph{trio of regions}.
\begin{definition}[Trio of Regions]\label{trio}
We say that $(J,C,T)$ are a trio of regions if they satisfy \eqref{eq:J-C}, \eqref{eq:J-T}, \eqref{eq:C-J} and \eqref{eq:T-J}.
\end{definition}

The relationship between the elements of a trio is shown in Figure \ref{fig:trio}. 
We also have the following result:
\begin{lemma}\label{J-RC}
 Given a $1-\alpha$ JCR $J$, the region  $C$ from \eqref{eq:J-C} is a $1-\alpha$ confidence region, and the region $T$ from \eqref{eq:J-T} is a $1-\alpha$ prediction region.
\end{lemma}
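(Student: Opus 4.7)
The plan is to unwind the definitions in \eqref{eq:J-C} and \eqref{eq:J-T} and check that, under the natural events, membership in $C$ or $T$ is equivalent to membership in $J$; then the desired probability bounds follow immediately from the JCR coverage guarantee \eqref{eq:J}.

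First I would handle the confidence component. Fix $P\in\mP$ and a realization $Z$. By construction \eqref{eq:J-C}, we have
\[
\theta_P \in C(Z) \iff (\theta_P, Z) \in J(o(Z)).
\]
Taking probabilities and applying the JCR guarantee yields
\[
\bP_{Z \sim P}\bigl(\theta_P \in C(Z)\bigr) = \bP_{Z \sim P}\bigl((\theta_P, Z) \in J(o(Z))\bigr) \ge 1-\alpha,
\]
so $C$ meets Definition \ref{def-conf}.

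Next, for the prediction component, fix $P\in\mP$ and consider the event with $o^* = o(Z)$. By \eqref{eq:J-T},
\[
Z \in T(\theta_P, o(Z)) \iff o(Z) = o(Z) \text{ and } (\theta_P, Z) \in J(o(Z)),
\]
where the first clause is automatic. Thus the event $\{Z \in T(\theta_P, o(Z))\}$ is identical to $\{(\theta_P, Z) \in J(o(Z))\}$, and again the JCR bound gives
\[
\bP_{Z \sim P}\bigl(Z \in T(\theta_P, o(Z))\bigr) \ge 1-\alpha,
\]
matching Definition \ref{def-pred}.

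I do not anticipate a real obstacle; the only subtlety is measurability of $C$ and $T$ as constructed from $J$, which is implicit via the convention in the paper that the relevant sections and preimages are measurable (and is deferred to Section \ref{proof:measurability} for the converse directions). Hence both claims follow from a single application of the defining inequality \eqref{eq:J} after rewriting the events.
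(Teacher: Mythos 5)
Your proposal is correct and follows essentially the same route as the paper's own proof: in both cases one observes that, by \eqref{eq:J-C} the event $\{\theta_P \in C(Z)\}$ coincides with $\{(\theta_P,Z)\in J(o(Z))\}$, and by \eqref{eq:J-T} the clause $o(Z)=o(Z)$ is vacuous so $\{Z\in T(\theta_P,o(Z))\}$ is the same event, after which the bound \eqref{eq:J} applies directly. Nothing is missing.
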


\begin{proof}
    For a given $1 - \alpha$ JCR $J$, from \eqref{eq:J-C} we have $\theta \in C(z)$ is equivalent to $(\theta,z) \in J(o(z))$. Combining this with \eqref{eq:J} we have:
    \begin{align*}
        \bP_{Z \sim P}\biggl(\theta_P \in C(Z)\biggr) = \bP_{Z \sim P}\biggl(\left(\theta_P, Z\right) \in J\left( o(Z)\right)\biggr) \ge 1 - \alpha,
    \end{align*}
    which shows that $C$ is a $1 - \alpha$ confidence region.
    
    Similarly, for the second claim, we know from \eqref{eq:J-T} that $T(\theta,o^*)$ includes all $z$ that satisfies $o(z) = o^*$ and $(\theta,z) \in J(o^*)$. Specifically, the first condition will always be satisfied when $o^* = o(z)$. Combining this with \eqref{eq:J} we have:
    \begin{align*}
        \bP_{Z \sim P}\biggl(Z\in T(\theta_P,o(Z))\biggr) &= 
        \bP_{Z \sim P}\biggl(o(Z) = o(Z),\left(\theta_P, Z\right) \in J\left( o(Z)\right)\biggl) \\
        &=\bP_{Z \sim P}\biggl(\left(\theta_P, Z\right) \in J\left( o(Z)\right)\biggr)
        \ge 1 - \alpha,
    \end{align*}
    which shows that $T$ is a $1 - \alpha$ prediction region.
\end{proof}
Combined with the previous result, this shows the following corollary:

\begin{corollary}\label{RC-J}
We have the following:
\benum
    \item Given a $1-\alpha$ confidence region $C$, the region  $J$ from \eqref{eq:C-J} is an $1-\alpha$ JCR.
    \item Given a $1-\alpha$ prediction region $T$, the region  $J$ from \eqref{eq:T-J} is an $1-\alpha$ JCR.
\eenum
\end{corollary}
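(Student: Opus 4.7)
The plan is to prove both claims by direct unwinding of the construction formulas, observing that in each case the event $(\theta_P, Z) \in J(o(Z))$ reduces exactly to the coverage event for the base region. In particular, I would not appeal to Lemma \ref{equivalance} in its stated form (which only guarantees that starting from a JCR, going to $C$ or $T$ and back reproduces the same JCR); instead, I would exploit the very specific structure of \eqref{eq:C-J} and \eqref{eq:T-J}, which always produce sets that ``live on the fiber'' $o^{-1}(o^*)$.

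For the first claim, suppose $C$ is a $1-\alpha$ full-data confidence region and let $J$ be given by \eqref{eq:C-J}. For any $z \in \mZ$ and $\theta \in \Theta$, I would check that $(\theta, z) \in J(o(z))$ if and only if $o(z) = o(z)$ and $\theta \in C(z)$, where the first condition is trivially satisfied. Hence, for $Z \sim P$,
\begin{align*}
    \bP_{Z\sim P}\bigl((\theta_P, Z) \in J(o(Z))\bigr) = \bP_{Z\sim P}\bigl(\theta_P \in C(Z)\bigr) \ge 1-\alpha,
\end{align*}
by the defining property of $C$ (Definition \ref{def-conf}).

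For the second claim, suppose $T$ is a $1-\alpha$ parameter-aware prediction region and let $J$ be given by \eqref{eq:T-J}. I would unwind the definition: $(\theta, z) \in J(o(z))$ holds iff $o(z) = o(z)$ (again trivially) and $z \in T(\theta, o(z))$. Therefore, for $Z \sim P$,
\begin{align*}
    \bP_{Z\sim P}\bigl((\theta_P, Z) \in J(o(Z))\bigr) = \bP_{Z\sim P}\bigl(Z \in T(\theta_P, o(Z))\bigr) \ge 1-\alpha,
\end{align*}
by the defining property of $T$ (Definition \ref{def-pred}).

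Since both constructions involve no probabilistic manipulation beyond identifying the coverage event, I do not anticipate any substantive obstacle. The only subtlety worth flagging is measurability of the constructed set $J$; this is addressed by the forward reference to Section \ref{proof:measurability}, so I would simply cite that to justify that \eqref{eq:C-J} and \eqref{eq:T-J} yield measurable JCRs. The proof is in essence a one-line calculation in each case, mirroring Lemma \ref{J-RC} but traversing the correspondence in the opposite direction.
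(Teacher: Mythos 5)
Your proof is correct, and the key step---observing that $(\theta_P,Z)\in J(o(Z))$ is literally the same event as $\theta_P\in C(Z)$ (resp.\ $Z\in T(\theta_P,o(Z))$) because the constraint $o(z)=o^*$ is automatically satisfied when $o^*=o(Z)$---is exactly the computation that powers the paper's argument. The only difference is packaging: the paper obtains the corollary by combining Lemma \ref{equivalance} with Lemma \ref{J-RC}, i.e.\ by running the round-trip through the trio correspondence, whereas you unwind \eqref{eq:C-J} and \eqref{eq:T-J} directly. Your route is arguably the tighter one, since Lemma \ref{equivalance} as stated only covers the $J\to C\to J$ and $J\to T\to J$ round trips (not $C\to J\to C$), so the paper's derivation implicitly relies on the same direct identification of events that you make explicit; your remark on deferring measurability to Section \ref{proof:measurability} matches the paper's treatment.
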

Finally, we conclude that JCRs, full-data confidence regions, and parameter-aware prediction regions are in a one-to-one correspondence.

We also explain the connection between pivotal JCRs and classical pivotal constructions of confidence and prediction regions.
Consider the pivotal JCR from \eqref{pivjcr}.
In the trio of regions from Definition \eqref{trio}, the associated confidence region for $\theta$ is the classical confidence region based on the pivot $L$:
 $C(z) = \{\theta \in \Theta : L(\theta,z) \in S\}.$
This shows that pivotal JCRs and pivotal confidence regions are in a one-to-one correspondence.

\subsection{When do Pivots Exist?}
\label{pive}
Here we review conditions for statistical models under which pivots exist, to illustrate the range of problems to which JCRs apply. 
See e.g.,
\cite{fraser1966structural,fraser1968structure,fraser1971events,brenner1983models,barnard1995pivotal,fraser1996some},
Section 7.1.1 of \cite{shao2003mathematical}
for references on pivotal variables.
Standard confidence regions with finite sample coverage usually 
require the existence of pivots,
and thus our methods are typically applicable whenever standard confidence regions can be constructed.

As mentioned in the main text, pivots exist for any parametric statistical model with independent continuously distributed scalar observations (Proposition 7.1 of \cite{shao2003mathematical}).
Specifically, suppose that
for some $a\ge 1$,
$Z = (Z_1, \ldots, Z_a)$, where $Z_a \in \R$ are independent scalar random variables with continuous distributions $Z_a\sim F_{\theta_i(P)}$. 
Then,
for $\theta(P) = (\theta_I(P),\ldots,\theta_a(P))$,
and for any measurable function $\tau:[0,1]^a\to \R$,
$L(\theta(P),Z) = \tau(F_{\theta_I(P)}(Z_1),\ldots,F_{\theta_a(P)}(Z_a))$
is a pivot.

Another example is provided by injective data generating models, which are often referred to as structural or structured models \citep{fraser1966structural,fraser1968structure,fraser1971events,brenner1983models,fraser1996some}. 
Suppose $Z = f_{\theta_P}(\ep)$, where $\ep$ is noise with a fixed distribution $Q$ over some measurable space $E$, and for all $P\in\mP$, $f_{\theta_P}:E\to \mZ$ is injective. 
Then, having observed $Z=z$, we can write equivalently that $f_{\theta_P}^{-1}(z)=\ep$,
where $f_{\theta_P}^{-1}(z) \in E$ is the unique value such that $f_\theta(f_{\theta_P}^{-1}(z))=z$.
Thus, $L(\theta,Z) = f_\theta^{-1}(Z)\sim Q$ 
is a pivotal random variable.
A key example is group invariance models or structural models \citep{fraser1968structure}, where for some group $\mH$, and injective group action $h $, $Z$ follows the model $Z = h \ep$. Then, $h^{-1}Z = \ep$ is a pivotal random variable.
Classical examples include location-scale families and data with sign-symmetric or spherically distributed noise.


To illustrate the breadth of these models, we discuss
the example of 
Gaussian linear mixed effects models
$Y = X\beta + W\gamma +\ep,$
where $Y$ is the $n\times 1$ vector of outcomes, $X$ is the $n\times p$  matrix of  features with deterministic effects, 
$W$ is the $n \times p'$  matrix of features with random effects, 
$\beta$ is the $p\times 1$ vector of fixed effects,
$\gamma \sim \N(0,\Gamma)$ is the $p'\times 1$ vector of random effects 
and $\ep\sim \N(0,\sigma^2 \Sigma)$ is the random noise.
Here $\Sigma$ is assumed known.
There are a wide range of special cases, such as various ANOVA models.
We may consider $\Gamma $ and $\sigma^2$ known or unknown.
Then, the model is equivalent to 
$$Y = X\beta + (W\Gamma W^\top+\sigma^2 \Sigma)^{1/2}\ep',$$
for some noise $\ep'\sim \N(0,I_n)$.
If $X,W$ are observed, 
this can be viewed as an injective generative model with $\theta = (X\beta,  (W\Gamma W^\top+\sigma^2 \Sigma)^{1/2})$, and $f_\theta(\ep')$ as displayed above.
Then, consistent with injective generative models,
$$L = (W\Gamma W^\top+\sigma^2 \Sigma)^{-1/2}(Y -X\beta)\sim \N(0,I_n)$$
is a pivot.
Moreover, $L$ is still a pivot even if---some parts of---$X,W$ are not observed. 
This is related to the setting of inverse regression \citep{williams1959regression,krutchkoff1967classical}, where part of $X$ is unobserved.

\subsection{Considerations}
\label{cons}
Here we discuss several crucial considerations for constructing pivotal JCRs.


{\bf Discreteness.}
If the distribution of the pivot $L$, taking values in $\R^m$ for some $m\ge 0$, is not absolutely continuous with respect to the Lebesgue measure,
there may not exist a set $S$ such that $Q(S)=1-\alpha$. 
This can be resolved by considering randomized decision rules $\phi: \mathcal{L} \to [0,1]$, such that we include $l\in \mathcal{L}$ in the region with probability $\phi(l)$. Then, we can find $\phi$ such that $\E_{L\sim Q} \phi(L(\theta,Z)) = 1-\alpha$.
A randomized JCR includes $(\theta,Z)$ in the region with probability $\phi(L(\theta,Z))$. Clearly, this region has exact $1-\alpha$ coverage. 
In this work, we mainly consdider deterministic JCRs.


{\bf Asymptotic pivots.} 
We can obtain asymptotic coverage given 
a sequence of 
asymptotically pivotal random variables.
We consider an 
asymptotic setting 
where
all quantities 
are indexed by 
an index
$n \in \mathbb{N}_+$.
Thus, 
there is a sequence of statistical models 
$(\mP_n)_{n\ge 1}$,
a sequence of probability distributions
$(P_n)_{n\ge 1}$, 
observations $(z_n)_{n\ge 1}$, etc.
Suppose that we have a random variable
$(L_n)_{n\ge 1}$,
$L_n:\Theta_n\times\mZ_n \to \mL$, 
for some fixed measurable space $\mL$ that does not depend on $n$.
Suppose
that 
when $Z_n \sim P_n$, 
$L_n(\theta_n(P_n),Z_n)$
has distribution 
$(Q_n)_{n\ge 1}$, 
which may depend on $P_n$.
Suppose
that $L_n$ is an asymptotic pivot in the sense that
the limiting distribution $\lim_{n 
\to \infty}Q_n = Q$ 
exists and
does not depend on 
the sequence
$(P_n)_{n\ge 1}$.

Let $S\subset \mL$ be a measurable set such that $Q(S)\ge 1-\alpha$. 
Then, we can 
 construct an asymptotic
 $1-\alpha$-JCR for $(\theta_n,Z_n)$ 
 via
\beq\label{pivjcra}
J_n(o_n^*)= \left\{(\theta_n,z_n) 
\in\Theta_n\times\mZ_n: 
o_n(z_n) = o_n^*,\,
L_n(\theta_n,z_n) \in S \right\}.
\eeq

\begin{corollary}
Suppose that 
$\liminf_{n\to\infty} Q_n(S) \ge Q(S)$.
Then
equation \eqref{pivjcra}
returns an asymptotically $1-\alpha$-joint coverage region in the sense that
\begin{equation*}
   \liminf_{n \rightarrow \infty} \bP_{Z_n \sim P_n}\biggl(\left(\theta_n(P_n), Z_n\right) \in J_n\left( o_n(Z_n)\right)\biggr) \ge 1 - \alpha.
\end{equation*}
\end{corollary}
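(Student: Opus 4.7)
The plan is to reduce the asymptotic coverage claim to a single per-$n$ identity for the coverage probability, mirroring the proof of Proposition \ref{prop:pivot}, and then take the $\liminf$ using the hypothesis on $Q_n(S)$. The construction \eqref{pivjcra} is, for each fixed $n$, precisely the pivotal JCR of \eqref{pivjcr} applied to the (not-necessarily-exact) pivot $L_n$ with distribution $Q_n$. So the main work is showing that, at each finite $n$, the coverage probability equals $Q_n(S)$ exactly, after which the corollary follows from $\liminf_n Q_n(S) \ge Q(S) \ge 1-\alpha$.

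\textbf{Step 1: Rewrite the coverage event.} For each $n$, I would unpack the membership condition in \eqref{pivjcra} at the true pair $(\theta_n(P_n), Z_n)$. Observe that the constraint $o_n(z_n) = o_n^*$ with $o_n^* = o_n(Z_n)$ is automatically satisfied; thus
\begin{equation*}
\bigl\{(\theta_n(P_n), Z_n) \in J_n(o_n(Z_n))\bigr\} = \bigl\{L_n(\theta_n(P_n), Z_n) \in S\bigr\}.
\end{equation*}
This is the same reduction that underlies the finite-sample pivotal case, and does not use any asymptotics.

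\textbf{Step 2: Evaluate the probability.} Since $L_n(\theta_n(P_n), Z_n) \sim Q_n$ when $Z_n \sim P_n$, the previous display gives
\begin{equation*}
\bP_{Z_n \sim P_n}\bigl((\theta_n(P_n), Z_n) \in J_n(o_n(Z_n))\bigr) = Q_n(S).
\end{equation*}
This identity is exact for every $n$.

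\textbf{Step 3: Take the liminf.} By the hypothesis $\liminf_{n\to\infty} Q_n(S) \ge Q(S)$ and the choice $Q(S) \ge 1-\alpha$, we conclude
\begin{equation*}
\liminf_{n\to\infty}\bP_{Z_n \sim P_n}\bigl((\theta_n(P_n), Z_n) \in J_n(o_n(Z_n))\bigr) = \liminf_{n\to\infty} Q_n(S) \ge 1-\alpha,
\end{equation*}
which is the desired asymptotic $1-\alpha$ coverage.

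\textbf{Main obstacle.} There is no substantive analytic obstacle: the argument is essentially a direct transcription of Proposition \ref{prop:pivot} combined with a $\liminf$. The only mild subtlety is the hypothesis $\liminf_n Q_n(S) \ge Q(S)$; this is not automatic from $Q_n \Rightarrow Q$ in distribution unless $S$ is, e.g., open (portmanteau) or a $Q$-continuity set. I would therefore flag that when $L_n$ is asymptotically pivotal only in a weak-convergence sense, $S$ should be chosen so that this inequality holds (for instance, an open set, or a set whose boundary has $Q$-measure zero), and otherwise one can replace $S$ by an inner approximation. Standard measurability of the sets defining $J_n$ (inherited from measurability of $L_n$ and $o_n$) should be stated but presents no difficulty.
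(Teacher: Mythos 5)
Your proof is correct and follows exactly the argument the paper intends: the paper states this corollary without an explicit proof, treating it as immediate from the finite-$n$ pivotal identity $\bP\bigl((\theta_n(P_n),Z_n)\in J_n(o_n(Z_n))\bigr)=Q_n(S)$ (the same reduction as in Proposition \ref{prop:pivot}) followed by taking the $\liminf$ under the stated hypothesis. Your closing remark that $\liminf_n Q_n(S)\ge Q(S)$ is not automatic under weak convergence unless $S$ is open or a $Q$-continuity set is a valid observation and explains why the paper includes that inequality as an explicit hypothesis.
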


\subsection{Proofs}

\subsubsection{Measurability}\label{proof:measurability}
We provide conditions under which the constructions from Section \ref{conn} are measurable.
For $z\in \mZ$ and  
$J \subset  \Theta\times \mZ$,
recall that
$\Phi_\Theta(J,z) =\cup_{\theta \in \Theta} \{\theta:\, (\theta,z) \in J \}$. 
Given $z\in \mZ$,
if $J'=J(o(z)) \in B_{\Theta \times \mZ}$ is measurable, we aim to prove that $\Phi_\Theta(J',z)$  is $B_{\Theta}$-measurable.
To see this, we will show that 
$B' := \{J: J \subseteq B_{\Theta \times \mZ}, \Phi_\Theta(J,z) \subseteq B_{\Theta} \} = B_{\Theta \times \mZ}$.
    
    First, we show that $B'$ is a sigma-algebra. 
    Since $\Phi_\Theta(\Theta \times \mZ,z) = \Theta \in B_\Theta$,
    we have that $\Theta \times \mZ \in B'$. 
    For $J_1,J_2,\ldots,J_n,\ldots \in B'$, we have that $\Phi_\Theta(\cup_{i=1}^\infty J_i,z) = \cup_{i=1}^\infty \Phi_\Theta(J_i,z) \in B_{\Theta}$, thus $\cup_{i=1}^\infty J_i \in B'$. 
    In addition, for $J \in B'$, we have $\Phi_\Theta(J^c,z) = (\Phi_\Theta(J,z))^c \in B_\Theta$, thus we find $J^c \in B'$. 
    Thus $B'$ is a sigma-algebra.
    
    Now, for any set $J = 
    D_\Theta \times D_\mZ \in 
    B_\Theta \times B_\mZ$,
    we have that $\Phi_\Theta(J,z) = D_\Theta$.
    Thus, $B_\Theta \times B_\mZ \subseteq B'$. 
    Since $B'$ is a sigma-algebra, we have that $B_{\Theta \times \mZ} = \sigma(B_\Theta \times B_\mZ) \subseteq B'$; i.e., the 
    sigma-algebra generated by $B_\Theta \times B_\mZ$ is a sub-sigma algebra of $B'$. 
    Combined with $B' \subseteq B_{\Theta \times \mZ}$, which holds by definition, we find that $B' = B_{\Theta \times \mZ}$, which shows that $\Phi_\Theta(J',z)$ is measurable for $J' \in B_{\Theta \times \mZ}$.

For 
$T(\theta,o^*) = \cup_{Z \in \mZ}\{Z: (\theta, Z) \in J(o^*)\} \cap o^{-1}(o^*) $, the first term in the intersection  is
measurable due to an argument similar 
to the one above.
If $o$ is a measurable map
and the singleton  $\{o^*\}$ belongs to the sigma-algebra $B_\mO$, 
the second term is also measurable.

\subsubsection{Proof of Theorem \ref{thm:conditional-JCR-piv}}\label{proof:conditional-JCR-piv}

   We have $L \sim Q_v$ conditionally on $V(\theta,z)=v$
   for $P_V$-almost every $v\in\mV$.
   For such $v$, we have 
   $
   \bP[L(\theta,z) \in S(v)|V(\theta,z)=v] \ge 1-\alpha.
   $
   Thus, 
   $$
      \bE[\bE[I((\theta,z) \in J(o(z)))|\, V(\theta,z)=v]] \ge \bE_{P_V}(1-\alpha) = 1-\alpha.
   $$
   Hence, \eqref{jcp} returns a $1-\alpha$ JCR and this finishes the proof.

\subsubsection{Proof of Proposition \ref{prop:gsm}}\label{proof:gsm}

Since the mapping $\psi: E \times \mathcal{V} \to \mathcal{L}$ is fixed and $\ep$ has a fixed distribution, 
when we condition on $V(\theta_P,Z)=v$ for arbitrary $v \in \mathcal{V}$, we
find $\psi(\ep,v)\sim Q_v$
for some $Q_v$ determined 
by $v$, the distribution of $\ep$,
and $\psi$. 
Thus, for any $P \in \mP$ and for $P_V$-a.e. $v$, conditionally on $V(\theta_P,Z)=v$, $L(\theta_P,Z)\sim Q_v$, 
which shows it is a conditional pivot.

\subsubsection{Proof of Theorem \ref{thm:conditional-JCR}}\label{proof:conditional-JCR}
  
Since $L\sim Q_v$ conditionally on $V(\theta_P,Z)=v$, for $P_V$-almost every $v \in \mathcal{V}$,
$m(L(\theta,Z)) \sim  m(Q_{v})$,
conditionally on $V(\theta_P,Z)=v$, for $P_V$-almost every $v \in \mathcal{V}$. 
For these $v \in \mV$, 
$P((\theta,Z) \in J(o(Z))) \ge 1-\alpha$ conditionally on $V(\theta_P,Z)=v$, 
from 
\eqref{def:conditional-JCR}
and
the definition of quantiles.
Consider the sigma-algebra $B_\mV'$ generated by $\{(\theta,z) : V(\theta,z) = v\}$, for $v\in \mV$.
Since $V$ is $B_{\Theta \times \mZ} \to B_\mV$ measurable,
$B_\mV' \subset B_{\Theta \times \mZ}$.
Since the conditional guarantee holds for $P_V$-almost every $v \in \mathcal{V}$,
    we find
        \begin{align*}
        \bE\,\bE　\left[ 1\big\{m(L(\theta,z)) \ge  q_{\alpha}(m(Q_v))\big\}|\, B_\mV' \right] \ge 1 - \alpha,
    \end{align*} 
which finishes the proof.
    
\subsubsection{Proof of Theorem \ref{thm:conditional-pivot-randomization}}\label{proof:conditional-pivot-randomization}

Since $m(L(\theta,z))\sim m(Q_v)$ conditionally on $V(\theta_P,Z)=v$, 
for $P_V$-almost every $v \in \mathcal{V}$, 
conditioning any $v \in \mV$ in this set, 
$m(L),M_{1:K}$ are i.i.d.~random variables. 
Then we have $P(m(L) \ge q_{\alpha'}(M_{1:K})) \ge 1- \alpha$,
see e.g., Chapter 11 in \cite{vovk2022algorithmic}. Hence, similarly to the proof of Theorem \ref{thm:conditional-JCR}, 
we find
$\bP_{Z;M_{1:K} \sim m(Q_{V(\theta,Z)})^K}\left((\theta_P,Z) \in J_{M_{1:K}}(o(Z))\right) \ge 1 - \alpha$,
which finishes the proof.

\subsubsection{Proof of Theorem \ref{thm:adequate-set}}\label{proof:adequate-set}

Due to \eqref{eq:W-J}, 
we have $$P((\theta,\Yte) \in J(\Zc,\Xte)) = P(A(\theta,\Xte,\Yte) \in W(\theta,\Zc)) = P((\theta,\Xte,\Yte) \in \tilde{J}(\Zc)).$$
 In addition, from arguments similar to those in Section \ref{proof:conditional-JCR}, we conclude that $P((\theta,\Xte,\Yte) \in \tilde{J}(\Zc)) \ge 1-\alpha$. Combining this with the equation above, we find 
 $P((\theta,\Yte) \in J(\Zc,\Xte))\ge 1-\alpha$, which finishes the proof.

\subsubsection{The Group Invariance Property}\label{sec:uniform}


For the uniform measure $U$ on $\mG$, 
and for some fixed $i\in \mI$
we let $G \sim U$ and $Gi$ be a random variable over $\mI$.
For a Borel set $B \in \mI$, 
we have,
for a distribution $\mu_i$ on $\mI$,
$
\mu_i(B):= \textnormal{Prob}(GI \in B) = U\{g:gi \in B\}.
$
We claim that $\mu_i$ is $\mG$-invariant.
Indeed, since $Gi\sim\mu_i$, we have for any $g \in \mG$
    that
    $g(Gi)\sim g\mu_i$.
    Since
    $(gG)i =_d Gi$,
    it follows that $\mu_i = g \mu_i$.

Taking an average over $I$
with respect to its distribution $P_I$,
we then find
that 
the distribution 
$P_I'$ of $GI$, defined by 
$
    P_I' = \int \mu_I P_I(dI)
$
is also $\mG$-invariant,
with 
$P_I'(B) = P_I'(gB)$ 
for any $B \in B_\mI$ and $g \in \mG$.
Thus,
letting $U_{O_I}$ be the $\mG$-invariant measure on $O_I$ defined by $P_I'$, 
we see that $I$ is uniform conditional on its orbit $O_I$, 
with distribution $U_{O_I}$ induced by the distribution $P_I'$ of $GI$ when $G \sim U$. 

\subsubsection{Proof of Theorem \ref{full-group-JCR}}\label{proof:full-group-JCR}
For a finite group $\mG = \{g_{1:K}\}$ with $|\mG|=K$, we denote the rank of $m(g_iI(\theta_P,Z))$ in $\{m(I)\}_{I \in O_I}$ by $R_i$:
$$R_i
    = \sum_{u=1}^K I
[ m(g_iI(\theta_P,Z)) \ge m(g_uI(\theta_P,Z)) ] + 1.
$$
Since the left coset of $\mG$ under $g_1$ is $\mG$, 
for any $j \in [K]$, there exists $l \in [K]$ such that $g_1g_l = g_j$. 
Since $I(\theta_P,Z) =_d g_jI(\theta_P,Z)$ for any $k \in [K]$, we have
\begin{align*}
    P_Z(R_1 = k)
    &= P_Z\left(\sum_{u=1}^K I
[ m(g_1I(\theta_P,Z)) \ge m(g_uI(\theta_P,Z)) ] = k-1\right)\\
    &= P_Z\left(\sum_{u=1}^K I
[ m(g_1g_lI(\theta_P,Z)) \ge m(g_ug_lI(\theta_P,Z)) ] = k-1\right) \\
    &= P_Z\left(\sum_{u=1}^K I
[ m(g_jI(\theta_P,Z)) \ge m(g_ug_lI(\theta_P,Z)) ] = k-1\right).
\end{align*}
Since $\{g_u g_l\}_{u\in[K]} = \{g_v\}_{v\in[K]}$,
\begin{align*}
    P_Z(R_1 = k)
    =& P_Z\left(\sum_{v=1}^K I
[ m(g_jI(\theta_P,Z)) \ge m(g_vI(\theta_P,Z)) ] = k-1\right) 
    = P_Z(R_j = k),
\end{align*}

Next, we first suppose that ties happen with zero probability.
Thus, $\{R_1,\ldots,R_K\} = [K]$
and $\sum_{i=1}^K P_Z(R_i = k) = 1$;
so that 
$P_Z(R_i = k) = 1/K$
for all $k \in [K]$. 
Hence, we obtain
\begin{align}\label{fe}
    P_Z\left(m(g_1I(\theta_P, Z)) \ge q_{\alpha'}
\left(\frac{1}{K}\sum_{i=1}^K \delta_{m(g_iI(\theta_P, Z))}\right)\right) \ge 1 - \alpha,
\end{align}
where $\alpha' = {\lfloor K\alpha\rfloor}/{K}$. Thus, for any $P \in \mP$, 
we have
$
P_{Z \sim P}((\theta_P, Z) \in J(o(Z))) \ge 1 - \alpha$,
which finishes the proof.

When ties can happen, we claim that $P_Z(R_i \ge k)$ does not decrease compared to the case without ties. Formally, we consider randomized test statistics $\tilde{m}_i$, $i\in[K]$, defined by $\tilde{m}_i(g_iI(\theta_P, Z)) = m(g_iI(\theta_P, Z)) + \xi_i$,
where $\xi_i$-s are independent random variables with $\xi_i \sim U[0,\ep]$; where $0 < \ep < \min\{|m(g_iI(\theta_P, Z))-m(g_jI(\theta_P, Z))|, (i,j) \in \mathcal{S}\}$ for $\mathcal{S} = \{(i,j) \in [K] \times [K]:m(g_iI(\theta_P, Z)) \neq m(g_jI(\theta_P, Z))\} \neq \emptyset$ and $\ep=1$ for $\mathcal{S} = \emptyset$. 
Then $\tilde{m}_i$, $i\in[K]$ can be viewed as test statistics for which ties happen with zero probability. 
Denoting the new ranks
 \begin{align*}\tilde{R_i}
    = \sum_{u=1}^K I
[ \tilde{m}_i(g_iI(\theta_P,Z)) \ge \tilde{m}_u(g_uI(\theta_P,Z)) ] + 1,
 \end{align*}
we then have $P(\tilde{R_i} = k) = 1/K$ for all $i \in [K]$, by the argument above. 

Now, if $\mathcal{S} = \emptyset$,
we clearly have
\begin{align}\label{rt}
  \sum_{v=1}^K I
[ m(g_jI(\theta_P,Z)) \ge m(g_vI(\theta_P,Z)) ] \ge  \sum_{v=1}^K I
[ \tilde{m}_j(g_jI(\theta_P,Z)) \ge \tilde{m}_v(g_vI(\theta_P,Z)) ]. 
\end{align}
When $\mathcal{S} \neq \emptyset$,
note that $0 < \ep < \min\{|m(g_iI(\theta_P, Z))-m(g_jI(\theta_P, Z))|, (i,j) \in \mathcal{S}\}$,
so that
if $\tilde{m}_u(g_uI(\theta_P,Z)) =\tilde{m}_v(g_vI(\theta_P,Z)))$, we must have $m(g_uI(\theta_P,Z)) = m(g_vI(\theta_P,Z))$ since $|\xi_u-\xi_v|<\ep < \min\{|m(g_iI(\theta_P, Z))-m(g_jI(\theta_P, Z))|,$ $(i,j) \in \mathcal{S}\}$.

Moreover, if $\tilde{m}_j(g_jI(\theta_P,Z)) > \tilde{m}_v(g_vI(\theta_P,Z))$, we have $m(g_jI(\theta_P,Z)) > m(g_vI(\theta_P,Z))$; and if  $\tilde{m}_j(g_jI(\theta_P,Z)) < \tilde{m}_v(g_vI(\theta_P,Z))$, we have $m(g_jI(\theta_P,Z)) < m(g_vI(\theta_P,Z))$. 
Hence, it is shown that
if $\tilde{m}_j(g_jI(\theta_P,Z)) \ge \tilde{m}_v(g_vI(\theta_P,Z))$, then $ m(g_jI(\theta_P,Z)) \ge m(g_vI(\theta_P,Z))$ for all $v \in [K]$; and \eqref{rt} follows.
Hence, we can derive
\begin{align*}
    &  P_Z(R_j \ge k)
    = P_Z\left(\sum_{v=1}^K I
[ m(g_jI(\theta_P,Z)) \ge m(g_vI(\theta_P,Z)) ] \ge k-1\right) \\
    &\ge P_Z\left(\sum_{v=1}^K I
[ \tilde{m}_j(g_jI(\theta_P,Z)) \ge \tilde{m}_v(g_vI(\theta_P,Z)) ] \ge k-1\right) 
= P_Z(\tilde{R}_j \ge k) = (K-k+1)/K.
\end{align*}
Hence, we obtain $P(R_i \ge k) \ge (K-k+1)/K$ for all $k \in [K]$. Considering $k=\lfloor K\alpha\rfloor$, we again conclude \eqref{fe}.

For an infinite group, we have $\alpha' = \alpha$. Thus, conditioning on each sub-sigma-algebra $B_{O(I)}$, we obtain
$
P\bigg(m(I(\theta, z)) \ge q_{\alpha'}
\big(  m(U_{O_I(\theta_P,z)}) \big)\bigg) \ge 1 - \alpha
$
directly from the definition of quantiles.
As this holds almost surely with respect to $I$, we have
\begin{align*}
        \bE\bE\left[ 1\left\{m(I) \ge q_{\alpha}\big(  m(U_{O_I(\theta_P,z)}) \big)\right\} | B_{O(I)}\right] \ge 1 - \alpha,
    \end{align*} 
which finishes the proof.

\subsubsection{Proof of Theorem \ref{sample-JCR}}\label{proof:JCR-infinite-group}

We write $I = I(\theta_P,Z)$ 
for simplicity. 
For the uniform measure $U$ on $\mG$, 
random variables $G_{1:K} \sim U^K$, 
and $I(\theta_O,Z)$ with $Z \sim P$, we have:
    \begin{lemma}
        The vector $A = (I,G_1I,\ldots,G_KI)$ has exchangeable entries.
    \end{lemma}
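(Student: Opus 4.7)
The plan is to reduce the exchangeability of $A=(I,G_1I,\ldots,G_KI)$ to the obvious exchangeability of a longer vector obtained by prepending an auxiliary independent Haar element $G_0\sim U$, and then show this longer vector has the same distribution as the original one.

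The first step is to observe that $(G_0I,G_1I,\ldots,G_KI)$ is exchangeable in its $K+1$ coordinates. Indeed, $(G_0,G_1,\ldots,G_K)$ is iid from $U$, hence exchangeable, and it is (taken to be) independent of $I$; so the pushforward under the deterministic map $(g_0,\ldots,g_K,i)\mapsto(g_0i,\ldots,g_Ki)$ is exchangeable as well.

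The second, main, step is to show
\[
(G_0I,G_1I,\ldots,G_KI)\;=_d\;(I,G_1I,\ldots,G_KI).
\]
For this, change variables via $F_0=G_0$ and $F_i=G_iG_0^{-1}$ for $i\in[K]$. Using bi-invariance of the Haar measure on the compact group $\mG$ (so that $U$ is invariant under right multiplication), the vector $(F_0,F_1,\ldots,F_K)$ is again iid $U$, and is independent of $I$. Writing $G_iI=F_iF_0I=F_i(F_0I)$ for $i\in[K]$ and setting $J=F_0I$, we obtain
\[
(G_0I,G_1I,\ldots,G_KI)=(J,F_1J,\ldots,F_KJ).
\]
Now $J=F_0I=_d I$ by the group-invariance hypothesis $gI=_d I$ for all $g\in\mG$, while $(F_1,\ldots,F_K)$ is independent of $(F_0,I)$ and therefore of $J$. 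Consequently $(J,F_1J,\ldots,F_KJ)=_d(I,F_1I,\ldots,F_KI)$, and the latter equals $(I,G_1I,\ldots,G_KI)$ in distribution since $(F_1,\ldots,F_K)$ and $(G_1,\ldots,G_K)$ are both iid $U^K$ independent of $I$. Combining this with the first step yields exchangeability of $A$.

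The main obstacle is the step $J=_d I$ with the correct joint structure: one must verify that $F_0I$ has the same \emph{unconditional} distribution as $I$ (which uses the $\mG$-invariance of the distribution of $I$ averaged against the Haar measure of $F_0$), and simultaneously that $(F_1,\ldots,F_K)$ is genuinely independent of $J$, not merely of $I$. Both points rest on the bi-invariance of Haar measure on the compact group $\mG$ and on the independence of the sampled $G_i$'s from the data $Z$ (and hence from $I(\theta_P,Z)$). Measurability of the action map $(g,i)\mapsto gi$ is needed to push forward distributions through these transformations, but is built into the standing assumptions.
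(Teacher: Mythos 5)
Your proof is correct and follows essentially the same route as the paper: both introduce an auxiliary independent Haar element, use the change of variables $G_i \mapsto G_i G_0^{-1}$ (justified by the invariance of Haar measure on the compact group) to identify $(G_0I,G_1I,\ldots,G_KI)$ in distribution with $(I,G_1I,\ldots,G_KI)$, and conclude from the obvious exchangeability of the augmented vector. Your explicit appeal to right-invariance (bi-invariance) is in fact the precise property needed for that change of variables, so the argument is complete.
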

    \begin{proof}
    Consider $B = (GI,G_1I,\ldots,G_KI)$, where $G\sim U$ is
    independent of $G_{1:K}$ and  $Z$. 
    Denoting $I' = GI =_d I$, $G_i' = G_iG^{-1}$ for $i \in [K]$, 
    for any subsets $\mG_1,\ldots,\mG_K$ of $\mG$, we have
        \begin{align*}
            &U(G_1' \in \mG_1,\ldots, G_K' \in \mG_K) =U(G_1G^{-1} \in \mG_1, \ldots, G_KG^{-1} \in \mG_K) \\
            &= U(G_1 \in G\mG_1,\ldots, G_K \in G\mG_K).
        \end{align*}
    Due to the independence of the entries of $G_{1:K}$, we have
    for any $B \in B_\mG$  such that $P(G\in B)>0$,
        \begin{align*}
            &U(G_1 \in G\mG_1,\ldots, G_K \in G\mG_K|G\in B) =
             \prod_{j=1}^K U(G_j \in G\mG_j|G\in B)\\
            &= \prod_{j=1}^K U(G_j \in \mG_j|G\in B)
            = \prod_{j=1}^K U(G_j \in \mG_j),
        \end{align*}
        where the second step follows from the left-invariance of the Haar measure $U$ on $\mG$.
        Thus,  $(G_{1:K}')$ and $(G_{1:K})$ have identical distributions, 
        and therefore so do $A$ and $B$. 
        Since $G \sim U$ is
    independent of $G_{1:K}$ and  $Z$,
    the entries of $B$ are exchangeable; and the same follows for $A$, finishing the proof.
    \end{proof}
    
    Thus, since the entries of $A$ are exchangeable, $m(I),m(G_1I),\ldots,m(G_KI)$ are exchangeable random variables.
    Then, 
    the result follows 
    from 
    standard results on order statistics,
    see e.g., Chapter 11 in \cite{vovk2022algorithmic}, finishing our proof.
    

\subsection{Supplemental Figures and Simulation Details}\label{sec:diab-supp}
\begin{figure}[ht]
    \centering
    \includegraphics[height = 5cm, width =6cm]{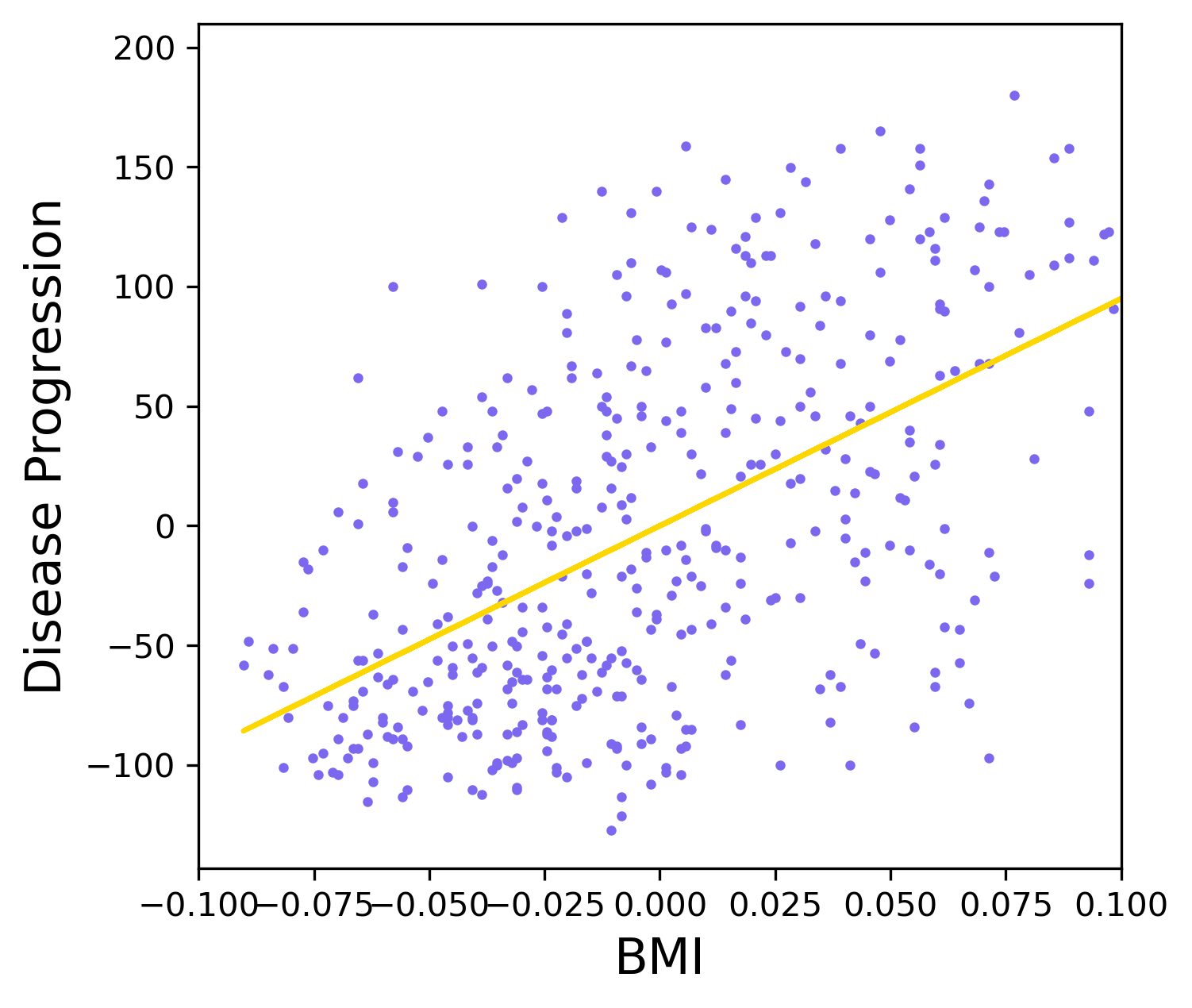}
  \caption{
  The scatterplot of the outcome and BMI for all 442 datapoints in the Diabetes dataset, with least squares line (Section \ref{diab}).}
  \label{fig:diabete-JCR-supp}
\end{figure}

We show the scatterplot of the outcome and BMI for all 442 datapoints (Section \ref{sec:comparison-lr}) in the Diabetes dataset in Figure \ref{fig:diabete-JCR-supp}.

\begin{figure}
    \centering
    \includegraphics[height = 7cm, width = 6.836cm]{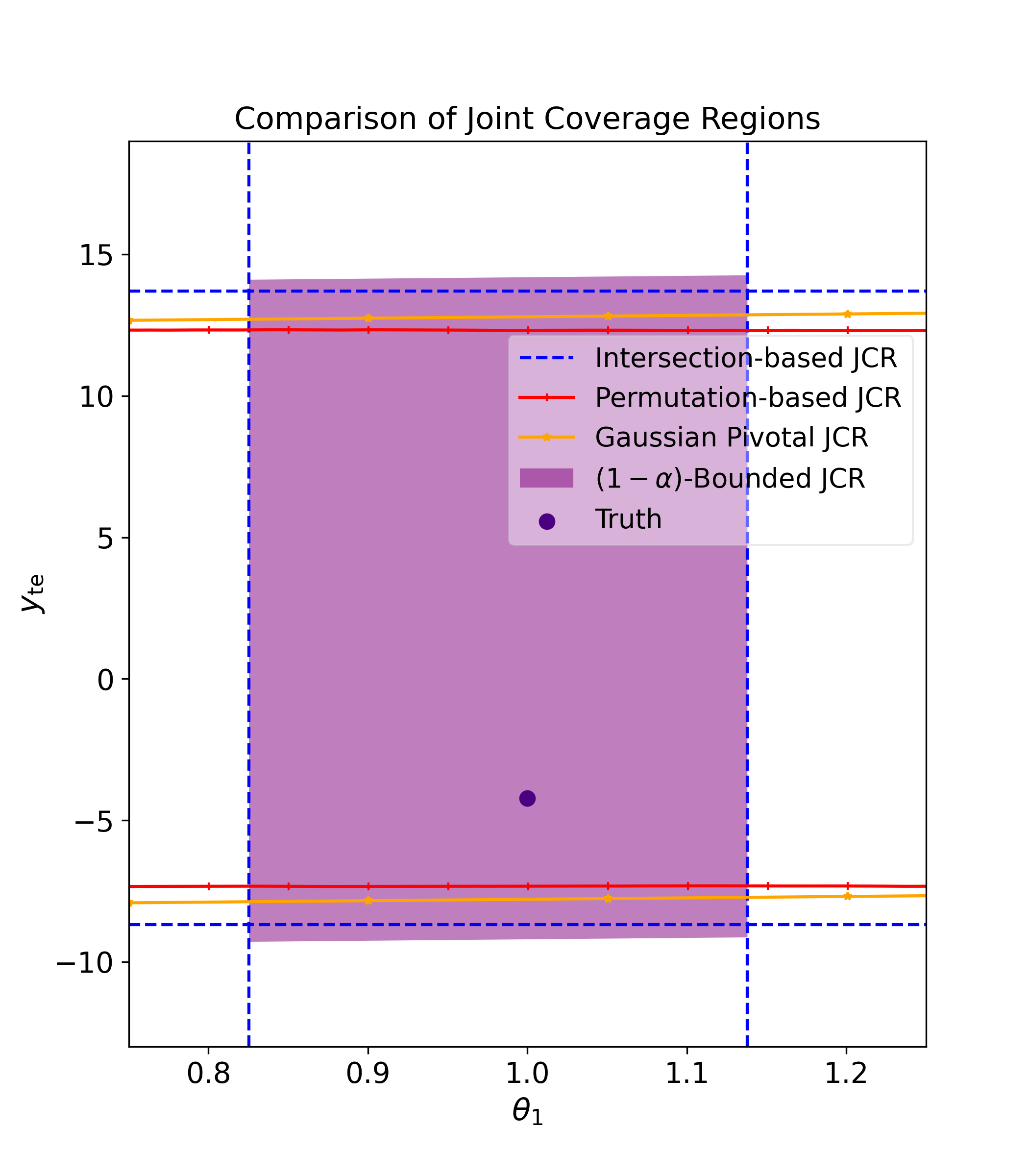}
\caption{A comparison of joint coverage regions with $1 - \alpha$ coverage, as presented in
Section \ref{sec:comparison-lr} (multi-dimensional case), while choosing  $\xte = (0.5, 0.5, 0.5, 0.5, 0.5)$ instead.}
\label{fig:comparison_highd_id}
\end{figure}

In addition,
following the multi-dimensional setting as shown in Section \ref{sec:comparison-lr}, we elaborate our construction of different joint coverage regions as below.

\begin{itemize}
    \item \textbf{Intersection-based joint coverage region.} We denote the OLS estimator $\hat{\theta} = (X^\top X)^{-1}X^\top y$, with covariance matrix
     $\sigma^2(X^\top X)^{-1}$. 
     For the component $\theta_1$ of $\theta$, we have the confidence interval
    $
     C_{\alpha} =    \hat{\theta}_1+
     \text{SE}(\hat{\theta}_1)\cdot [t_{n-p,\alpha/2},t_{n-p,1-\alpha/2} ]
    $
    where $\text{SE}(\hat{\theta}_1) = \sqrt{\hat{\sigma}^2[(X^\top X)^{-1}]_{11}}$ and $\hat{\sigma}^2 = \|y - X \hat{\theta}\|_2^2/(n - p)$.
    We also have the prediction interval
    $$
        T_{\alpha} = \xte^\top \hat{\theta} 
        + \sqrt{\hat{\sigma}^2\left(1 + \xte^\top (X^\top X)^{-1} \xte \right)} \cdot [t_{n-p,\alpha/2},t_{n-p,1-\alpha/2} ].
    $$
    We denote by $C_{\alpha/2} \times T_{\alpha/2}$ the intersection-based joint coverage region.

    \item \textbf{Gaussian pivotal joint coverage region.} Since $(\yte - \xte^\top \theta)/\hat{\sigma} \sim t_{n-p}$ is a pivot, 
    we have
    $$
     J_\alpha =    \left\{(\theta_1,\yte): \exists \ \theta_2,\ldots,\theta_n \text{ s.t. }(\yte - \xte^\top \theta)/\hat{\sigma} \in [t_{n-p,\alpha/2}, t_{n-p, 1-\alpha/2}] \right\}.
    $$

    \item \textbf{Permutation-based joint coverage region.} 
    Given that $y_i - x_i^\top \theta, i \in [n] \cup \{\text{te}\}$ are exchangeable, we may consider a permutation-based construction as in Section \ref{sec:comparison-lr}, while replacing the joint coverage region by its projection. Specifically, if we consider the 
    group of cyclic-shift permutations, 
    the joint coverage region based on the residuals is
     $$  J_\alpha =  \left\{(\theta_1,\yte): \exists \ \theta_2,\ldots,\theta_n \text{ s.t. }|\yte - \xte^\top \theta| \leq q_{1-\alpha}\left(|y_i - x_i^\top \theta|,\ i \in [n] \right) \right\}.
    $$

    \item \textbf{$(1-\alpha)$-Bounded joint coverage region}: We may use the intersection $J_{\alpha/2} \cap C_{\alpha/2}$ to obtain a bounded joint coverage region.
    
\end{itemize}

We also show the case with a different test point $\xte$. We take $\xte = (0.5, 0.5, 0.5, 0.5, 0.5)$ here, while keeping all the other conditions the same as in Section \ref{sec:comparison-lr}. It is shown that the group-invariance-based construction can be more conservative (after projection to a single coordinate) compared to the naive intersection; while its theoretical validity always holds.


{\small
\setlength{\bibsep}{0.2pt plus 0.3ex}
\bibliographystyle{plainnat-abbrev}

}



\end{document}